\begin{document}

\title{Polycubes via Dual Loops}

{ 
    \author{Maxim Snoep\\{\small TU Eindhoven, The Netherlands}
    \and Bettina Speckmann\\{\small TU Eindhoven, The Netherlands}
    \and Kevin Verbeek\\{\small TU Eindhoven, The Netherlands}}    
}

\date{}

\fancyfoot[C]{\small\thepage}

\maketitle




\fancyfoot[R]{\scriptsize{Copyright \textcopyright\ 2025\\
Copyright for this paper is retained by the authors}}



\begin{abstract} \small\baselineskip=9pt In this paper we study polycubes: orthogonal polyhedra with axis-aligned quadrilateral faces. We present a complete characterization of polycubes of any genus based on their dual structure: a collection of oriented loops which run in each of the axis directions and capture polycubes via their intersection patterns. A polycube loop structure uniquely corresponds to a polycube. We also describe all combinatorially different ways to add a loop to a loop structure while maintaining its validity. Similarly, we show how to identify loops that can be removed from a polycube loop structure without invalidating it. Our characterization gives rise to an iterative algorithm to construct provably valid polycube maps for a given input surface.
\end{abstract}

\begin{figure}[b]
    \hfill
    \includegraphics[width=0.45\linewidth]{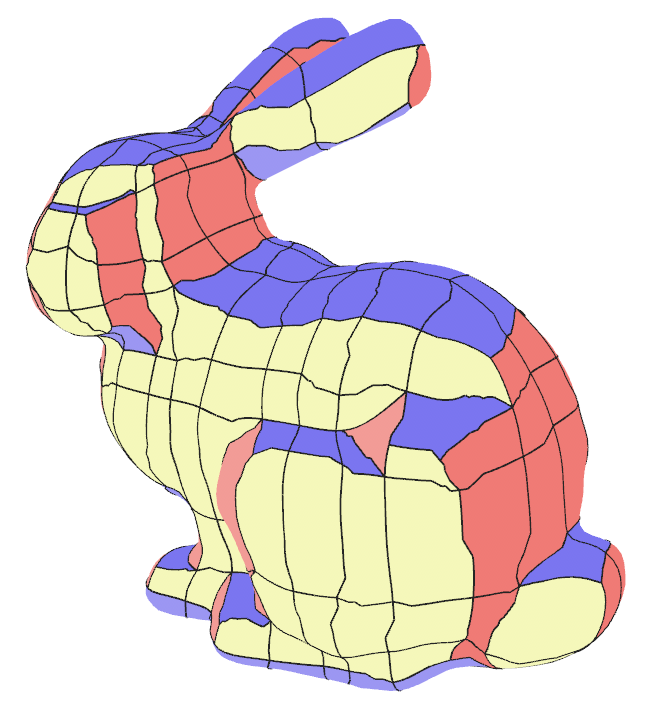}
    \hfill
    \includegraphics[width=0.45\linewidth]{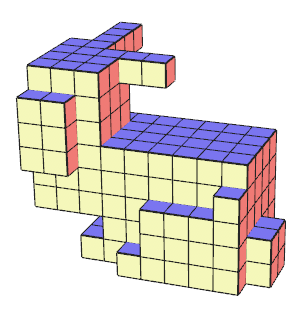}
    \hfill\quad
    \caption{Polycube segmentation of the Stanford bunny.}
    \label{fig:polycubesegmentation}
\end{figure}

\section{Introduction}
Polycubes are orthogonal polyhedra with axis-aligned quadrilateral faces. The simple structure of polycubes enables efficient solutions to various challenging geometric problems. 
Bijective mappings from general shapes to polycubes, known as \emph{polycube maps}, enable the transfer of solutions computed on polycubes to more general shapes. We distinguish between \emph{volumetric maps}, which map both the interior and the surface of the polycube, and \emph{surface maps} which map only the surface of the polycube to the target.
Polycube maps are used to solve problems such as texture mapping~\cite{tarini2004polycube}, spline fitting~\cite{wang2007polycube}, and hexahedral meshing~\cite{pietroni2022hex}. We focus on surface maps only; there are established methods~\cite{gregson2011all, protais2022robust,yu2014optimizing} that extend a surface map to a volumetric map.

Formally, a polycube map $f$ is a continuous map from a polycube $Q$ of genus $g$ to a closed $2$-dimensional surface $\mathcal{S}$ of (typically) the same genus. The edges of $Q$ map to a segmentation of $\mathcal{S}$ into \emph{patches} that correspond to the faces of $Q$, known as a \emph{polycube segmentation} (see Figure~\ref{fig:polycubesegmentation}). We distinguish two types of edges of~$Q$: \emph{creased edges} which separate two faces of different orientation and \emph{flat edges} which separate two coplanar faces.
In general, the quality of a polycube map is determined by two conflicting criteria: the complexity of~$Q$ and the distortion introduced by the mapping $f$.

Since polycubes were introduced in 2004, many methods have been proposed to construct polycube maps for a given input surface $\mathcal{S}$, see~\cite{dumery2022evocube,fang2016all,fu2016efficient,gregson2011all,guo2020cut,hu2016centroidal,hu2017surface,li2021interactive,livesu2013polycut,mandad2022intrinsic,yang2019computing,yu2014optimizing,zhao2017robust}.
Most of these methods~\cite{dumery2022evocube,fu2016efficient,gregson2011all,hu2016centroidal,hu2017surface,livesu2013polycut,zhao2017robust} work in the \emph{primal}: they attempt to directly create the polycube segmentation of $\mathcal{S}$ and derive the polycube and polycube map afterwards. The resulting polycubes are represented only by their creased edges and, correspondingly, their faces can be rectilinear polygons, instead of quadrilaterals only.
They use a variety of methods to construct a segmentation of $\mathcal{S}$ into surface patches and label each patch with an axis-direction 
$+X$, $-X$, $+Y$, $-Y$, $+Z$, or $-Z$. Such a \emph{labeled segmentation} is a valid polycube segmentation if and only if a corresponding polycube exists that preserves these labels. That is, there must exist a polycube $Q$ such that $(a)$ its faces correspond one-to-one to the surface patches of $\mathcal{S}$, and $(b)$ the normal vector of each face of $Q$ corresponds to the label of its matching surface patch.

Verifying whether a labeled segmentation is a valid polycube segmentation is a challenging problem. Most current primal methods rely on one of three characterizations for polycubes~\cite{eppstein2010steinitz, he2024expanding,zhao2019polycube}. Eppstein and Mumford characterize \emph{simple orthogonal polyhedra}~\cite{eppstein2010steinitz} which are genus-0 polyhedra with simply-connected faces and exactly three mutually perpendicular axis-aligned edges meeting at every vertex. These conditions are simple to check for a segmentation, but they do not cover polycubes of higher genus or vertex degrees higher than three (with respect to creased edges). 

\begin{figure}[b]
    \centering
    \hfill
    \subcaptionbox{}{\includegraphics[width=0.43\linewidth]{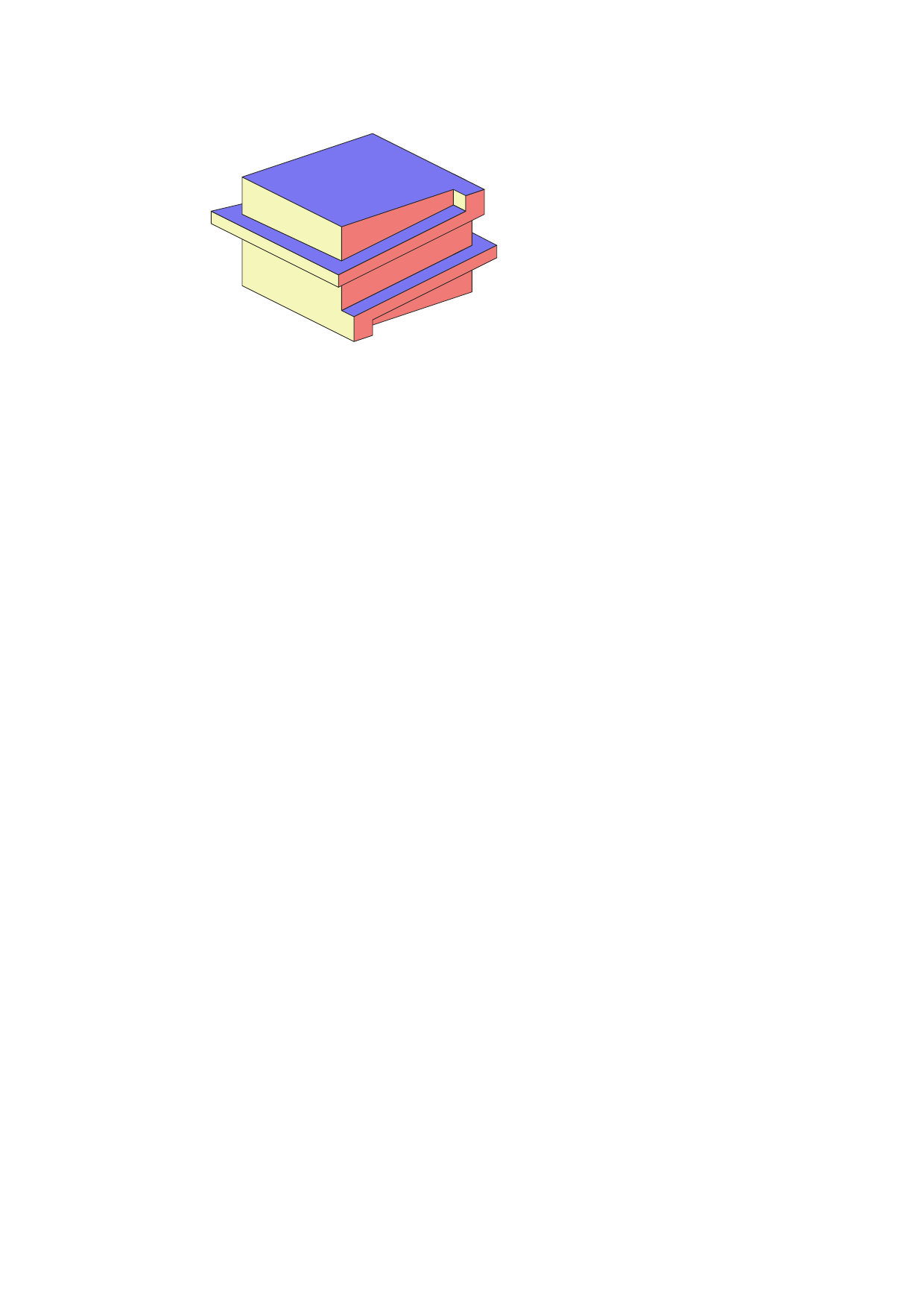}}
    \hfill
    \subcaptionbox{}{\includegraphics[width=0.43\linewidth]{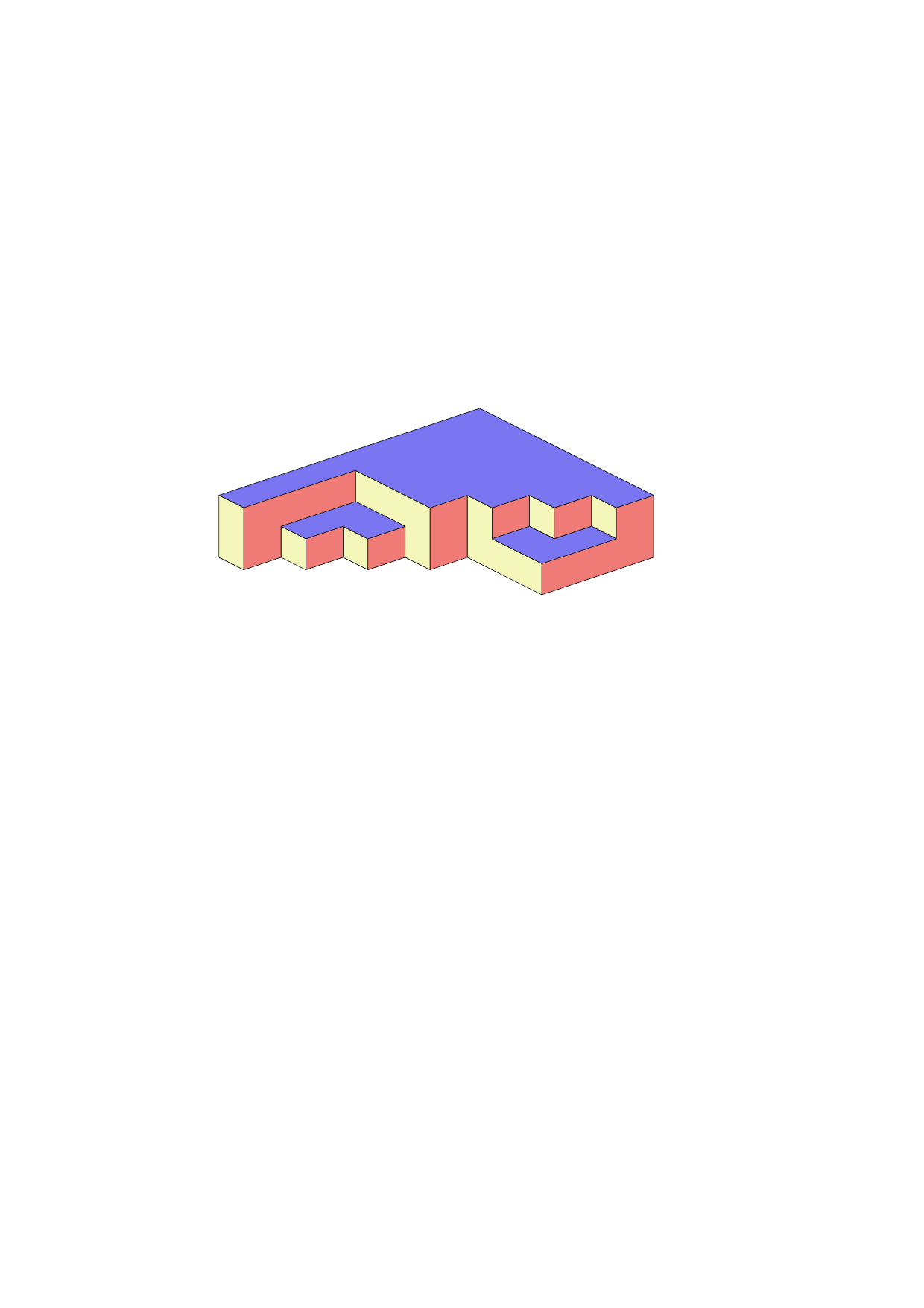}}
    \hfill\quad
    \caption{A segmentation and the polycube it (incorrectly) maps to following the characterization of \cite{eppstein2010steinitz}.}
    \label{fig:impossible_shape}
\end{figure}

The other two characterizations~\cite{he2024expanding, zhao2019polycube} attempt to lift these restrictions. However, none of the three characterizations takes the orientation of the surface patches into account. Consider the example in Figure~\ref{fig:impossible_shape}, first proposed by Sokolov~\cite{sokolov2016modelisation}. The segmentation on the left has four faces parallel to the $XZ$-plane (indicated in blue), two facing up and two facing down: the top and the bottom as well as the top and the bottom of the ``ramp''. According to all three characterizations, this segmentation corresponds to a polycube, namely the simple orthogonal polyhedron on the right. However, the polycube on the right clearly has three blue faces facing up and only one blue face facing down. This example shows that local conditions are not sufficient to characterize a polycube and that there are certain global constraints that must be satisfied. Methods that rely on a local characterization hence sometimes incorrectly classify labeled segmentations as valid. There is currently no method that is guaranteed to turn an invalid labeled segmentation into a valid one, let alone do so in an efficient manner.

An alternative set of primal methods~\cite{li2021interactive,yang2019computing,yu2014optimizing} project a voxelization (the polycube) of the input surface $\mathcal{S}$ back onto $\mathcal{S}$. By construction, the polycube segmentation is directly linked to a polycube. However, the mapping might create inverted faces within the segmentation. Furthermore, the voxelization may not preserve the topology of $\mathcal{S}$. In general, voxelization methods cannot guarantee the correctness of the polycube map.

For the specific case of hexahedral meshing, there are \emph{intrinsic} methods~\cite{fang2016all,guo2020cut,mandad2022intrinsic} which create volumetric polycube maps that are advantageously for downstream processing. The resulting hex meshes are generally of high quality, however, the distortion introduced by the polycube segmentation is often high and the topology of the input surface $\mathcal{S}$ may not be preserved.

Baumeister and Kobbelt~\cite{baumeister2023how} recently used a different approach for characterizing polycubes: they represent polycubes as quad meshes, including both flat and creased edges, and analyze their \emph{dual structure}. 
Since quad meshes have quadrilateral faces, their dual structure consists of a collection of loops which run in each of the axis directions and capture global properties by virtue of their intersection patterns. A version of this loop structure was described already 20 years ago by Biedl and Genc~\cite{biedl2004when}.
Both versions characterize only polycubes of genus 0.
Baumeister and Kobbelt check if the collection of dual loops implied by an input quad mesh corresponds to a valid polycube according to their characterization.

Verifying that a quad mesh, including flat and creased edges, corresponds to a polycube is computationally much simpler than verifying that a labeled surface segmentation corresponds to a polycube with arbitrarily complex rectilinear faces bounded by creased edges only. Note that rectilinear faces can first be partitioned into quadrilateral faces and then the verification based on dual loops can be employed. However, partitioning a rectilinear face in such a manner that the potential validity of the corresponding polycube is maintained, is an open problem.

Given an embedding of a valid dual loop structure on the input surface $\mathcal{S}$ one can derive the polycube segmentation through a simple primalization step~\cite{born2021layout, schreiner2004inter}.
In the remainder of the paper we hence focus on the characterization and construction of (quad mesh) polycubes of arbitrary genus via dual loops.
Just as previous work, we use sets of loops labeled with the three principal axes $X$, $Y$, and $Z$. However, our loops are oriented. This seemingly small change simplifies the characterization of polycubes of genus 0 and naturally extends to polycubes of higher genus.

We also describe all combinatorially different ways to add a loop to an existing loop structure while maintaining its validity. These valid loops correspond to cycles in a specific graph and can hence be detected and enumerated efficiently. Similarly we can detect all loops that can be removed from our loop structure without invalidating it.

Our characterization gives rise to an algorithm that constructs polycube segmentations: iteratively construct a valid dual structure on the input surface $\mathcal{S}$, starting from a simple polycube of the correct genus. 
At any point during the construction, there is a corresponding polycube segmentation. We add and subtract loops until the quality of the polycube segmentation is satisfactory.

Our paper is organized as follows. We define polycubes and describe their properties in Section~\ref{sec:definition}. We present our polycube characterization via labeled and oriented loops in  Section~\ref{sec:characterization} and show how to modify a loop structure while maintaining validity in Section~\ref{sec:modifications}. We implemented a proof-of-concept version of the iterative algorithm sketched above and showcase results in Section~\ref{sec:application}. Finally, we conclude in Section~\ref{sec:conclusion} by discussing several directions for future work.

\begin{figure*}[b]
    \centering
    \subcaptionbox{}{\includegraphics[]{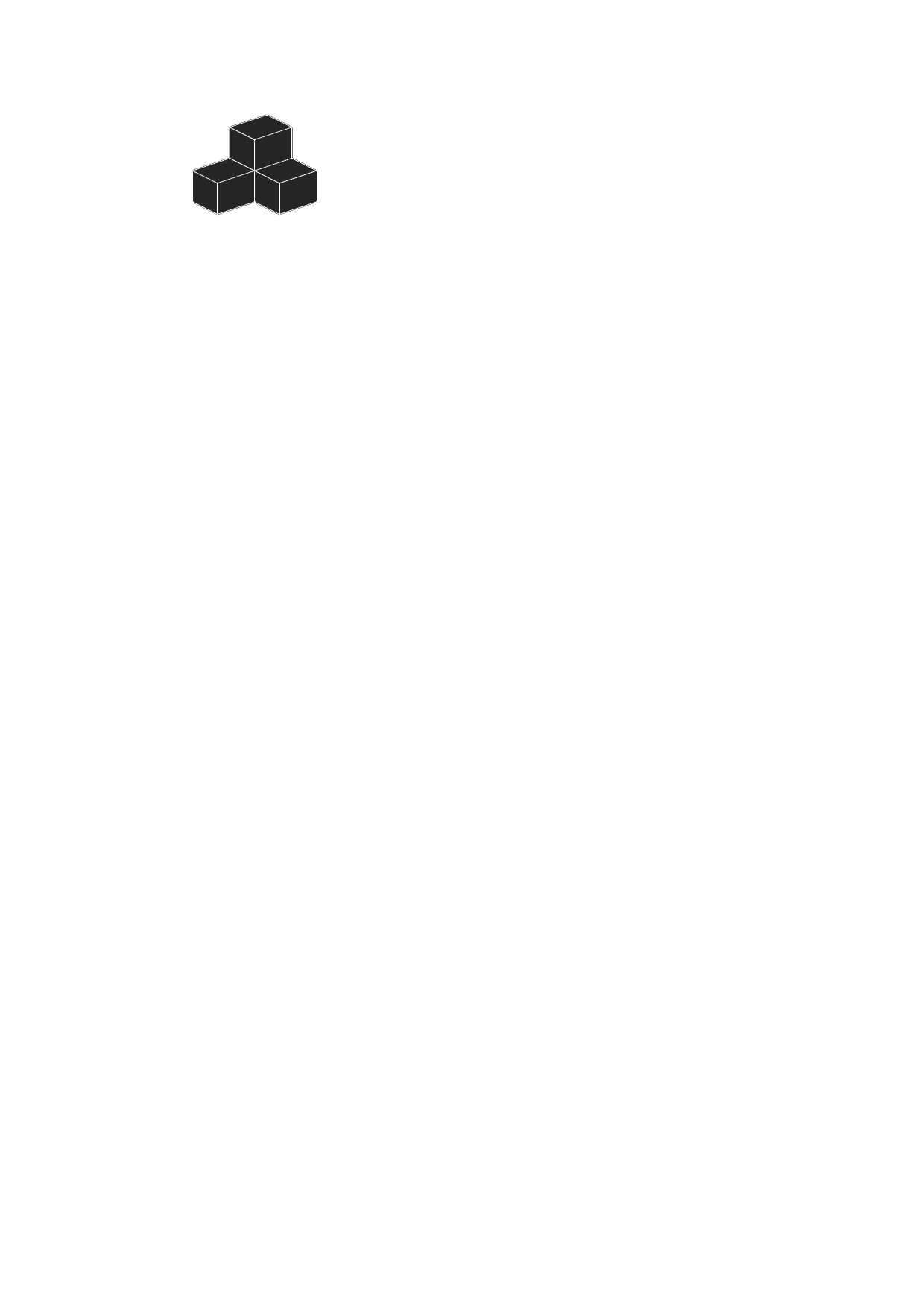}}
    \hfill
    \subcaptionbox{}{\includegraphics[]{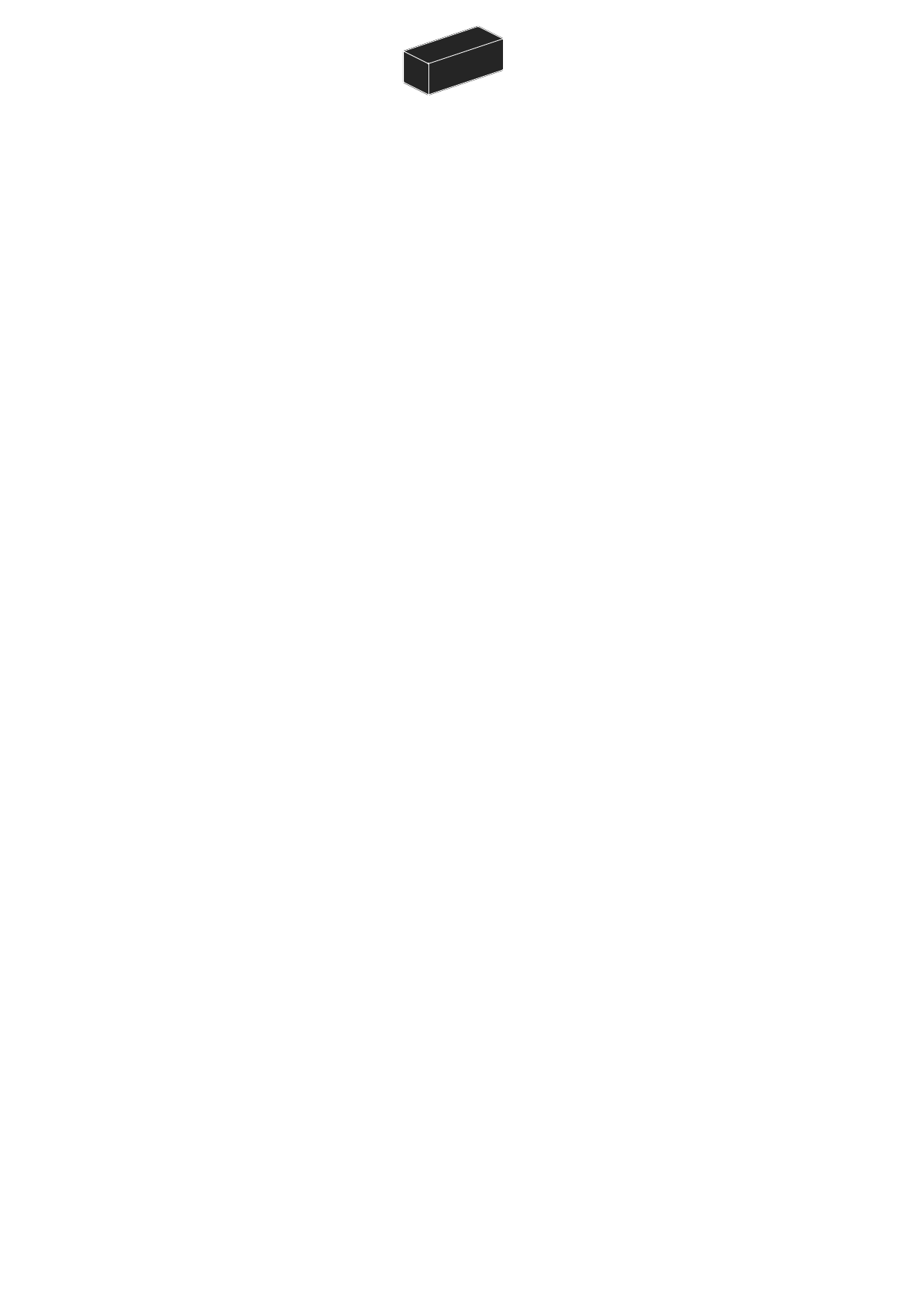}}
    \hfill
    \subcaptionbox{}{\includegraphics[]{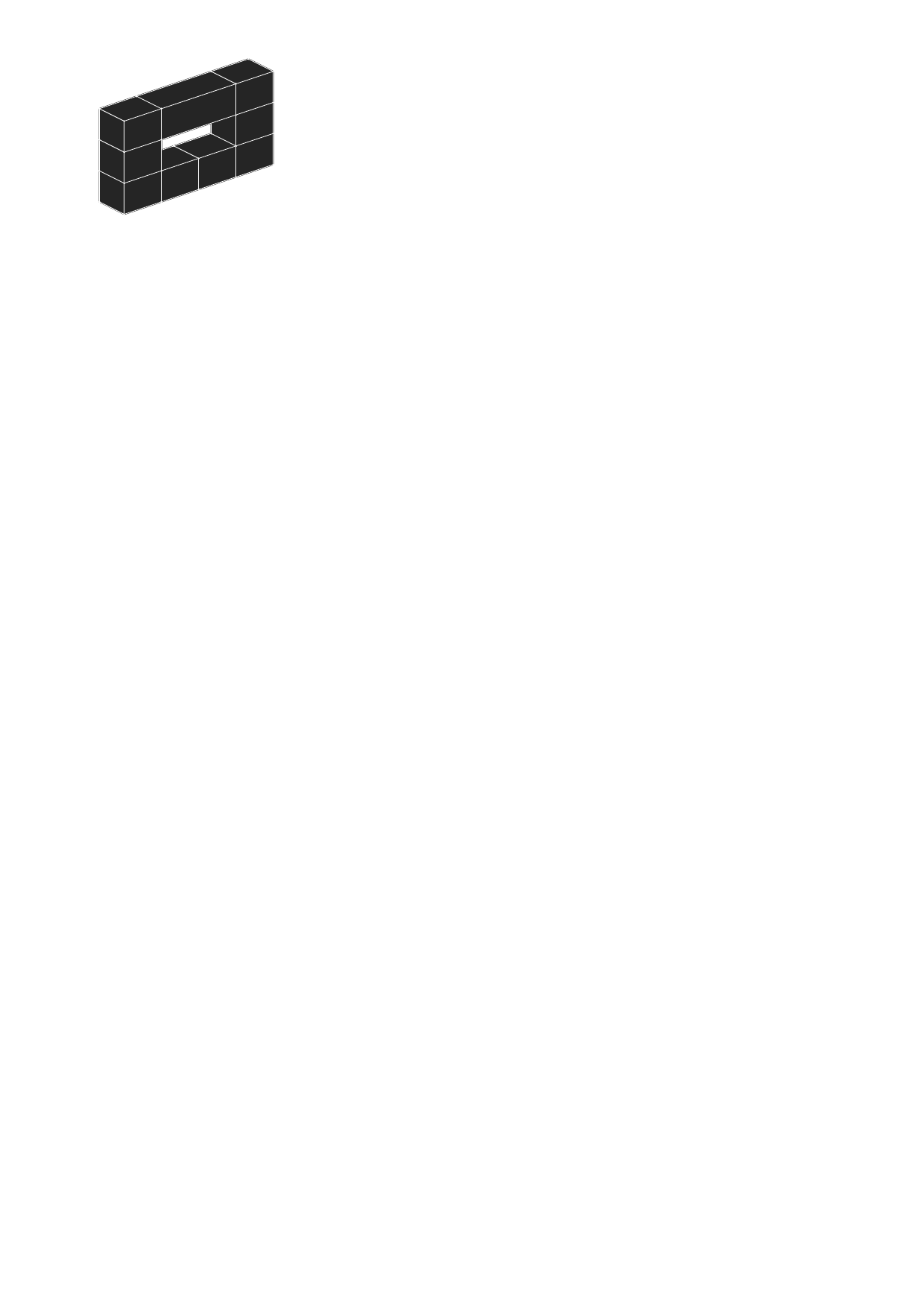}}
    \hfill
    \subcaptionbox{}{\includegraphics[]{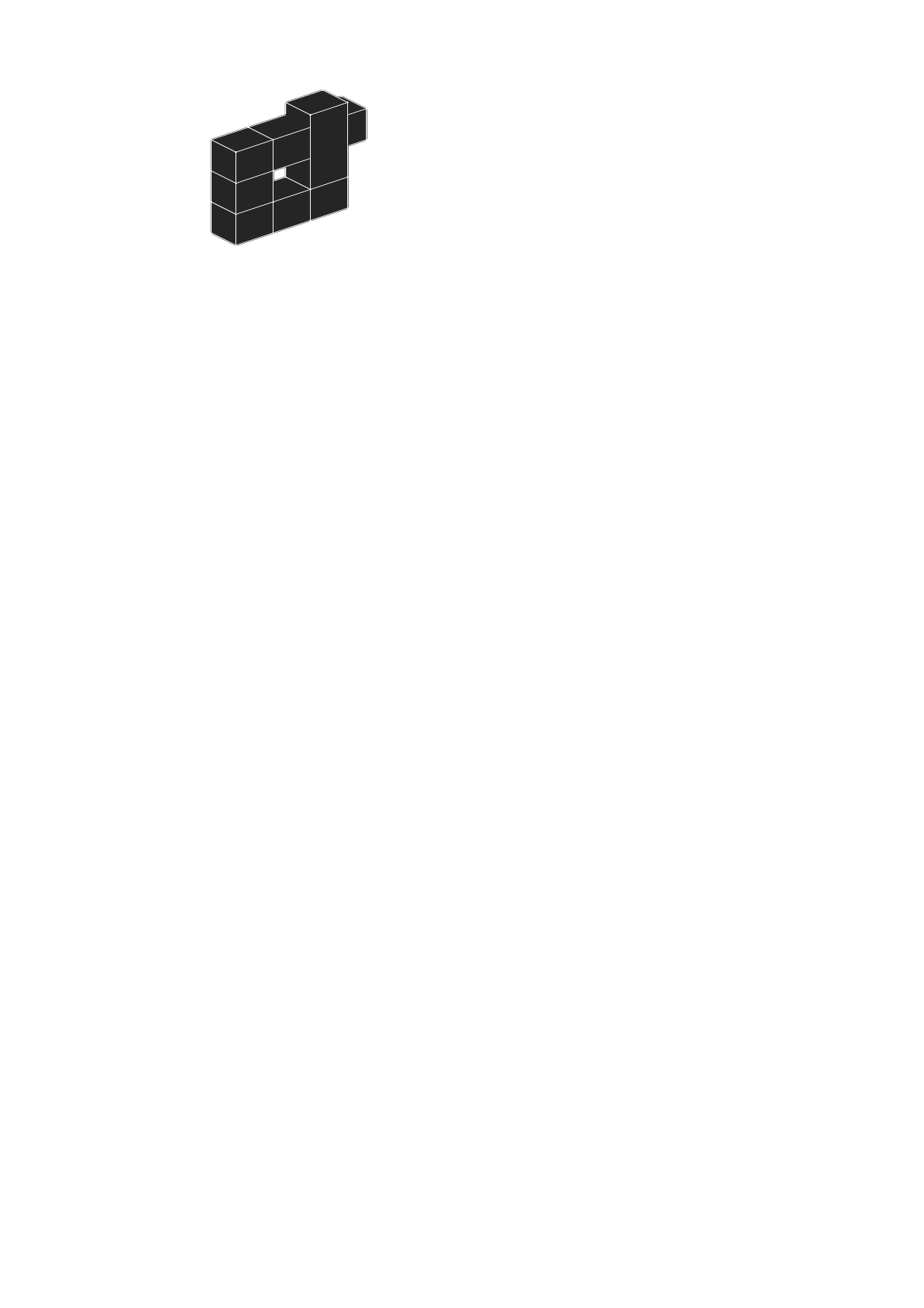}}
    \caption{The variety of polycubes that satisfy our definition based on quadrilateral meshes.}
    \label{fig:definition}
\end{figure*}

\section{Polycube definition and properties}\label{sec:definition}

The literature on polycube maps contains various definitions of polycubes with subtle differences. For our results, we use the fact that polycubes can be defined as a special type of quad mesh with axis-aligned faces~\cite{baumeister2023how}. In this case, we define the polycube as its $2$-dimensional boundary surface. We assume that the polycube has no voids, meaning the surface is connected and encloses a single bounded volume. As mentioned in the introduction, we focus on surface maps between the polycube and the input surface only.

A \emph{quadrilateral mesh (quad mesh)} consists of vertices, edges, and quadrilateral faces. Each vertex is adjacent to at least one edge. Each edge is adjacent to one or two faces. Each face consists of four vertices and four edges. A quad mesh is \emph{closed} if each edge is adjacent to exactly two faces. A quad mesh is \emph{orientable} if a consistent circular ordering of vertices can be assigned to each face, such that edge-adjacent faces have opposite vertex orders along their common edge. A quad mesh is \emph{connected} if every vertex can be reached from any other vertex by traversing edges. 

\begin{Definition}\label{def:polycube}
    A \emph{polycube} $Q$ is a closed, connected, orientable quad mesh with vertices $V(Q)$ such that:
    \begin{enumerate}[nolistsep]
        \item Each vertex $v \in V(Q)$ has a position $p(v)$ in $\mathbb{Z}^3$,
        \item Each vertex has degree at least $3$,
        \item Positions of adjacent vertices differ in exactly one coordinate,
        \item Edges incident to the same vertex cannot overlap.
    \end{enumerate}
\end{Definition}

As a consequence, our polycubes have vertices with degrees up to six (see Figure~\ref{fig:definition}a). 
The polycube faces are not required to be unit squares (see Figure~\ref{fig:definition}b), as this may not be general enough for higher genus polycubes (see Figure~\ref{fig:definition}c).
According to this definition, polycubes are also allowed to self-intersect, since this does not pose a problem for surface maps~\cite{sokolov2015fixing} (see Figure~\ref{fig:definition}d). Note that self-intersections might cause problems for volumetric methods, such as hex meshing.

Definition~\ref{def:polycube} extends the most common definition of polycubes (e.g.~\cite{livesu2013polycut}) in the sense that it allows a higher vertex degree. However, it is restricted to quadrilateral faces. Furthermore, most intrinsic methods (e.g.,\cite{mandad2022intrinsic}) allow for non-manifold vertices with overlapping edges, while our definition enforces ma\-ni\-fold vertices.

In the following, we establish some basic properties of polycubes based on Definition~\ref{def:polycube}.

\begin{lemma}\label{lem:rectface}
Every edge of a polycube is aligned with one of the coordinate axes ($X$-, $Y$-, or $Z$-axis), and every face of a polycube is an axis-aligned rectangle.
\end{lemma}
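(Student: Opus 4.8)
The plan is to prove both claims directly from the four conditions of Definition~\ref{def:polycube}, treating the edge-alignment claim first and then bootstrapping it to the face claim.

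\paragraph{Edges are axis-aligned.}
First I would take an arbitrary edge $e = \{u,v\}$ of the polycube. By condition~(1), both endpoints have integer positions $p(u), p(v) \in \mathbb{Z}^3$. By condition~(3), $p(u)$ and $p(v)$ differ in exactly one coordinate. Hence the vector $p(v) - p(u)$ is a nonzero integer multiple of exactly one of the standard basis vectors $e_X, e_Y, e_Z$, which is precisely the statement that $e$ is aligned with one of the coordinate axes. This part is essentially immediate from the definition; the only thing to note is that the difference is nonzero (so the edge is a genuine segment), which follows because the two endpoints are distinct vertices and condition~(4) forbids degenerate overlapping configurations.

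\paragraph{Faces are axis-aligned rectangles.}
Next I would fix a face $F$ with its four vertices $v_1, v_2, v_3, v_4$ in cyclic order and its four edges $e_1 = \{v_1,v_2\}$, $e_2 = \{v_2,v_3\}$, $e_3 = \{v_3,v_4\}$, $e_4 = \{v_4,v_1\}$. By the previous part each $e_i$ is axis-aligned. The key step is to argue that consecutive edges of a face must lie along \emph{different} axes. I would argue this using condition~(4): two consecutive edges share a vertex, and if they were aligned with the same axis they would either overlap or form a straight (degenerate) angle at the shared vertex, contradicting the requirement that edges incident to the same vertex cannot overlap, and contradicting that a quadrilateral face has four genuine corners. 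Consequently the four edges alternate between two of the three axes as we traverse the cycle, say axis $a$ for $e_1, e_3$ and axis $b \neq a$ for $e_2, e_4$. Since the displacement around the closed cycle $e_1 + e_2 + e_3 + e_4$ must sum to zero and the $a$- and $b$-components are independent, the two $a$-aligned edges must have equal length and opposite direction, and likewise for the two $b$-aligned edges. This forces $F$ to be a planar rectangle lying in the coordinate plane spanned by $a$ and $b$, hence axis-aligned.

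\paragraph{Main obstacle.}
I expect the delicate part to be the claim that consecutive edges of a face cannot share an axis, since Definition~\ref{def:polycube} only rules out overlapping edges at a vertex rather than explicitly forbidding straight angles; I would need to combine condition~(4) with the fact that a face is a genuine quadrilateral (four distinct vertices and four distinct edges) to exclude a ``folded'' or collinear configuration. Once the alternation of axes is established, the closure argument forcing opposite sides to be parallel and equal is routine linear algebra over $\mathbb{Z}^3$, so no further difficulty arises there.
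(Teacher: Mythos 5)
Your proof of the edge claim is identical to the paper's (both are immediate from Condition~3 of Definition~\ref{def:polycube}). For the face claim you take a genuinely different route: the paper argues via interior angles --- the four interior angles of $F$ sum to $360^\circ$, axis-alignment forces each angle into $\{0^\circ, 90^\circ, 180^\circ, 270^\circ\}$, a $0^\circ$ angle violates Condition~4, and since the remaining options are all at least $90^\circ$, the sum forces all four angles to equal exactly $90^\circ$ --- whereas you argue via vector closure of the edge cycle. Your route is viable, and in one respect cleaner: it never presupposes that the face is planar (the $360^\circ$ angle-sum identity is really a fact about planar quadrilaterals), planarity instead falling out as a conclusion. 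However, it hinges precisely on the step you flag as delicate, and there your stated justification does not hold up.

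The gap: you claim consecutive edges along the same axis would contradict ``that a quadrilateral face has four genuine corners.'' Definition~\ref{def:polycube} contains no such requirement --- a face is just four distinct vertices and four distinct edges, and a $180^\circ$ (straight) angle at a shared vertex violates neither this nor Condition~4, since the two edges then point in \emph{opposite} directions from that vertex and so do not overlap. The straight-angle case must be excluded globally, not locally. Concretely: if $e_1=\{v_1,v_2\}$ and $e_2=\{v_2,v_3\}$ both lie along axis $a$ and form a straight angle at $v_2$, their displacements $d_1,d_2$ have the same sign, so $d_1+d_2\neq 0$; closure forces $d_3+d_4=-(d_1+d_2)$, a nonzero vector along $a$, and since each of $d_3,d_4$ lies along a single axis, both must lie along $a$ as well. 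Hence all four vertices are collinear, and now a sign-chasing argument finishes: an overlap at a vertex $v_i$ occurs exactly when $d_{i-1}$ and $d_i$ have opposite signs, so if Condition~4 held at every vertex, all four displacements would share one sign and could not sum to zero. So Condition~4 is violated after all, the alternation claim holds, and the rest of your linear algebra is routine and correct. Note how the paper's angle-sum budget kills the $180^\circ$ case with no case analysis at all ($180^\circ + 3\cdot 90^\circ > 360^\circ$); if you want to keep the closure route, the patch above is what the ``main obstacle'' paragraph still owes.
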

\begin{proof}
Any edge of the polycube connects two vertices, and by Condition 3 of Definition~\ref{def:polycube}, the positions of these vertices may differ in only one coordinate. Hence, every edge must be aligned with one coordinate axis. 

Now consider a quadrilateral face $F$ of a polycube. The four interior angles of $F$ must sum up to $360^\circ$. As all edges are axis-aligned, each vertex of $F$ must have an interior angle of $0^\circ$, $90^\circ$, $180^\circ$, or $270^\circ$. A vertex with a $0^\circ$ angle would violate Condition 4 of Definition~\ref{def:polycube}. Therefore, all vertices must have an interior angle of $90^\circ$, making $F$ a rectangle.\hfill
\end{proof}

Lemma~\ref{lem:rectface} together with Condition 4 of Definition~\ref{def:polycube} also directly imply the following property.

\begin{corollary} 
Each vertex in a polycube has at most six adjacent vertices. 
\end{corollary}

Each polycube defines three partial orders on its vertices, corresponding to the three principal axes ($X$, $Y$, and $Z$). The partial order for the $X$-axis is defined as follows: for two vertices $v$ and $w$, we say that $v \leq_X w$ if the $x$-coordinate of $v$ is less than or equal to the $x$-coordinate of $w$, and there is an edge between $v$ and $w$. The partial orders for the $Y$-axis and $Z$-axis are defined similarly. These partial orders play an important role in our dual characterization.

\bigskip

Definition~\ref{def:polycube} allows for different polycubes to have the exact same combinatorial structure, simply by changing the lengths of edges, see Figure~\ref{fig:order_equivalent}. We therefore use the partial orders to define a form of combinatorial equivalence between polycubes.

\begin{figure}[t]
    \centering

    \includegraphics[]{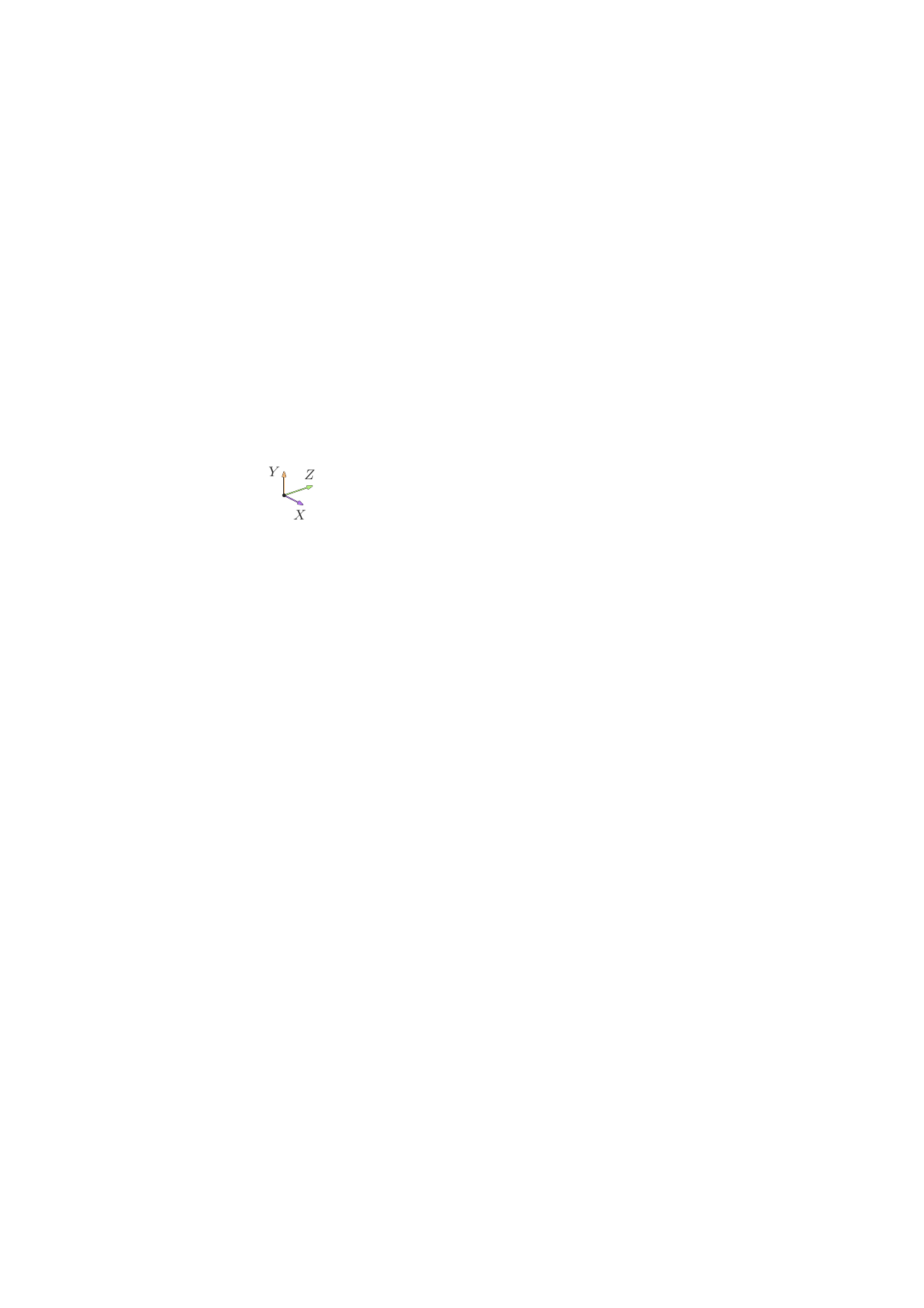}
    \hspace{0.1in}
    \includegraphics[scale=0.5]{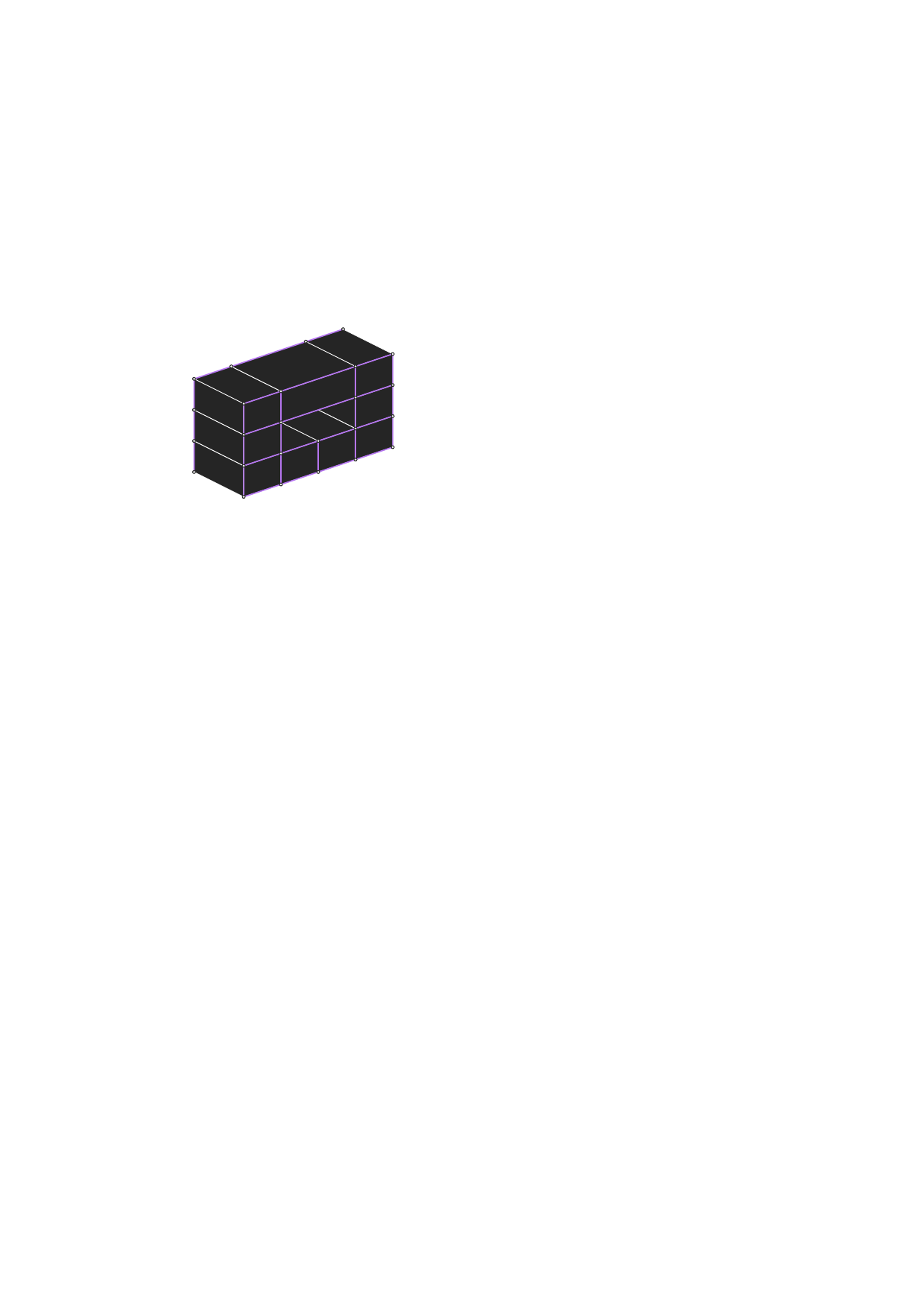}
    \hfill
    \includegraphics[scale=0.5]{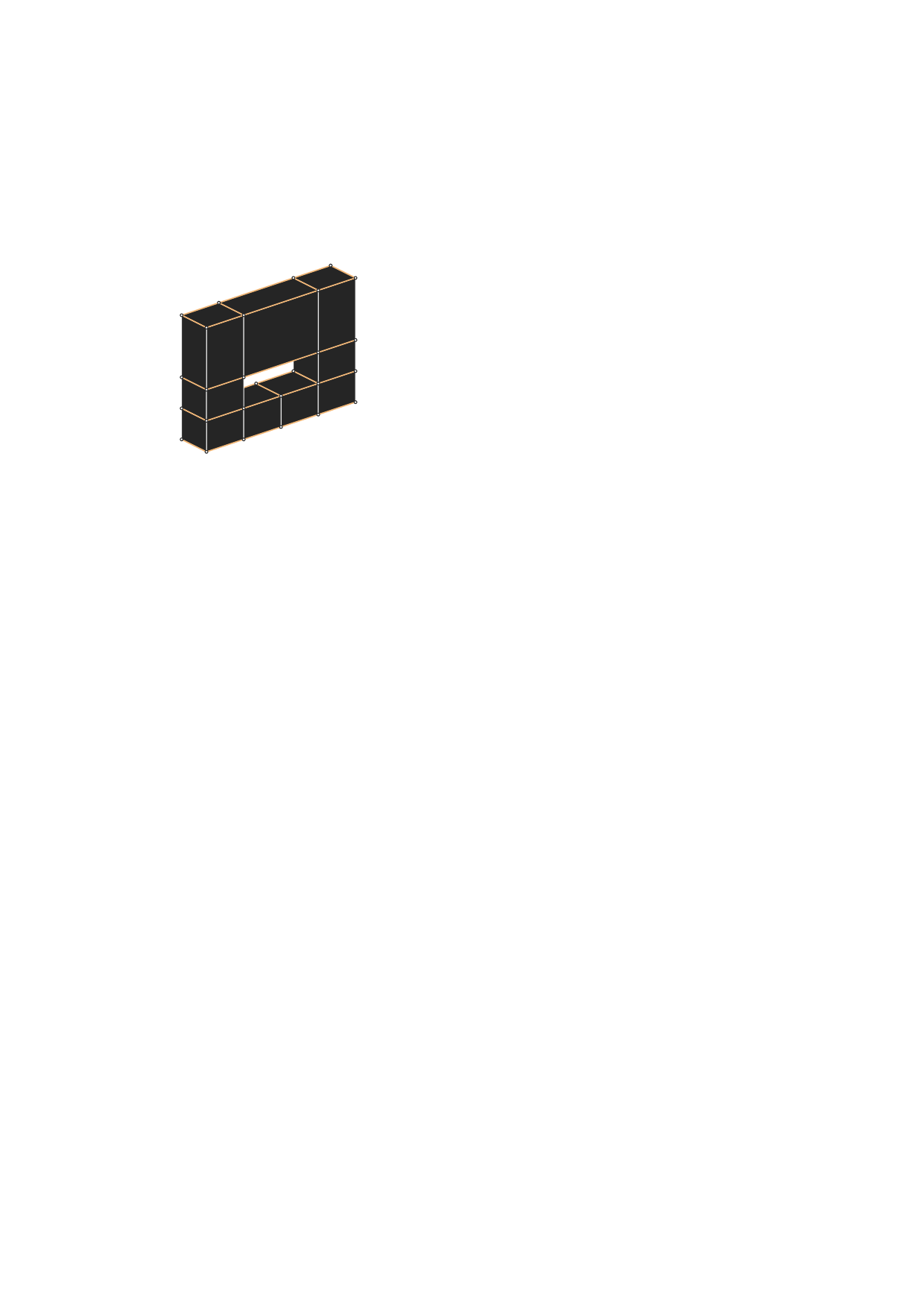}    
    \hfill
    \includegraphics[scale=0.5]{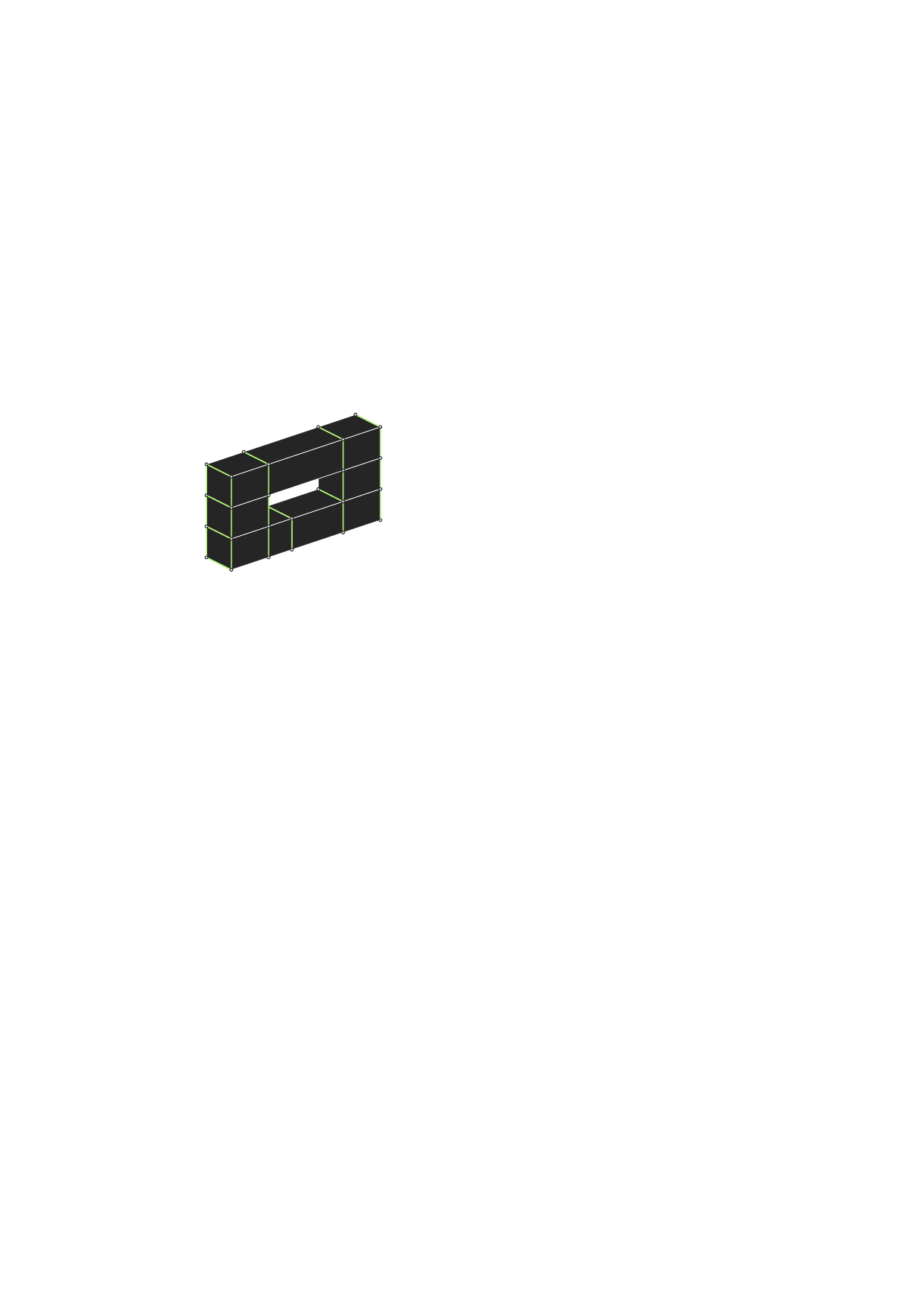}
    
    \caption{Order-equivalence of polycubes.}
    \label{fig:order_equivalent}
\end{figure}

\begin{Definition}
\label{def:order-equivalent}
    Two polycubes $Q_1$ and $Q_2$ are \emph{order-equivalent} if there exists an isomorphism $f\colon V(Q_1) \rightarrow V(Q_2)$ between the quad meshes of $Q_1$ and $Q_2$ such that, for all $v, w \in V(Q_1)$ and $\Delta \in \{X,Y,Z\}$, we have that $v \leq_\Delta w$ if and only if $f(v) \leq_\Delta f(w)$.
\end{Definition}

This definition implies that a polycube $Q$ is also order-equivalent to its inverse, which can be obtained by flipping the orientation of all faces of $Q$. However, the inverse does not enclose a bounded volume; it represents a single void.  
We represent the partial orders $\leq_X$, $\leq_Y$, and $\leq_Z$ as directed \emph{level graphs}: the $X$-graph, $Y$-graph, and $Z$-graph.
From now on we refer to the vertices of polycubes as \emph{corners} and reserve the term vertex for vertices in the level graphs and loop structures.

\section{Polycube dual characterization}
\label{sec:characterization}

Every embedded graph $G$ can be represented by its \emph{dual graph}, which has a vertex for each face of $G$ and an edge for each pair of adjacent faces of $G$. The dual of an (embedded) quad mesh is a $4$-regular graph (that is, every vertex has degree $4$),  since each face in the mesh is a quadrilateral. Any $r$-regular graph, with even $r$, can be decomposed into disjoint simple cycles where consecutive edges in each cycle do not share a face~\cite{petersen1891die}. 

We refer to an arrangement of simple closed curves $\mathcal{L}$ on a $2$-manifold as a \emph{loop structure}. 
Such loop structures can be represented as a graph, where \emph{loop intersections} define vertices, \emph{loop segments} (parts of a loop between two intersections) define edges, and \emph{loop regions} (regions bounded by loops) define faces.
Campen \emph{et al.}~\cite{campen2012dual} give the set of properties that characterize \emph{quad loop structures}: the loop structures which are a dual of quad meshes (see Figure~\ref{fig:quaddual}).

\begin{figure}[t]
    \centering
    \includegraphics[width=0.9\linewidth]{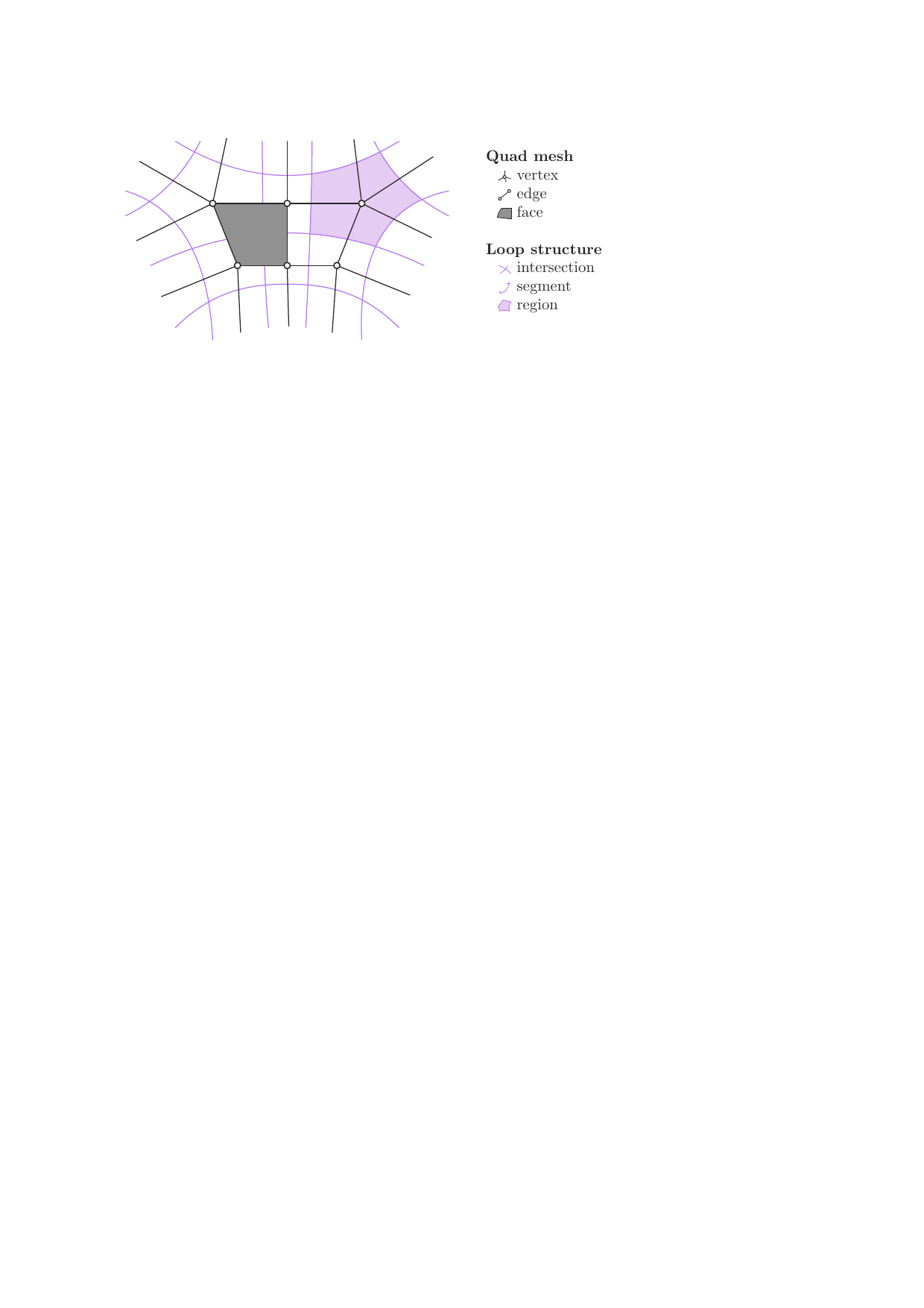}
    \caption{The duality of quad meshes and loops.}
    \label{fig:quaddual}
\end{figure}

\begin{figure}[b]
    {\centering
    \hfill
    \includegraphics[]{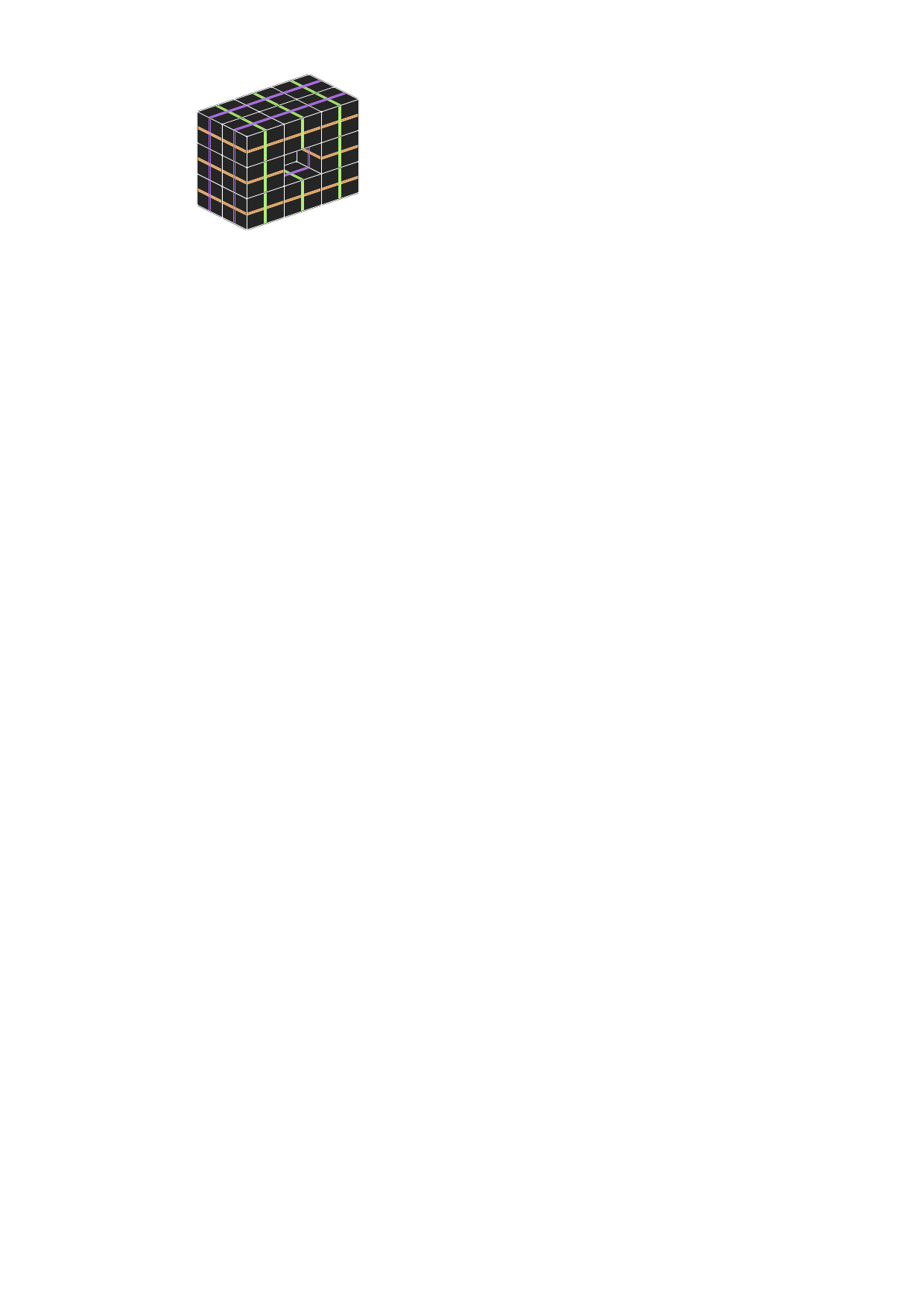}
    \hfill
    \includegraphics[]{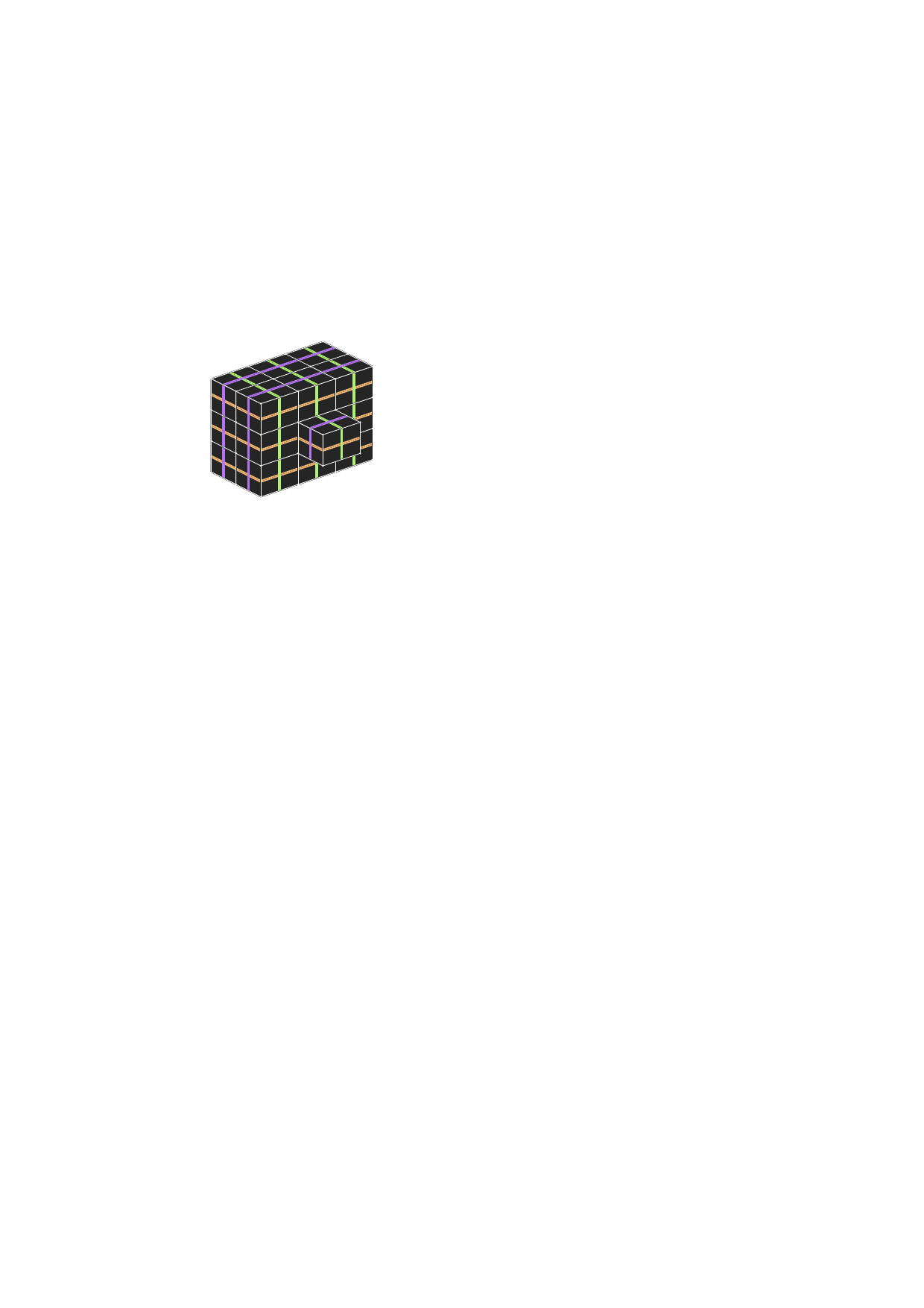}
    \hfill\quad
    }
    \caption{Two polycubes and their dual loop structure. Both polycubes have equivalent labeled loop structures.}
    \label{fig:equivalent}
\end{figure}

\begin{figure*}[b]
    \centering
    \hfill
    \includegraphics[scale=0.84]{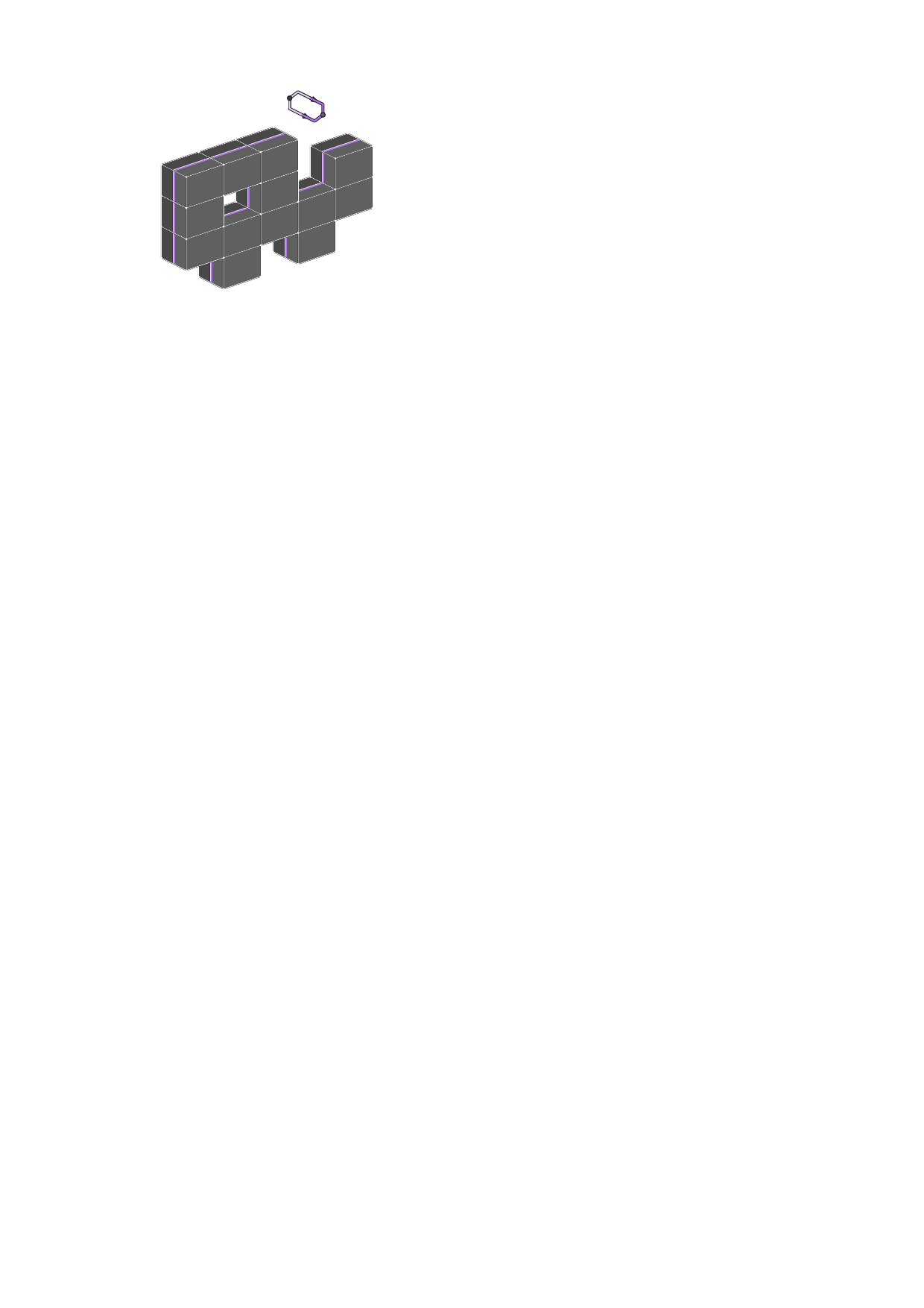}
    \hfill
    \includegraphics[scale=0.84]{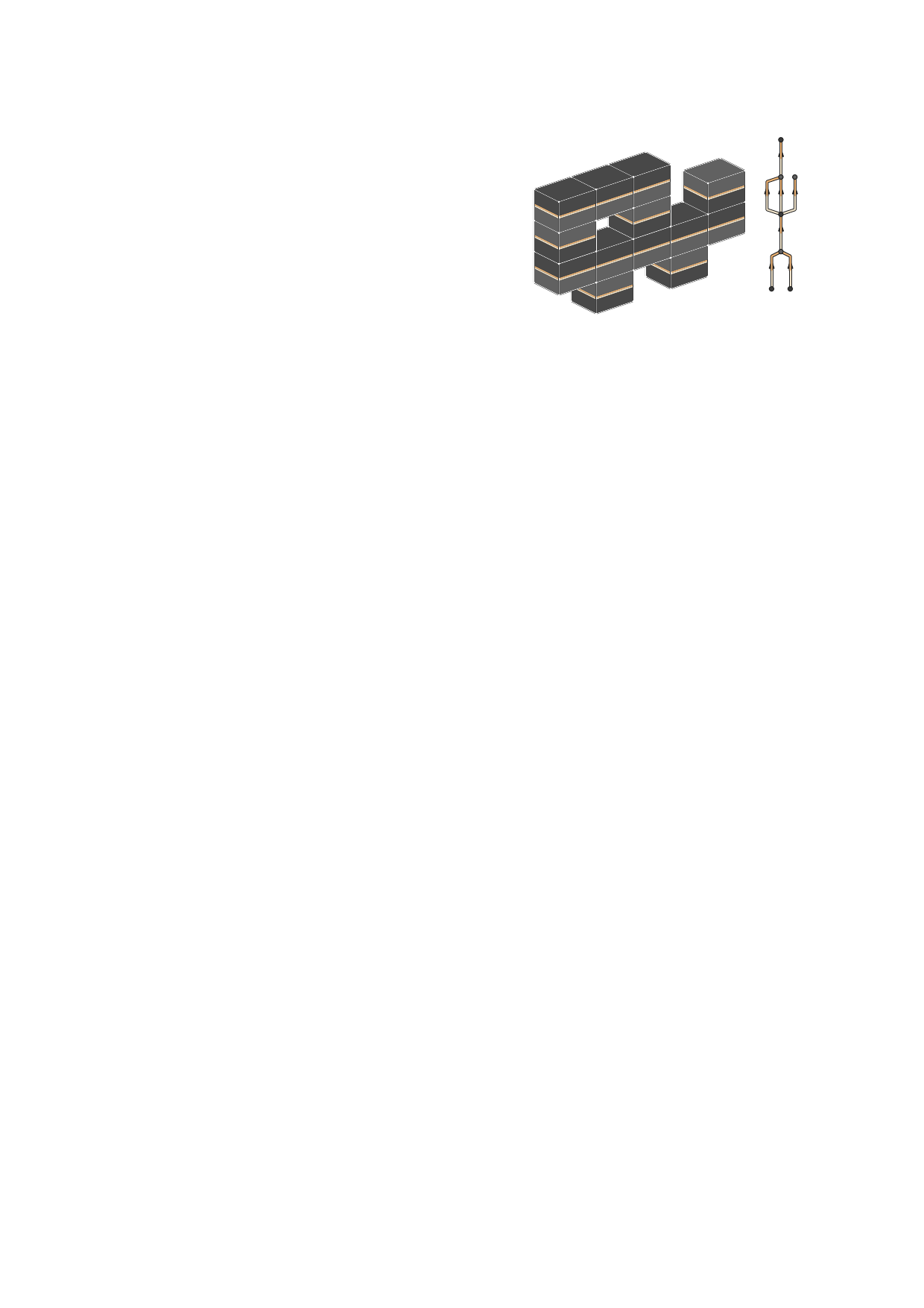}
    \hfill
    \includegraphics[scale=0.84]{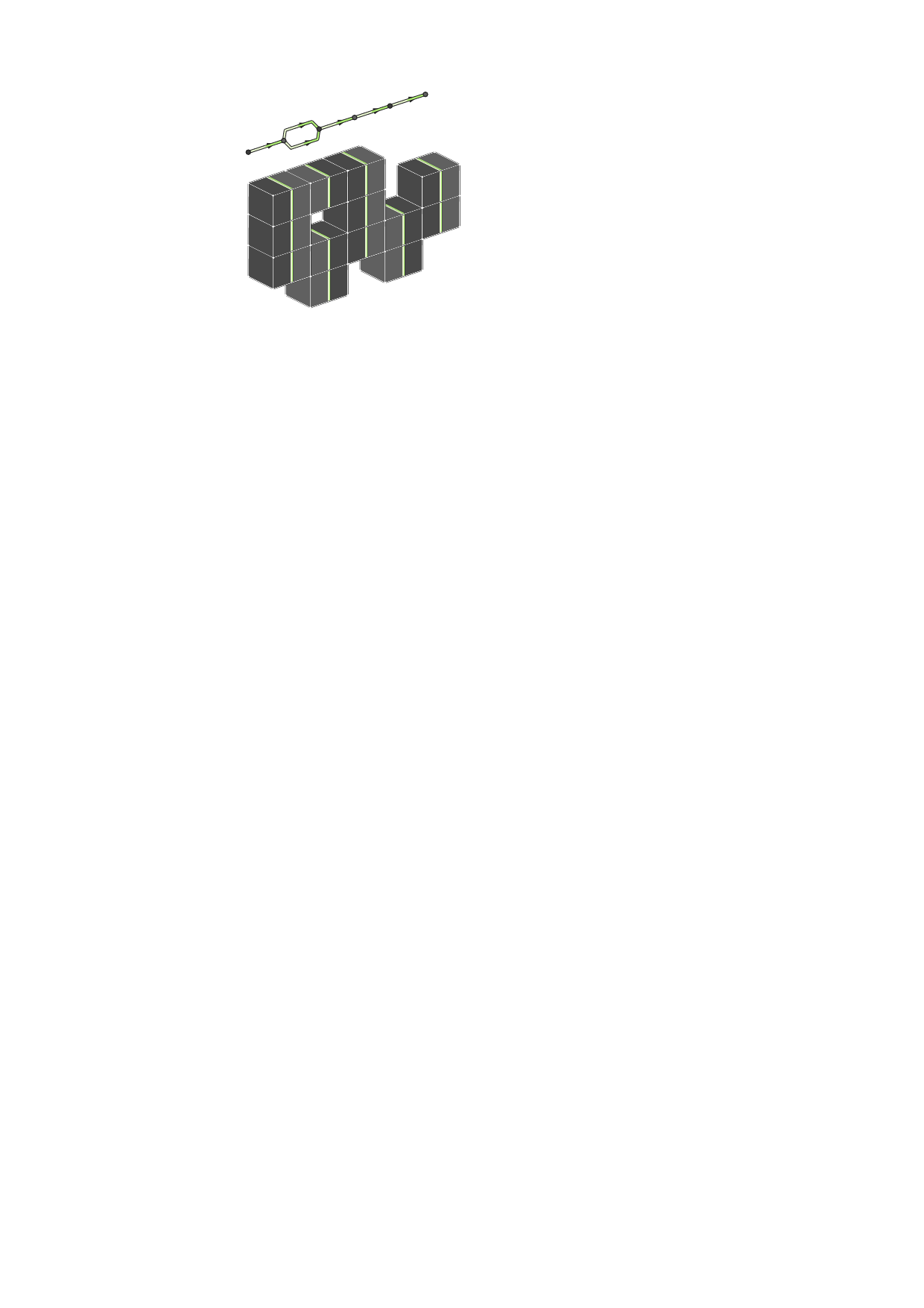}
    \hfill\quad
    \caption{Oriented loops and the corresponding level graphs. We use the colors purple, lighter purple, orange, lighter orange, green, and lighter green for $+X$, $-X$, $+Y$, $-Y$, $+Z$, $-Z$, respectively. For visual matching between the zones in the polycubes and in the level graphs, the zones are colored with two gray hues in alternating fashion.}
    \label{fig:zones_graph}
\end{figure*}

\begin{Definition}[Campen \emph{et al.}~\cite{campen2012dual}]
\label{def:quadstructure}
    A loop structure $\mathcal{L}$ is a \emph{quad loop structure} if:
    \begin{enumerate}[nolistsep]
        \item All loop intersections are transversal,
        \item No three loops intersect at a single point,
        \item Each loop region has the topology of a disk,
        \item Each loop region is bounded by at least two loop segments.
    \end{enumerate}
\end{Definition}
Since polycubes are quad meshes they do have a dual loop structure: each loop corresponds to a strip of quadrilateral faces whose center points share one coordinate~\cite{biedl2004when}, see Figure~\ref{fig:equivalent}. The loops can be classified as $X$-, $Y$-, or $Z$-loops. In our figures, we color the $X$-, $Y$-, and $Z$-loops with purple, orange, and green, respectively. An $X$-loop traverses faces whose normals align with the $Y$ and $Z$ axes, that is, the normals of the faces are perpendicular to the $X$ axis. Similarly, $Y$-loops and $Z$-loops traverse faces whose normals are perpendicular to their respective axes.

Baumeister and Kobbelt~\cite{baumeister2023how} explore how to transform a quad mesh into a polycube, by modifying its loop structure. Specifically, they label the loop structure of an input quad mesh with $X$, $Y$, or $Z$, such that the \emph{labeled loop structure} corresponds to the loop structure of a polycube. For the latter they define a characterization of the loop structure of polycubes of genus 0.
However, this characterization does not uniquely define a polycube. For example, the two polycubes in Figure~\ref{fig:equivalent} share the same labeled loop structure but differ in face alignment. The first polycube has a protruding block, while the second has an indented block. Additionally, this characterization applies only to genus-0 polycubes, and polycubes where any two dual loops intersect no more than twice. 

In the remainder of this section, we fix these limitations with the following minor change to the loop structure: we assign an \emph{orientation} (clockwise or counterclockwise) to every loop. The orientation of a loop can be interpreted as giving the loop two sides: a \emph{positive} side where the corresponding coordinate increases in the polycube, and a \emph{negative} side where the corresponding coordinate decreases. Therefore, to simplify our explanations, we orient the loops implicitly by assigning labels (positive or negative) to the two sides of a loop. This distinction between the loop sides is important for determining the direction of each polycube edge. In our figures containing oriented loops, we use two different shades to differentiate between the two sides of each loop. The negative ($-$) side is represented by a lighter shade compared to the positive ($+$) side. We refer to such an extended labeled loop structure as an \emph{oriented loop structure}, see Figure~\ref{fig:zones_graph}.

Before we can give a characterization of polycubes via oriented loop structures, we need to introduce a few concepts. We can use the full set of $X$-loops of an oriented loop structure to partition the underlying space (surface or polycube) into regions. We refer to these regions as \emph{$X$-zones}, or simply \emph{zones} in general. Similarly, we can use the $Y$-loops and $Z$-loops to obtain $Y$-zones and $Z$-zones, respectively. For each set of zones, we construct the $X$-, $Y$-, and $Z$-graph. The $X$-graph is constructed as follows (the $Y$- and $Z$-graphs are symmetric). We first add a vertex for each $X$-zone. Then, for each $X$-loop, we add a directed edge $(u,v)$, where $u$ is the zone on the $-$ side and $v$ is the zone on the $+$ side of the loop. We say that zones $u$ and $v$ share a loop. Since the underlying surface or polycube may have higher genus, parallel edges (where two zones share more than one loop) can occur, see \Cref{fig:zones_graph}. Observe that these graphs also define the partial order of Definition~\ref{def:order-equivalent}. Consider a directed edge $(u,v)$ in the $X$-graph. Both $u$ and $v$ correspond to distinct zones which contain vertices with a shared $x$-coordinate. Notice that the shared $x$-coordinate of the vertices in zone $u$ is strictly smaller than the shared $x$-coordinate of the vertices in zone $v$.

We can now give a general characterization for oriented loop structures that correspond to a polycube; we refer to such loop structures as \emph{polycube loop structures}.

\begin{Definition}
\label{def:dual}
    An oriented loop structure $\mathcal{L}$ is a \emph{polycube loop structure} if:
    \begin{enumerate}[nolistsep]
        \item No three loops intersect at a single point.
        \item Each loop region is bounded by at least three loop segments.
        \item Within each loop region boundary, no two loop segments have the same axis label \emph{and} side label.
        \item Each loop region has the topology of a disk.
        \item The level graphs are acyclic.
    \end{enumerate}
\end{Definition}
In the remainder of this section, we prove that, given a polycube loop structure $\mathcal{L}$, we can construct a single unique polycube $Q$ (up to order-equivalence), and conversely, that any arbitrary polycube has a corresponding dual polycube loop structure. We must first establish several key properties.

\begin{theorem}
\label{thm:nonintersecting}
    In a polycube loop structure $\mathcal{L}$, loops with the same axis label cannot intersect.
\end{theorem}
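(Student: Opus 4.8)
The plan is to argue by contradiction from the local picture at a single crossing, using essentially only the first and third conditions of Definition~\ref{def:dual}. Suppose two distinct loops $\ell_1$ and $\ell_2$ with the same axis label, say $X$, share a point $p$. Since loops are simple closed curves neither self-intersects, and by the first condition no third loop passes through $p$; hence exactly $\ell_1$ and $\ell_2$ meet at $p$ and cross transversally there. A neighborhood of $p$ is then divided into four quadrants by the four loop segments incident to $p$ (the two halves of $\ell_1$ and the two halves of $\ell_2$), which appear around $p$ in alternating cyclic order $\ell_1, \ell_2, \ell_1, \ell_2$.

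First I would record the side label of each quadrant. Because each loop has a globally consistent positive and negative side, every quadrant lies on a definite side of $\ell_1$ and a definite side of $\ell_2$. Reading the four quadrants in cyclic order shows that they realize the four sign combinations $(+,+)$, $(+,-)$, $(-,-)$, $(-,+)$, each exactly once; in particular there is a quadrant on the positive side of both loops and one on the negative side of both.

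The key step is to examine the loop region filling the $(+,+)$ quadrant. Its two boundary segments incident to $p$ — one belonging to $\ell_1$, the other to $\ell_2$ — are consecutive edges of this region's boundary walk, and both carry axis label $X$ together with side label $+$. This is exactly what the third condition of Definition~\ref{def:dual} forbids, so we reach a contradiction (the $(-,-)$ quadrant gives the same contradiction with side label $-$). I would stress that this argument is independent of the relative orientation of the two loops and uses neither the disk condition nor the acyclicity of the level graphs. A tempting alternative is to try to extract a directed cycle in the $X$-graph, but the orientations in fact order the four incident zones consistently (the $(+,+)$ zone highest, the $(-,-)$ zone lowest), so no cycle arises; the local side-label condition is the right tool.

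The main obstacle, and the only step needing genuine care, is justifying that the two segments identified above really do lie on the boundary of a common loop region. This follows from the first condition: near $p$ the arrangement is a cell complex with precisely these four segments incident to $p$ and four quadrant-faces, so the face occupying the chosen quadrant has both of its $p$-incident segments on its boundary, meeting at the vertex $p$. Once this is in place the contradiction with the third condition is immediate, which completes the proof.
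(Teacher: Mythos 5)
Your proof is correct and takes essentially the same route as the paper: both examine the four regions around the crossing point, locate the one lying on the positive side of both loops, and note that its two incident boundary segments then share both axis label and side label, contradicting Condition 3 of Definition~\ref{def:dual}. The only caveat is that transversality of the crossing does not actually follow from Condition 1 as you suggest (the paper establishes it separately in Lemma~\ref{lem:polycube_is_quad}, again via Condition 3), but since the paper's own proof of this theorem makes the same implicit assumption of four regions around the intersection, this does not distinguish your argument from theirs.
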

\begin{proof}
    Assume, for contradiction, that two loops $\lambda_1$ and $\lambda_2$ with the same axis label intersect in $\mathcal{L}$. Without loss of generality, we may assume that $\lambda_1$ and $\lambda_2$ are $X$-loops. The intersection $x$ of $\lambda_1$ and $\lambda_2$ is incident to four loop regions around $x$. Now consider the two loop regions around $x$ that are on the positive side of $\lambda_1$. These two loop regions are separated by $\lambda_2$, and hence one of these loop regions must also be on the positive side of $\lambda_2$. But this violates Condition 3 of Definition~\ref{def:dual}. 
    Thus, in a polycube loop structure, loops with the same axis label cannot intersect.\hfill 
\end{proof}

\begin{lemma}\label{lem:polycube_is_quad}
    A polycube loop structure is also a quad loop structure.
\end{lemma}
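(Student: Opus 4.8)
The plan is to verify each of the four conditions of Definition~\ref{def:quadstructure} for an arbitrary polycube loop structure $\mathcal{L}$, and three of them come essentially for free. Condition~2 of Definition~\ref{def:quadstructure} (no three loops meet at a point) is literally Condition~1 of Definition~\ref{def:dual}, and Condition~3 (every loop region is a disk) is literally Condition~4 of Definition~\ref{def:dual}. Condition~4 of Definition~\ref{def:quadstructure} asks only that every loop region be bounded by at least two loop segments, which is implied by the stronger Condition~2 of Definition~\ref{def:dual} (at least three). So the entire content of the lemma reduces to Condition~1 of Definition~\ref{def:quadstructure}: that every loop intersection is transversal, i.e.\ the two loops genuinely cross rather than merely touch.

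To establish transversality I would first pin down which loops can meet at an intersection point $x$. Since a loop structure is an arrangement of \emph{simple} closed curves, no loop meets itself, so every intersection involves at least two distinct loops; and by Condition~1 of Definition~\ref{def:dual} it involves at most two. Hence exactly two distinct loops $\lambda_1$ and $\lambda_2$ pass through $x$, and by Theorem~\ref{thm:nonintersecting} they carry different axis labels. It then remains to rule out a non-transversal (tangential) meeting. I would argue by contradiction: if $\lambda_1$ and $\lambda_2$ touch at $x$ without crossing, then locally the two emanating segments of $\lambda_1$ are consecutive in the cyclic order of segments around $x$, so the loop region on the far side of $\lambda_1$ from $\lambda_2$ has both of these $\lambda_1$-segments on its boundary. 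These two segments share the axis label of $\lambda_1$, and since the region lies on a single side of $\lambda_1$ they also present the same side label, contradicting Condition~3 of Definition~\ref{def:dual}. Therefore the meeting must be a transversal crossing, and all four conditions of Definition~\ref{def:quadstructure} hold.

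I expect this transversality step to be the only genuine obstacle, and the delicate point there is purely book-keeping about the local picture at $x$: I must argue that $x$ subdivides $\lambda_1$ into two genuinely distinct segments (which holds because $x$ is a loop intersection and hence a vertex of the structure), and that these two segments really are the boundary segments of the region opposite $\lambda_2$, so that Condition~3 applies to them. It is also worth noting that this is precisely where orientation does the work: at an honest transversal crossing the four local quadrants see one segment of $\lambda_1$ and one of $\lambda_2$, with distinct axis labels, so no conflict arises, whereas tangency forces a repeated side on a single loop. The remaining three conditions require no argument beyond the direct comparison of the two definitions.
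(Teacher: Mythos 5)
Your proof is correct and takes essentially the same approach as the paper: both reduce the lemma to the transversality condition (Condition~1 of Definition~\ref{def:quadstructure}) and then argue by contradiction that a tangential intersection produces a loop region whose boundary contains two segments of the same loop with the same axis label and the same side label, violating Condition~3 of Definition~\ref{def:dual}. Your additional book-keeping (the cyclic order of segments at $x$ and the appeal to Theorem~\ref{thm:nonintersecting}) is sound but goes beyond what the paper's shorter argument needs.
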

\begin{proof}
We only need to establish that Condition 1 of Definition~\ref{def:quadstructure} also holds for polycube loop structures. Assume, for the sake of contradiction, that an intersection between loops is not transversal, and let $\lambda$ be one of the loops. Then there must be a loop region incident to this intersection such that $\lambda$ occurs twice along its boundary (before and after the intersection). Clearly, both segments must have the same side label for that loop region, which violates Condition 3 of Definition~\ref{def:dual}. This is a contradiction, so every polycube loop structure is also a quad loop structure.\hfill
\end{proof}

\begin{theorem}
\label{thm:dual1}
    For every polycube $Q$ there exists a polycube loop structure $\mathcal{L}$ that forms the dual of $Q$. 
\end{theorem}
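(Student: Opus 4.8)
The plan is to construct the dual loop structure $\mathcal{L}$ directly from $Q$ and then verify each of the five conditions of Definition~\ref{def:dual} in turn. Since $Q$ is a quad mesh, its dual is a $4$-regular graph that decomposes into simple cycles (loops), with each loop corresponding to a strip of quadrilateral faces sharing one coordinate, as already noted via Biedl and Genc~\cite{biedl2004when}. First I would make this correspondence precise: each loop crosses a sequence of edges of $Q$, and I would classify a loop as an $X$-, $Y$-, or $Z$-loop according to whether the faces it traverses have normals perpendicular to the $X$-, $Y$-, or $Z$-axis. I would then assign the orientation (the $+$/$-$ side labels) using the partial orders of $Q$: for an $X$-loop, the side on which the $x$-coordinate increases is the positive side, and analogously for the other axes. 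This is well-defined precisely because each loop traverses faces whose shared coordinate is constant along the strip, so the two sides of the loop sit at strictly different coordinate values.

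Next I would check the combinatorial conditions. Condition~1 (no three loops meet at a point) and Condition~4 (each loop region is a disk) follow from the fact that $Q$ is a genuine quad mesh: loop intersections correspond to faces of $Q$, and since each face is a quadrilateral incident to exactly two loops (the dual is $4$-regular and each face has four edges split into two opposite pairs), no triple intersections arise; loop regions correspond to corners of $Q$, whose stars are disks because $Q$ is a manifold. Condition~2 (each loop region bounded by at least three segments) I would derive from Condition~2 of Definition~\ref{def:polycube}, namely that every corner has degree at least $3$: a loop region corresponds to a corner, and its bounding segments correspond to the incident edges, so degree $\geq 3$ gives at least three boundary segments.

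The heart of the argument is Condition~3 (within each loop region boundary, no two segments share both axis and side label) and Condition~5 (the level graphs are acyclic). For Condition~3, I would translate a loop region to a corner $c$ of $Q$ and observe that a boundary segment with axis label $\Delta$ and positive side corresponds to an edge leaving $c$ in the positive $\Delta$-direction. By Condition~4 of Definition~\ref{def:polycube} (incident edges cannot overlap) and Lemma~\ref{lem:rectface} (all edges are axis-aligned), no two distinct edges at $c$ point in the same signed axis direction, which is exactly the statement that no two boundary segments share both labels. For Condition~5, acyclicity of, say, the $X$-graph follows because a directed edge $(u,v)$ forces the shared $x$-coordinate of zone $u$ to be strictly smaller than that of $v$, as the excerpt already observes; a directed cycle would then produce a strictly increasing sequence of $x$-coordinates returning to its start, an impossibility in $\mathbb{Z}^3$.

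The main obstacle I anticipate is making the zone-to-coordinate assignment rigorous enough to justify the strict monotonicity underlying Condition~5, together with confirming that the orientation assignment is globally consistent (i.e., that a single loop does not have its positive side pointing toward increasing coordinates in one place and decreasing in another). This amounts to showing that all corners in a single $X$-zone genuinely share one $x$-coordinate, which I would establish by noting that moving between faces of $Q$ without crossing an $X$-loop never changes the $x$-coordinate of the traversed corners — equivalently, that an $X$-zone is a maximal connected union of faces whose boundary $Y$- and $Z$-edges keep $x$ fixed. Once this invariant is in place, strict monotonicity across a shared loop, and hence acyclicity, follows immediately, and the remaining verifications are routine.
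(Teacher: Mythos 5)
Your proposal is correct and follows essentially the same route as the paper's proof: construct the dual quad loop structure, label loops by the axis of the crossed edges, orient by the side with higher coordinate, and verify Conditions~1--5 exactly as you do (quad-mesh duality for 1 and 4, corner degree for 2, distinct signed edge directions at a corner for 3, and strict coordinate monotonicity for acyclicity in 5). The ``obstacle'' you flag --- that all corners in an $X$-zone share one $x$-coordinate --- is indeed needed, and the paper handles it the same way, stating it as an observation about zones just before Definition~\ref{def:dual} rather than inside the proof.
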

\begin{proof}
    First, we show how to construct an oriented loop structure $\mathcal{L}$ given a polycube $Q$. Since $Q$ is a quad mesh, we can obtain a quad loop structure from its dual graph. It remains to label and orient the loops. Consider any loop $\lambda$ in the quad loop structure. By definition, $\lambda$ must traverse opposite edges of each face of $Q$ it visits. Since every face of $Q$ is represented by an axis-aligned rectangle, opposite edges of a face must be parallel and also axis-aligned. Thus, all edges intersected by $\lambda$ must be parallel, and we can label $\lambda$ with the axis to which all intersected edges are parallel. Furthermore, by looking at the coordinates of the endpoints (corners) of the edges intersected by $\lambda$, we can assign $+$ to the side of $\lambda$ where the corner with the higher respective coordinate resides, and $-$ to the other side of $\lambda$. Observe that these sides must be consistent for all edges intersected by $\lambda$, since the faces are axis-aligned rectangles.

    We now verify that $\mathcal{L}$ satisfies all conditions of Definition~\ref{def:dual}:
    
    \begin{enumerate}[nolistsep]
        \item $\mathcal{L}$ is the dual of a quad mesh, and as such, no three loops intersect at a single point.

        \item Each corner of $Q$ has at least three adjacent corners. As each loop region corresponds to a corner of $Q$, each loop region boundary contains at least three distinct loop segments.

        \item Each corner of $Q$ has up to six incident edges, each with a distinct principal direction ($+X$, $-X$, $+Y$, $-Y$, $+Z$, $-Z$). Since each loop region corresponds to a corner of $Q$, and its boundary consists of loop segments corresponding to the edges incident on that corner, there can be at most one loop segment with each axis and side label combination. If two segments had the same label combination, this would imply the existence of two edges with the same direction at the corner, which is not possible.

        \item $\mathcal{L}$ is the dual of a quad mesh, and as such, each loop region has the topology of a disk.
        
        \item For the sake of contradiction, assume without loss of generality that the $X$-graph contains a cycle. Consider a vertex in that cycle, corresponding to a zone with a coordinate value $x$. Following a directed edge in the graph strictly increases the coordinate value. The existence of the cycle therefore implies that $x > x$, which is a contradiction.
    \end{enumerate}
\smallskip\noindent Thus, for every polycube $Q$ there exists a polycube loop structure $\mathcal{L}$ that forms the dual of $Q$.\hfill
\end{proof}

\begin{figure*}[b]
\centering
    \hfill
    \subcaptionbox{}{\includegraphics{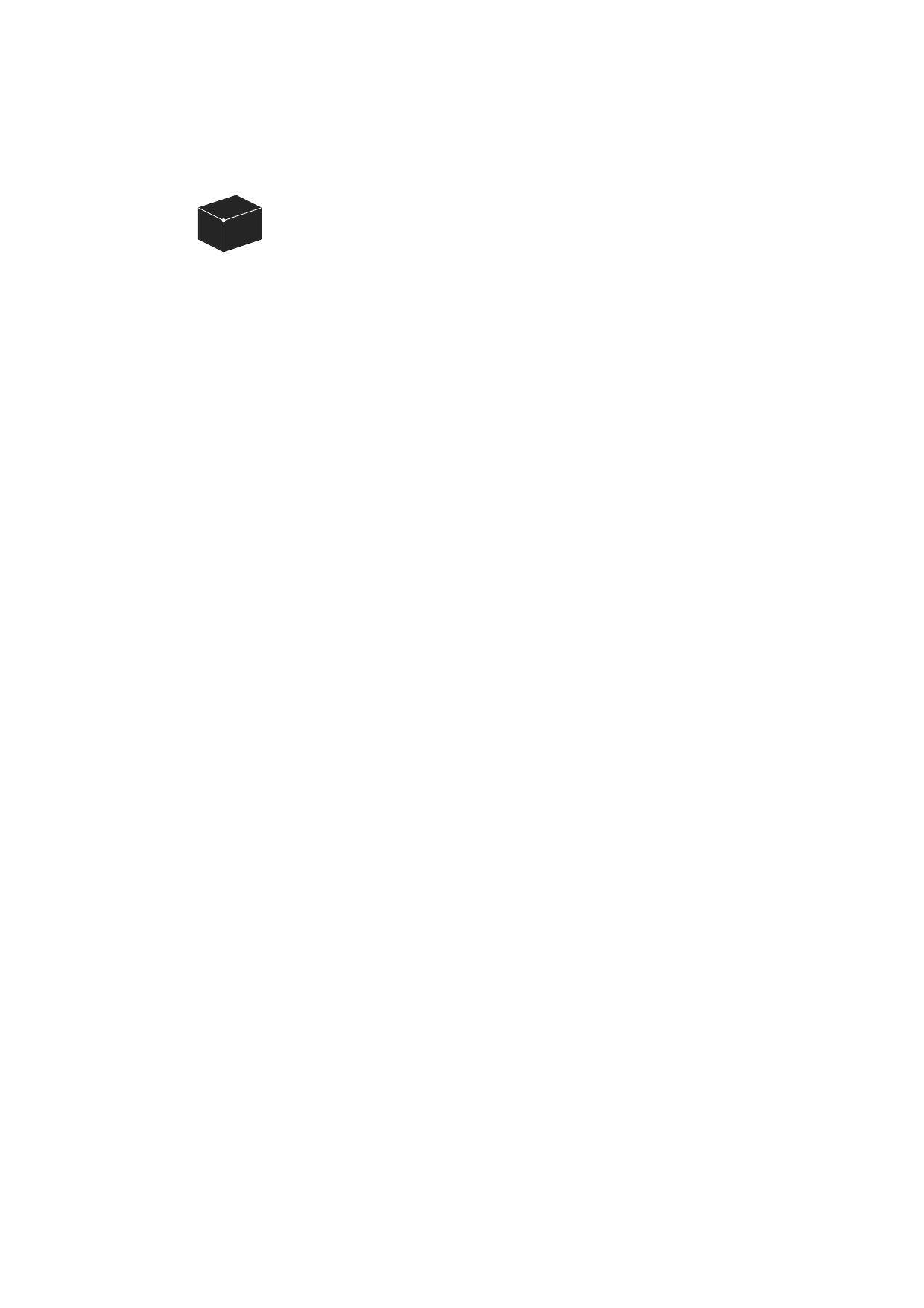}}
    \hfill
    \subcaptionbox{}{\includegraphics{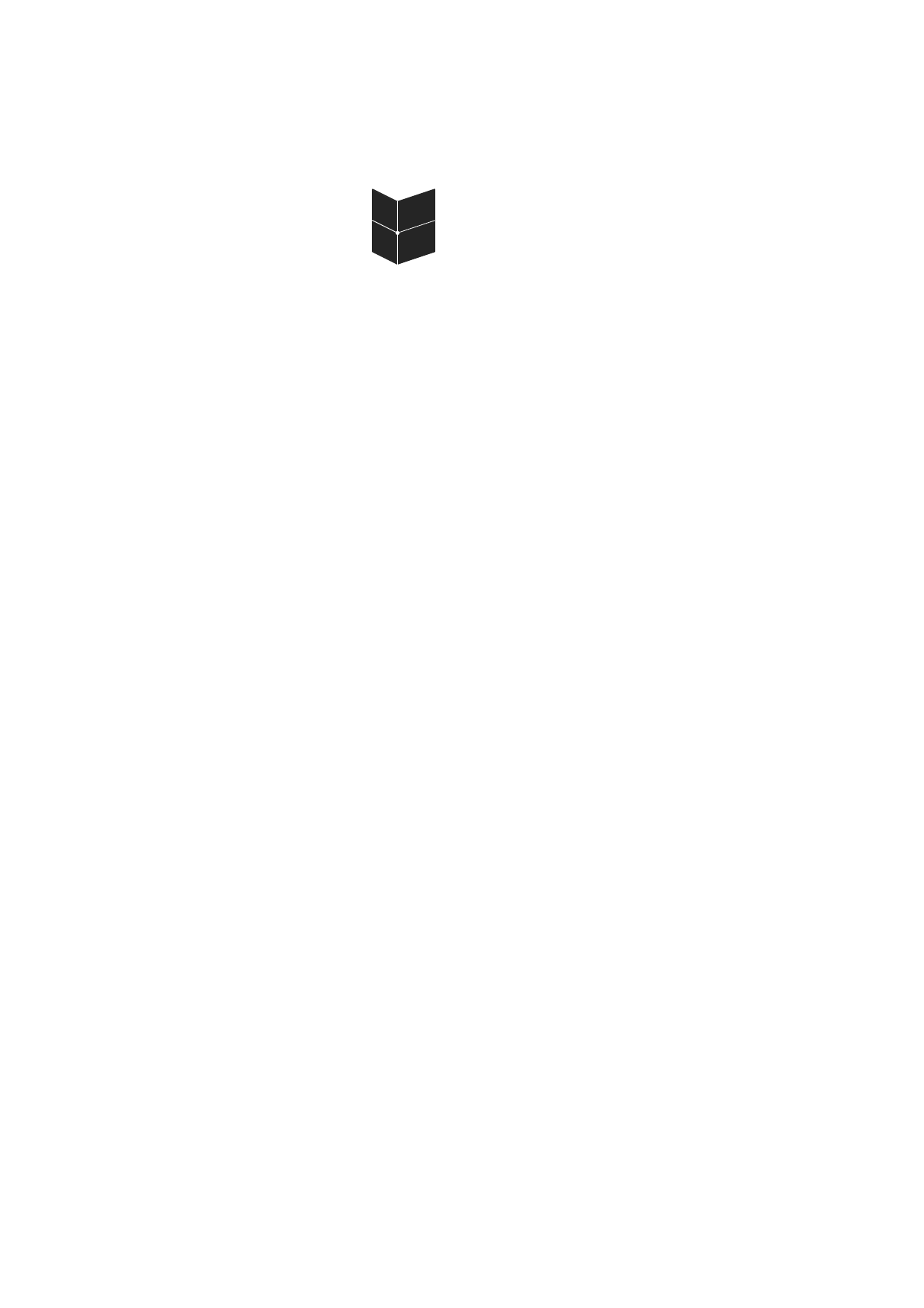}}
    \hfill
    \subcaptionbox{}{\includegraphics{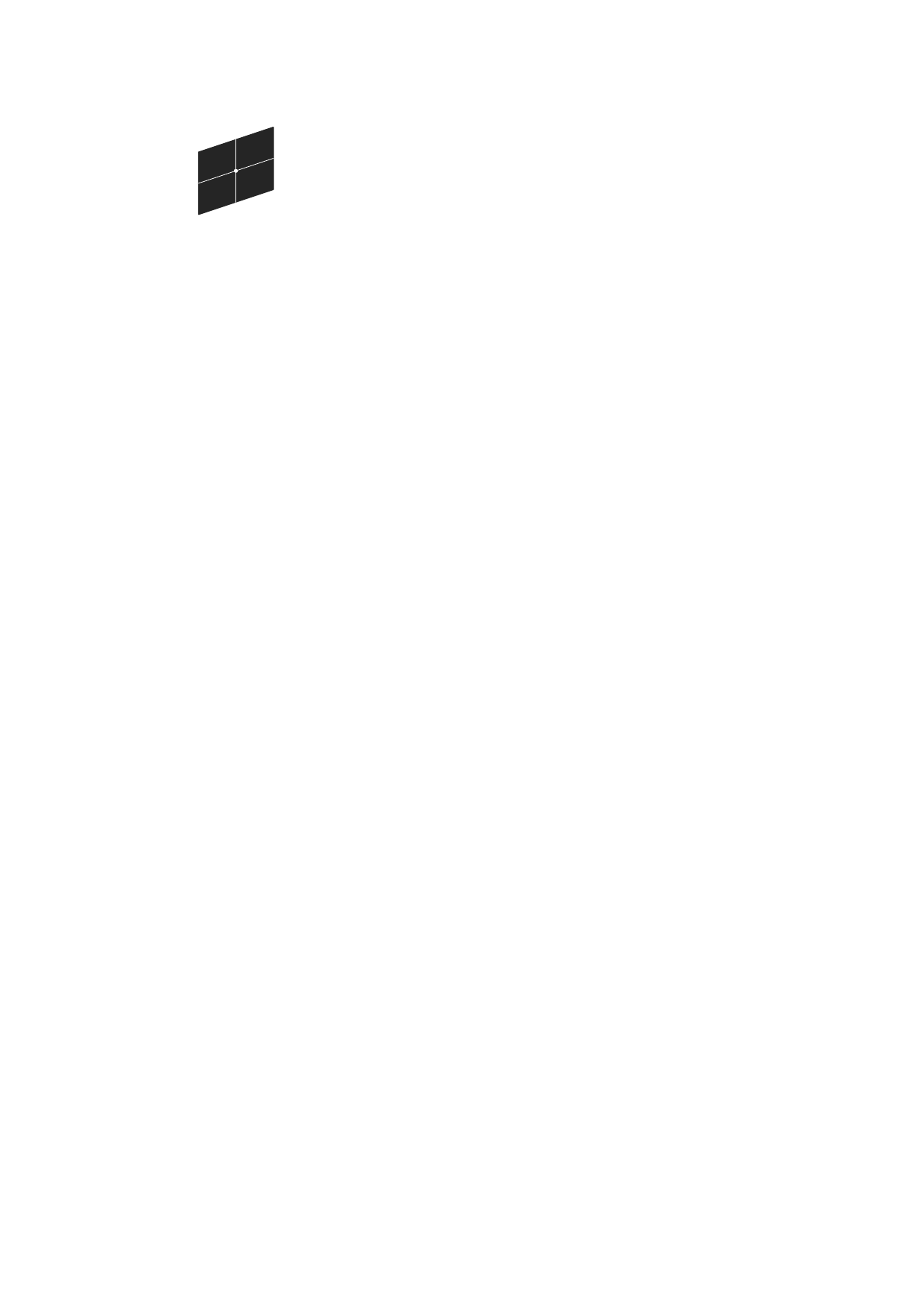}}
    \hfill
    \subcaptionbox{}{\includegraphics{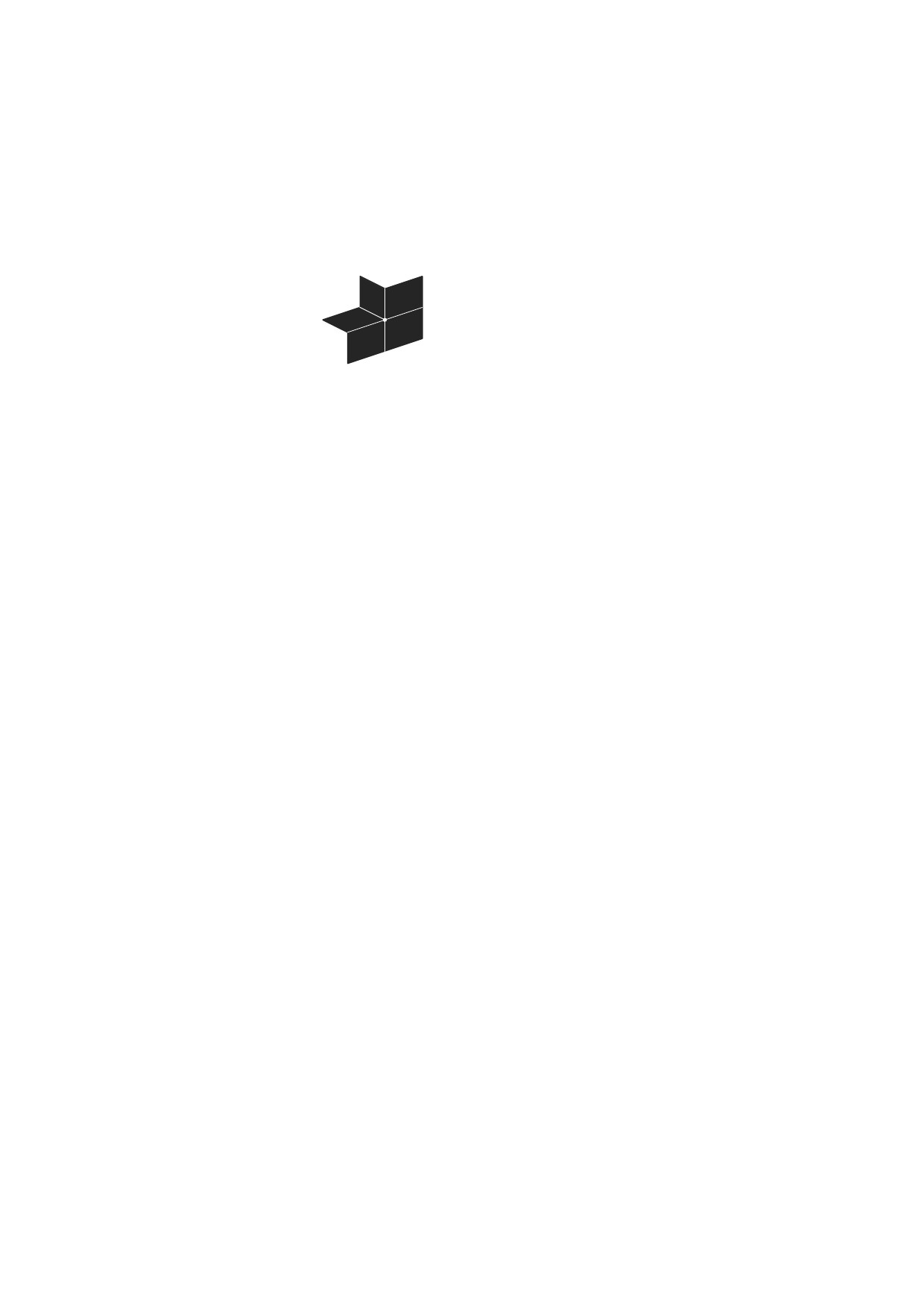}}
    \hfill
    \subcaptionbox{}{\includegraphics{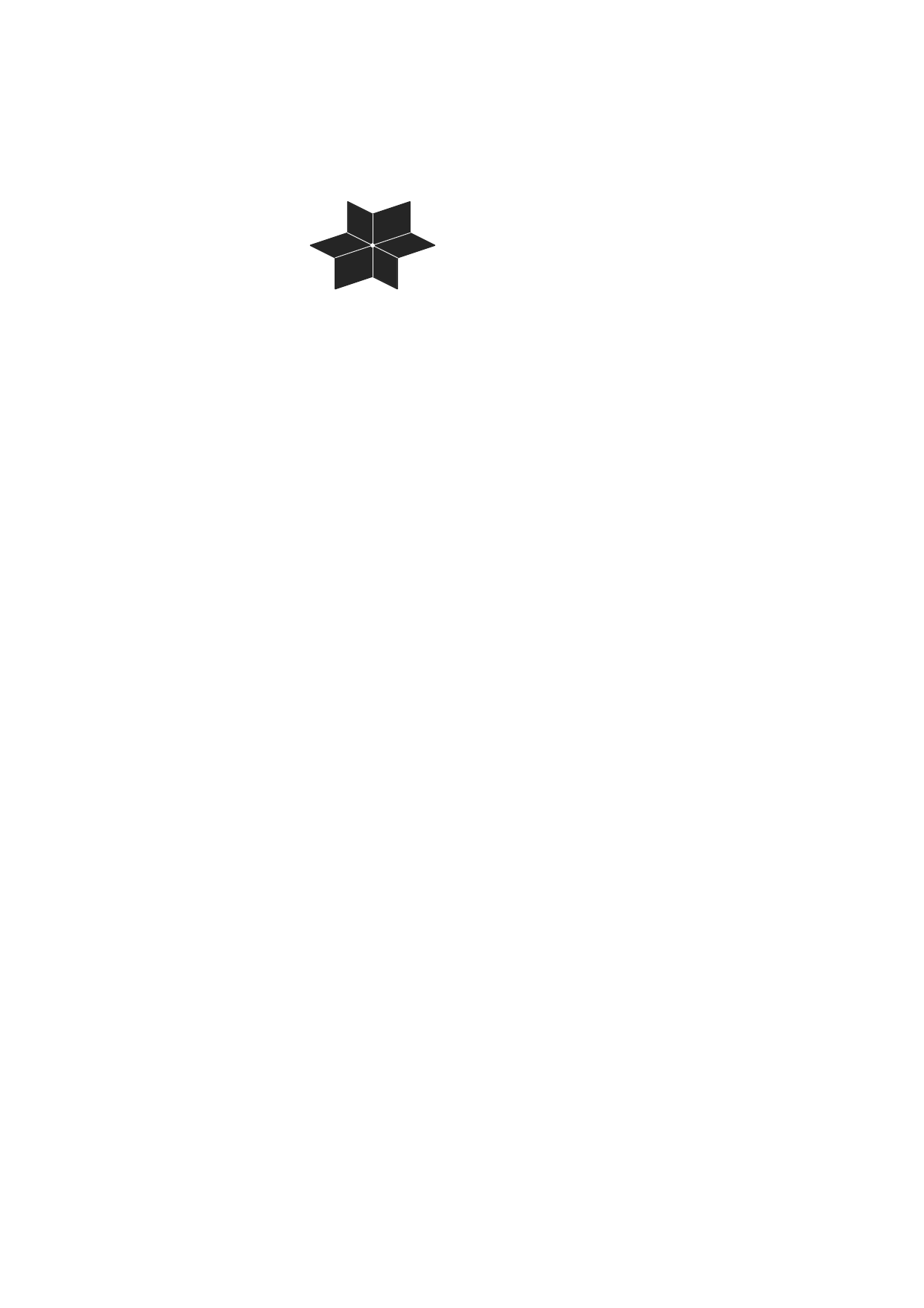}}
    \hfill
    \subcaptionbox{}{\includegraphics{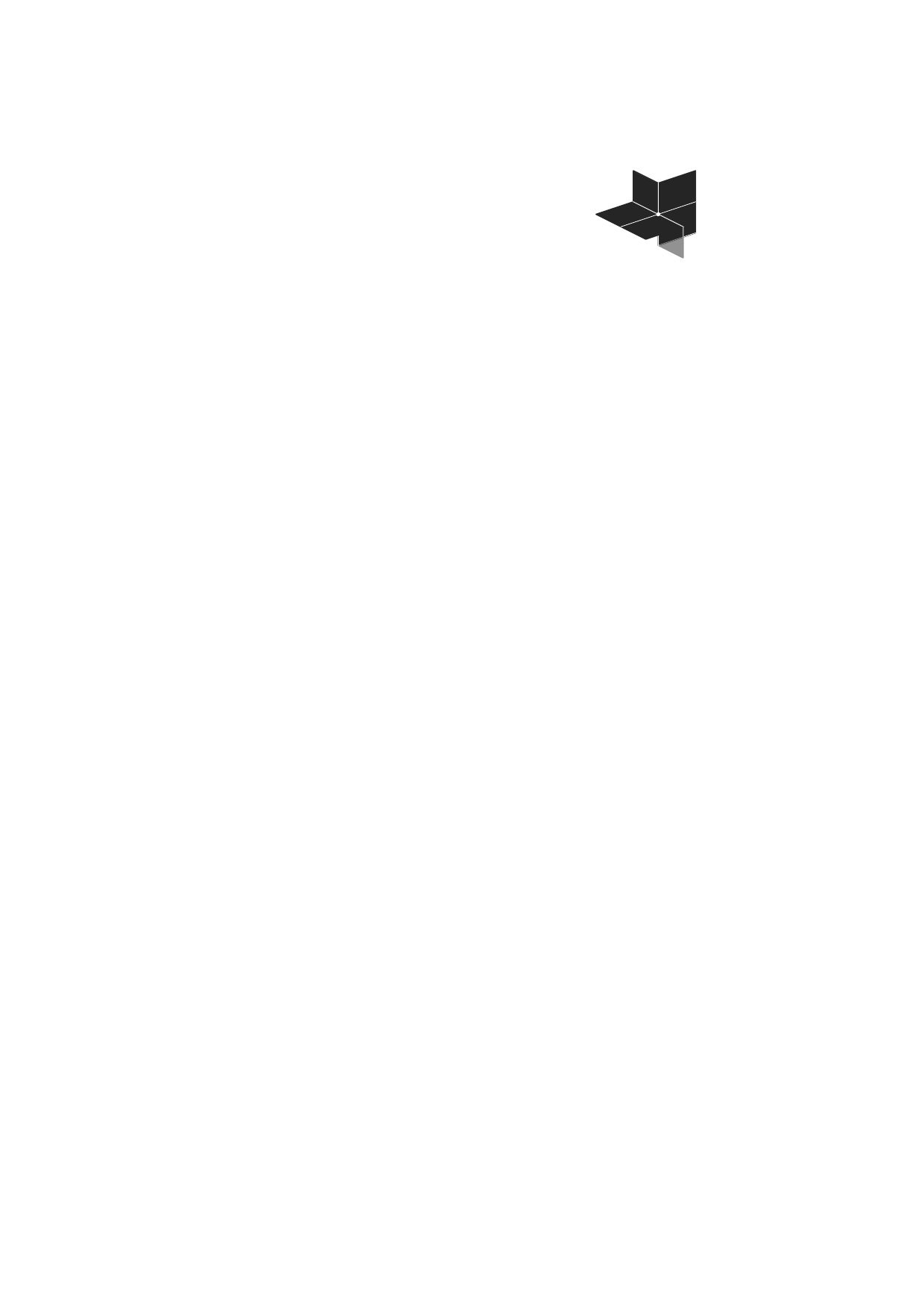}}
    \hfill\quad
    \caption{The six polycube corner categories.}
    \label{fig:vertex_categories}
\end{figure*}

\begin{theorem}
\label{thm:dual2}
    Given a polycube loop structure $\mathcal{L}$, there exists exactly one polycube $Q$ (up to order-equivalence) that corresponds to $\mathcal{L}$.
\end{theorem}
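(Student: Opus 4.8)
The plan is to prove Theorem~\ref{thm:dual2} by establishing both existence and uniqueness. For existence, I would construct the polycube $Q$ directly from the combinatorial data of $\mathcal{L}$. The loop regions of $\mathcal{L}$ will become the corners of $Q$, the loop segments will become the edges, and the loop intersections will become the faces (this is exactly the primalization / dualization step, consistent with the fact that $\mathcal{L}$ is a quad loop structure by Lemma~\ref{lem:polycube_is_quad}, so its primal is a well-defined closed orientable quad mesh). The real work is assigning integer coordinates in $\mathbb{Z}^3$ to each corner so that Definition~\ref{def:polycube} is satisfied. For this I would use the level graphs: by Condition~5 of Definition~\ref{def:dual}, the $X$-, $Y$-, and $Z$-graphs are acyclic, so each admits a topological order. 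I would assign to each corner an $x$-coordinate equal to (say) the longest-path depth of its containing $X$-zone in the $X$-graph, and analogously for $y$ and $z$. Because zones on opposite sides of a loop receive strictly ordered coordinates, adjacent corners will differ in exactly the one coordinate governed by the axis label of the separating loop, giving Condition~3.

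Next I would verify the remaining conditions of Definition~\ref{def:polycube} for this construction. Conditions~1 and~3 follow from the coordinate assignment just described. Condition~2 (degree at least three) follows from Condition~2 of Definition~\ref{def:dual}, since each loop region is bounded by at least three loop segments and these correspond bijectively to the edges incident to the associated corner. For Condition~4 (edges incident to a common vertex do not overlap, equivalently no two incident edges share a principal direction), I would invoke Condition~3 of Definition~\ref{def:dual}: no two loop segments around a region share both an axis label and a side label, so no two edges at a corner point in the same of the six directions $\{+X,-X,+Y,-Y,+Z,-Z\}$. The six corner categories of Figure~\ref{fig:vertex_categories} enumerate exactly the admissible local configurations, so I would argue that every corner falls into one of these categories, which simultaneously confirms manifoldness and the axis-aligned rectangular face structure inherited from Lemma~\ref{lem:rectface}. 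I would also check that each face, being the primal of an intersection of two loops with distinct axis labels (distinct by Theorem~\ref{thm:nonintersecting}), is an axis-aligned rectangle spanned by the two remaining axes, and that the global surface is closed, connected, and encloses a bounded volume.

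For uniqueness up to order-equivalence, I would argue that the combinatorial quad mesh structure of $Q$ is forced: the primal of $\mathcal{L}$ is determined purely by the incidence data of loops, segments, and regions, so any two polycubes dual to $\mathcal{L}$ have isomorphic quad meshes. It then remains to show that the isomorphism respects the three partial orders $\leq_X,\leq_Y,\leq_Z$ of Definition~\ref{def:order-equivalent}. Here the key point is that for adjacent corners, which coordinate changes and in which direction is dictated entirely by the axis label and the side ($+$ or $-$) of the separating loop segment, and this is recorded in $\mathcal{L}$ independently of the actual edge lengths. Hence any polycube realizing $\mathcal{L}$ must have the same relative ordering of adjacent corners along each axis, which is precisely order-equivalence. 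The freedom in choosing actual integer coordinates (e.g.\ different edge lengths, as in Figure~\ref{fig:order_equivalent}) is exactly the freedom that order-equivalence quotients out.

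The main obstacle I anticipate is the existence half, specifically proving that the coordinate assignment derived from the level graphs yields a genuinely embedded, non-degenerate, closed polycube surface rather than merely a consistent labeling. In particular I must ensure that the local picture at every corner is realizable as one of the six categories simultaneously for all three axes, and that the globally assembled faces close up into a surface bounding a single volume. The delicate step is confirming that the per-axis topological orderings are mutually compatible at each corner so that the three coordinates can be assigned independently without conflict; I expect this to reduce to checking that Conditions~2 and~3 of Definition~\ref{def:dual} constrain the cyclic sequence of segment labels around each region tightly enough to force one of the six valid corner types, after which the rest of the verification is routine.
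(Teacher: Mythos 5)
Your proposal is correct and follows essentially the same route as the paper: primalize the quad loop structure (the paper delegates this to the cited construction of Campen et al.), assign integer coordinates by topologically ordering the acyclic level graphs so that each corner inherits the values of its three zones, verify the four conditions of Definition~\ref{def:polycube} (with Condition~4 coming from Condition~3 of Definition~\ref{def:dual}, exactly as you argue), and obtain uniqueness because any polycube realizing $\mathcal{L}$ has the forced quad-mesh isomorphism and must respect the partial orders of the level graphs, which is precisely order-equivalence. The compatibility and embeddedness worries you raise at the end do not arise in the paper's proof: the three coordinates are assigned independently per axis with no possibility of conflict, and Definition~\ref{def:polycube} permits self-intersecting polycubes, so nothing beyond the four listed conditions needs to be checked.
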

\begin{proof}
    We first show that a polycube $Q$ can be constructed from $\mathcal{L}$. Next, we show that any two polycubes $Q_1$ and $Q_2$ sharing the same polycube loop structure $\mathcal{L}$ must be the same polycube (up to order-equivalence).

    Following~\cite{campen2012dual}, a quad mesh can be constructed from $\mathcal{L}$, where each loop region corresponds to a corner in the mesh with degree equal to the number of loop segments bounding the loop region. We now assign a position in $\mathbb{Z}^3$ to each corner using the level graphs of $\mathcal{L}$. Since these graphs are acyclic, we can assign integer values to each of the vertices (zones) in the graphs such that they adhere to the strict partial order induced by the graph. Because every corner $v$ is in exactly one $X$-zone, one $Y$-zone, and one $Z$-zone, we can simply assign the position $p(v) = (x, y, z)$ to $v$, where $x$, $y$, and $z$ are the integers assigned to the $X$-zone, $Y$-zone, and $Z$-zone of $v$, respectively. Let $Q$ be the resulting embedded mesh. We now show that $Q$ is a polycube using the conditions of Definition~\ref{def:polycube}.
    
    \begin{enumerate}[nolistsep]
        \item Each corner has a position in $\mathbb{Z}^3$ by construction.

        \item Since each loop region is bounded by at least three distinct loop segments, each vertex must have at least three adjacent vertices.
        
        \item Neighboring loop regions share exactly two zones and differ in one zone, as they are separated by exactly one loop of $\mathcal{L}$. This results in adjacent vertices in the quad mesh that differ in exactly one coordinate, fulfilling this condition.
    
        \item For the sake of contradiction, assume that two edges $(v, u_1)$ and $(v, u_2)$ incident on the same corner $v$ overlap, and assume without loss of generality that they both point in the positive $X$-direction. Then both $u_1$ and $u_2$ must have been assigned a higher $x$-coordinate than $v$, and hence the $X$-graph must contain a directed edge from the zone of $v$ to the zone of $u_1$, and also from the zone of $v$ to the zone of $u_2$. But then the loop region of $\mathcal{L}$ corresponding to $v$ must have two loop segments on the boundary with label $X$ and side $-$. This contradicts Condition 3 of Definition~\ref{def:dual}.
    \end{enumerate}
    \smallskip
    Thus, $Q$ is indeed a polycube corresponding to $\mathcal{L}$. Now assume that two polycubes $Q_1$ and $Q_2$ correspond to $\mathcal{L}$. Then the quad meshes of $Q_1$ and $Q_2$ must be isomorphic, and the positions of the corners must adhere to the (strict) partial orders induced by the level graphs of $\mathcal{L}$. But then $Q_1$ and $Q_2$ must be order-equivalent by Definition~\ref{def:order-equivalent}.\hfill
\end{proof}
Another nice property of the $X$-, $Y$-, and $Z$-graphs of a polycube loop structure is that they capture the topological properties of the corresponding polycube, as demonstrated by the following theorem.

\begin{theorem}
    For any polycube loop structure $\mathcal{L}$, the $X$-, $Y$-, and $Z$-graphs each satisfy $|E| = g + |V| - 1 - \Gamma$, where $|E|$ and $|V|$ indicate the number of edges and vertices in a graph, $g$ is the genus of the corresponding polycube $Q$, and $\Gamma$ is the sum of genii for each zone.
\end{theorem}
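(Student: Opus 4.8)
The plan is to compute the Euler characteristic of the polycube surface $Q$ in two different ways and reconcile the counts with the structure of the level graphs. Recall that a genus-$g$ closed orientable surface satisfies $\chi = 2 - 2g$, and that for the quad mesh $Q$ we have $\chi = |V_Q| - |E_Q| + |F_Q|$, where $V_Q$, $E_Q$, $F_Q$ denote the corners, edges, and faces of the polycube. I would first relate these quantities to the dual loop structure: corners correspond to loop regions, polycube edges correspond to loop segments, and faces of $Q$ correspond to loop intersections. Since every intersection is transversal (Lemma~\ref{lem:polycube_is_quad}) and no three loops meet at a point (Condition~1 of Definition~\ref{def:dual}), each intersection has degree $4$, giving the handshake relation $2 |E_Q| = 4 |F_Q|$, i.e.\ $|E_Q| = 2|F_Q|$. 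Substituting into $\chi$ yields $|V_Q| - |F_Q| = 2 - 2g$.

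Next I would connect the level graphs to this count. Fix the $X$-graph $G_X = (V_X, E_X)$. Its vertices are the $X$-zones and its edges are the $X$-loops. The key observation is that the edges of $Q$ aligned with the $X$-axis are exactly those edges crossed by $X$-loops, so each $X$-loop contributes a ``band'' of such edges, and the number of $X$-aligned edges of $Q$ equals the number of faces of $Q$ traversed by $X$-loops. I would make precise that the three classes of axis-aligned edges partition $E_Q$ and that each face of $Q$ is traversed by exactly two loops of different axis labels (since the normal of a face is perpendicular to two of the three axes). The plan is to count, for each graph, the relationship $|E_X|$ (the number of $X$-loops) against $|V_X|$ (the number of $X$-zones) and the genus data.

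The cleanest route is to apply Euler's formula within each zone and then globally. Each $X$-zone is a subsurface of $Q$ bounded by $X$-loops; if it has genus $\gamma_i$ and is bounded by some loops, I would compute its contribution to $\chi$, then sum over all zones. Summing $\chi$ over the zones overcounts the boundary loops (each $X$-loop borders two zones) in a controlled way. Concretely, collapsing each $X$-zone to a point and each $X$-loop to an edge realizes $G_X$ as a graph embedded on the quotient surface obtained from $Q$ by contracting zones; the genus contributions $\gamma_i$ must be subtracted because contracting a genus-$\gamma_i$ zone removes $2\gamma_i$ from the Euler characteristic. This gives a formula of the shape $\chi(Q) = 2(|V_X| - |E_X|) - 2\Gamma$ after accounting for the handle structure, where $\Gamma = \sum_i \gamma_i$. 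Rearranging with $\chi(Q) = 2 - 2g$ produces $|E_X| = g + |V_X| - 1 - \Gamma$, and the identical argument applies verbatim to $G_Y$ and $G_Z$ by the symmetry of the construction.

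The main obstacle will be making the zone-contraction argument rigorous: I must verify that contracting each $X$-zone to a single vertex and each loop to a single edge yields a cellular embedding of $G_X$ whose underlying surface is exactly $Q$ with each zone's handles ``filled in,'' so that the Euler characteristic bookkeeping $\chi(\text{quotient}) = \chi(Q) + 2\Gamma$ holds. This requires knowing that each zone is connected and that the loops bounding a zone behave like the boundary of a subsurface with $\chi(\text{zone}) = 2 - 2\gamma_i - b_i$, where $b_i$ is its number of boundary loops, and then carefully tracking how the shared loops are identified between adjacent zones. Once the quotient surface's Euler characteristic is expressed as $2|V_X| - 2|E_X|$ (a graph contributes $2(|V|-|E|)$ to $\chi$ only when every face is a bigon or the embedding is suitably degenerate, so I would instead argue directly that the quotient is a closed surface whose $\chi$ equals $\chi(Q) + 2\Gamma$ and whose cell structure gives $|V_X|$ vertices, $|E_X|$ edges, and a computable face count), the rearrangement is immediate.
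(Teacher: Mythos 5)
Your middle paragraph already contains the paper's entire proof, and it is complete as it stands: sum the zone contributions $\chi_i = 2 - 2\gamma_i - b_i$ over the $k = |V_X|$ zones, use the fact that each $X$-loop is a boundary circle of exactly two zones so that $\sum_i b_i = 2|E_X|$, and equate with $\chi(Q) = 2-2g$ to get $|E_X| = g + |V_X| - 1 - \Gamma$. The only topological input this needs is that the Euler characteristic is additive when a closed surface is cut along disjoint simple closed curves; since circles have $\chi = 0$, the inclusion--exclusion correction vanishes, so there is in fact no ``overcounting'' to control at all. (Also, your opening computation $|V_Q| - |F_Q| = 2-2g$ from the quad mesh is correct but never used afterwards; it is a detour.)

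The genuine problem is that you declare this insufficient and route the proof through the zone-contraction/quotient construction, which you yourself flag as the main obstacle --- and that step, as described, fails. First, there is no ``quotient surface'': collapsing each \emph{closed} zone to a point identifies the endpoints of every edge (each loop lies in the closure of the zones on both of its sides), so a connected $Q$ collapses to a single point; collapsing only the \emph{open} zones yields a non-Hausdorff space in which each zone-point is open, which is not a surface either. Second, even the repaired version --- replace each genus-$\gamma_i$ zone by a sphere with $b_i$ holes to obtain a closed surface $Q'$ with $\chi(Q') = \chi(Q) + 2\Gamma$, and embed $G_X$ in $Q'$ --- cannot be finished by your proposed cell count, because the embedding of $G_X$ in $Q'$ need not be cellular. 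Concretely, take a genus-$1$ polycube whose $X$-graph is two zones joined by two parallel edges (as in \Cref{fig:zones_graph}): all zones are annuli, so $Q' = Q$ has genus $1$; the embedded graph is a single circle meeting each $X$-loop once, hence non-separating (odd intersection number), so its complement is one connected face that is not a disk, and $|V_X| - |E_X| + |F| = 2 - 2 + 1 = 1 \neq 0 = \chi(Q')$. Any fix --- e.g., weighting each face by its own Euler characteristic --- collapses back into the additivity argument you already wrote. So keep the zone-sum computation, cite additivity of $\chi$ under cutting along disjoint loops as its justification, and delete the quotient construction entirely.
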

\label{thm:genus}
\begin{proof}
    The Euler characteristic $\chi$ of a surface $\mathcal{S}$ with genus $g$ and $b$ boundaries is given by $\chi = 2-2g-b$.

    If $\mathcal{S}$ is decomposed into $k$ subsurfaces $\mathcal{S}_1, \ldots, \mathcal{S}_k$ by cutting along a collection of non-intersecting loops, where a subsurface $\mathcal{S}_i$ has Euler characteristic: $\chi_i = 2-2g_i-b_i$.
    
    Then we can obtain $\chi$ using the additive form:
    
    $$\chi = \sum^k_{i=1} \chi_i = \sum^k_{i=1} (2-2g_i-b_i)$$
    
    Let $\mathcal{S} = Q$ for some polycube $Q$ and consider the set of $X$-loops on $Q$. Let $l$ be the number of $X$-loops and let $k$ be the number of $X$-zones. Each loop is the boundary of two zones, thus there are a total of $2 l$ boundaries among the $k$ zones. We can simplify further to obtain:
    
    $$ \chi = 2k - 2l + \sum^k_{i=1} (-2g_i)$$
    
    Given that $\chi = 2 - 2g$, we derive the relation $2 - 2g = 2k - 2l +\sum^k_{i=1} (-2g_i)$, or equivalently, $l = g + k - 1 - \sum^k_{i=1} g_i$. The $X$-, $Y$-, and $Z$-graphs have $|V|=k$ and $|E|=l$. Recall $\Gamma$ is defined as $\sum^k_{i=1} g_i$. Therefore, we know that for the $X$-, $Y$-, and $Z$-graph each satisfy $|E|=g+|V|-1 - \Gamma$.\hfill
\end{proof}
Theorem~\ref{thm:genus} also directly implies the following nice property for polycubes of genus $0$.

\begin{corollary}\label{cor:tree}
    If $Q$ is a polycube of genus $0$, then the $X$-, $Y$-, and $Z$-graphs of its loop structure $\mathcal{L}$ are trees.
\end{corollary}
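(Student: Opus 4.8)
The plan is to specialize Theorem~\ref{thm:genus} to the case $g = 0$ and then upgrade the resulting edge count into the statement that each level graph is a tree. Setting $g = 0$ in the identity $|E| = g + |V| - 1 - \Gamma$ leaves $|E| = |V| - 1 - \Gamma$, so the first task is to show that $\Gamma = 0$. Recall that $\Gamma = \sum_i g_i$ is the sum of the genera of the individual zones. Since $Q$ has genus $0$, its boundary surface is a sphere, and every zone is a connected subsurface of the sphere obtained by cutting along the loops of a single axis (which are pairwise disjoint by Theorem~\ref{thm:nonintersecting}, so the cutting of Theorem~\ref{thm:genus} is well defined). A connected subsurface of the sphere is planar and therefore has genus $0$, so each $g_i = 0$ and hence $\Gamma = 0$. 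This yields $|E| = |V| - 1$ for each of the $X$-, $Y$-, and $Z$-graphs.

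It remains to argue that a graph with $|E| = |V| - 1$ is in fact a tree. This does not follow from the edge count alone, nor immediately from Condition 5 of Definition~\ref{def:dual}: that condition guarantees only that the level graphs contain no \emph{directed} cycle, whereas a tree must have no cycle in the underlying undirected graph, and a directed acyclic graph can perfectly well contain an undirected cycle. The clean way around this is to establish connectivity. I would argue that each level graph is connected by appealing to the connectivity of $Q$: the zones of one axis partition the connected boundary surface, and two zones sharing a loop are joined by an edge, so a path on the surface between points of any two zones crosses a sequence of loops that witnesses a path between the corresponding vertices in the graph. A connected graph on $|V|$ vertices with exactly $|V| - 1$ edges is necessarily a simple tree---this simultaneously rules out undirected cycles and the parallel edges that the higher-genus case permits---which completes the argument for all three graphs.

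The main obstacle is precisely this last passage: the edge-counting identity is necessary but not sufficient, and one must supply the independent topological input (connectivity, or equivalently undirected acyclicity) that turns the count into a tree. Justifying $\Gamma = 0$ is comparatively routine once one observes that cutting a sphere along disjoint simple closed curves cannot create a handle, but it should be phrased in terms of the planarity of the zones rather than silently assuming that the zones are disks.
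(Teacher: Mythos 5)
Your proof is correct and follows essentially the same route as the paper: the paper derives this corollary directly from Theorem~\ref{thm:genus}, specializing to $g=0$ and $\Gamma=0$ so that each level graph satisfies $|E|=|V|-1$. Your extra step---establishing connectivity of the level graphs from connectivity of the surface so that the edge count actually forces a tree---is a detail the paper leaves implicit, and you are right that it is genuinely needed, since Condition~5 of Definition~\ref{def:dual} only forbids \emph{directed} cycles and the count $|E|=|V|-1$ alone does not rule out a disconnected graph with undirected cycles.
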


\begin{figure*}[b]
    \centering

    \vspace{0.15in}
    \includegraphics{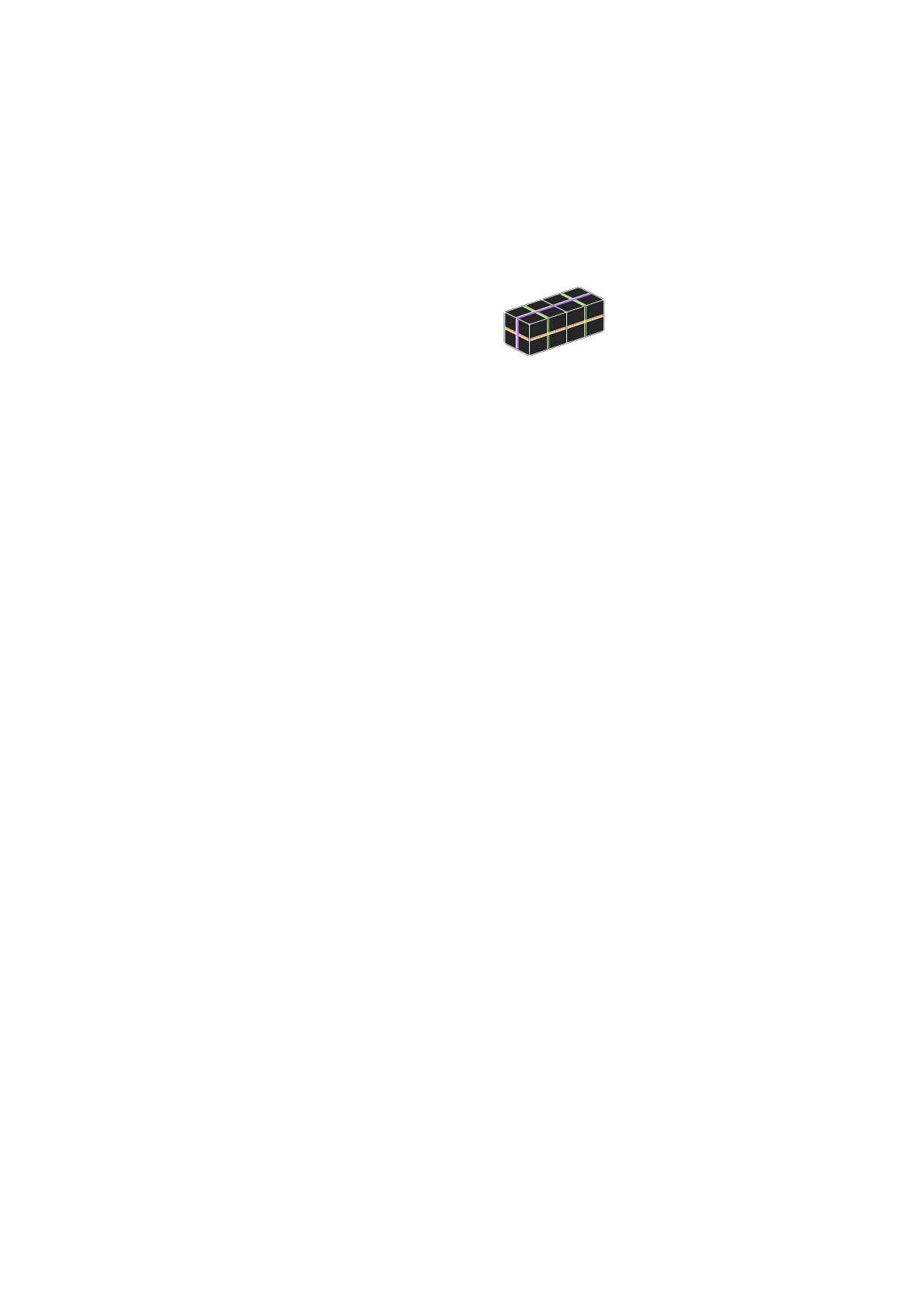}
    \hspace{0.5cm}
    \includegraphics{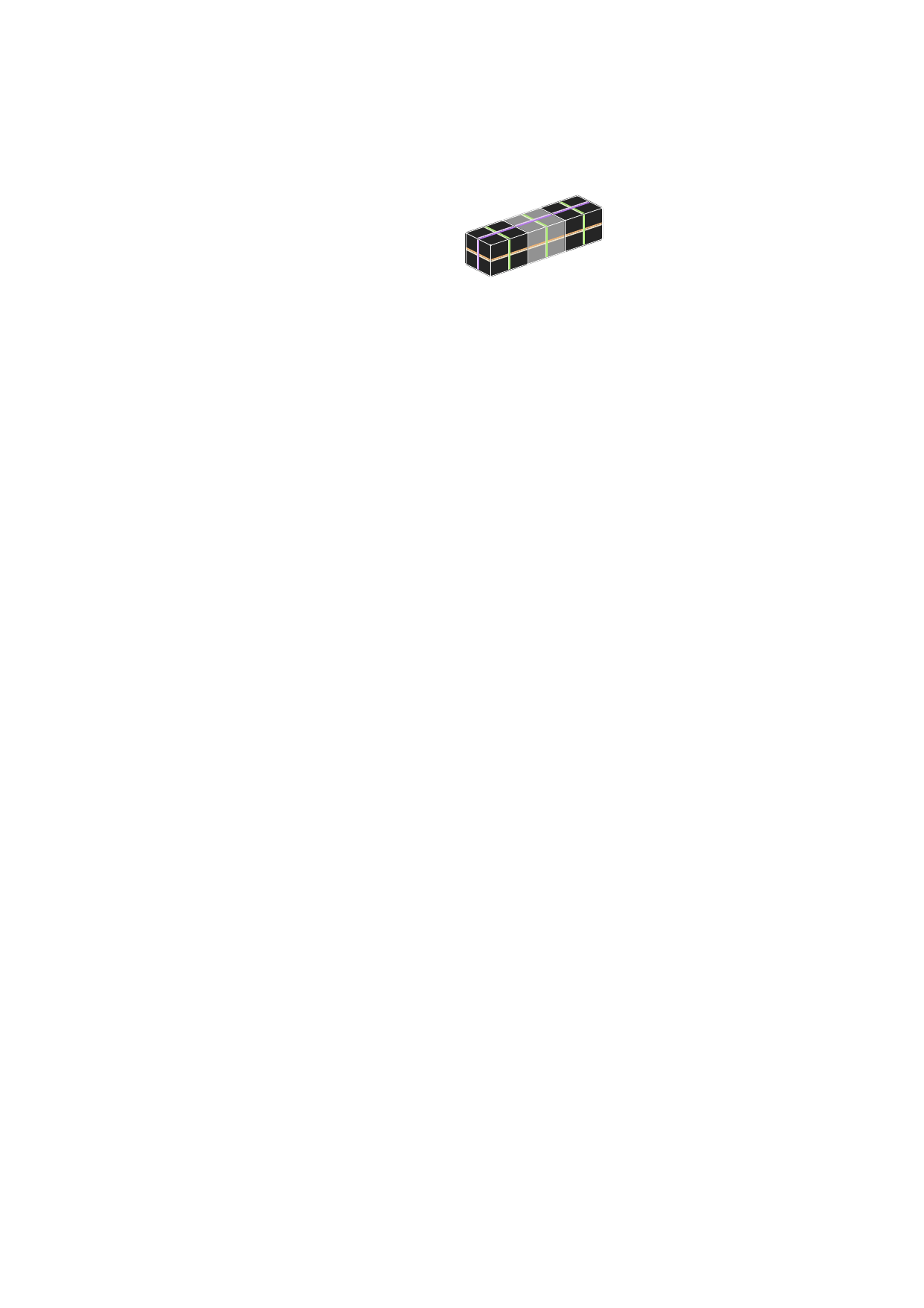}
    \hspace{0.5cm}
    \includegraphics{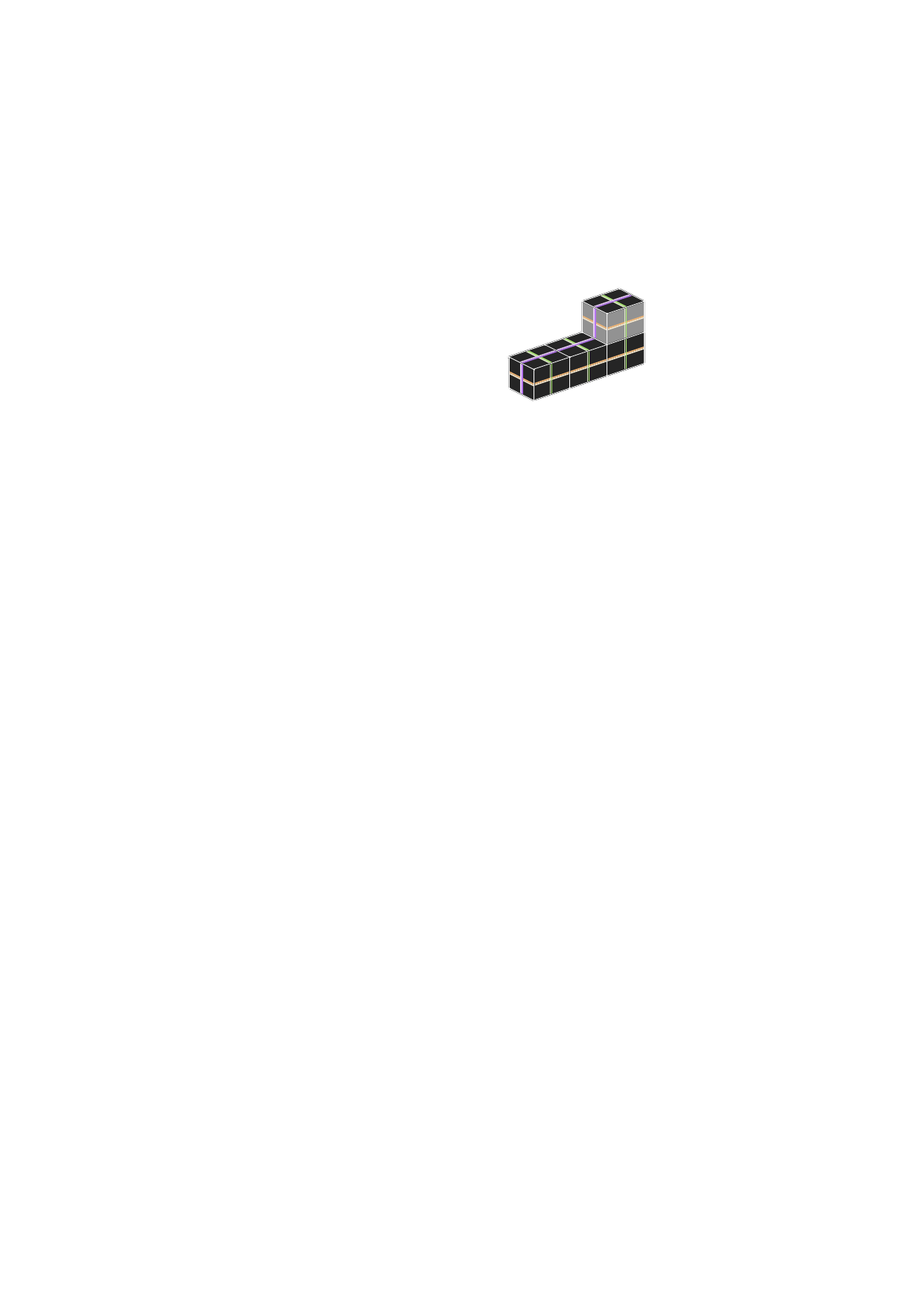}
    \hspace{0.5cm}
    \includegraphics{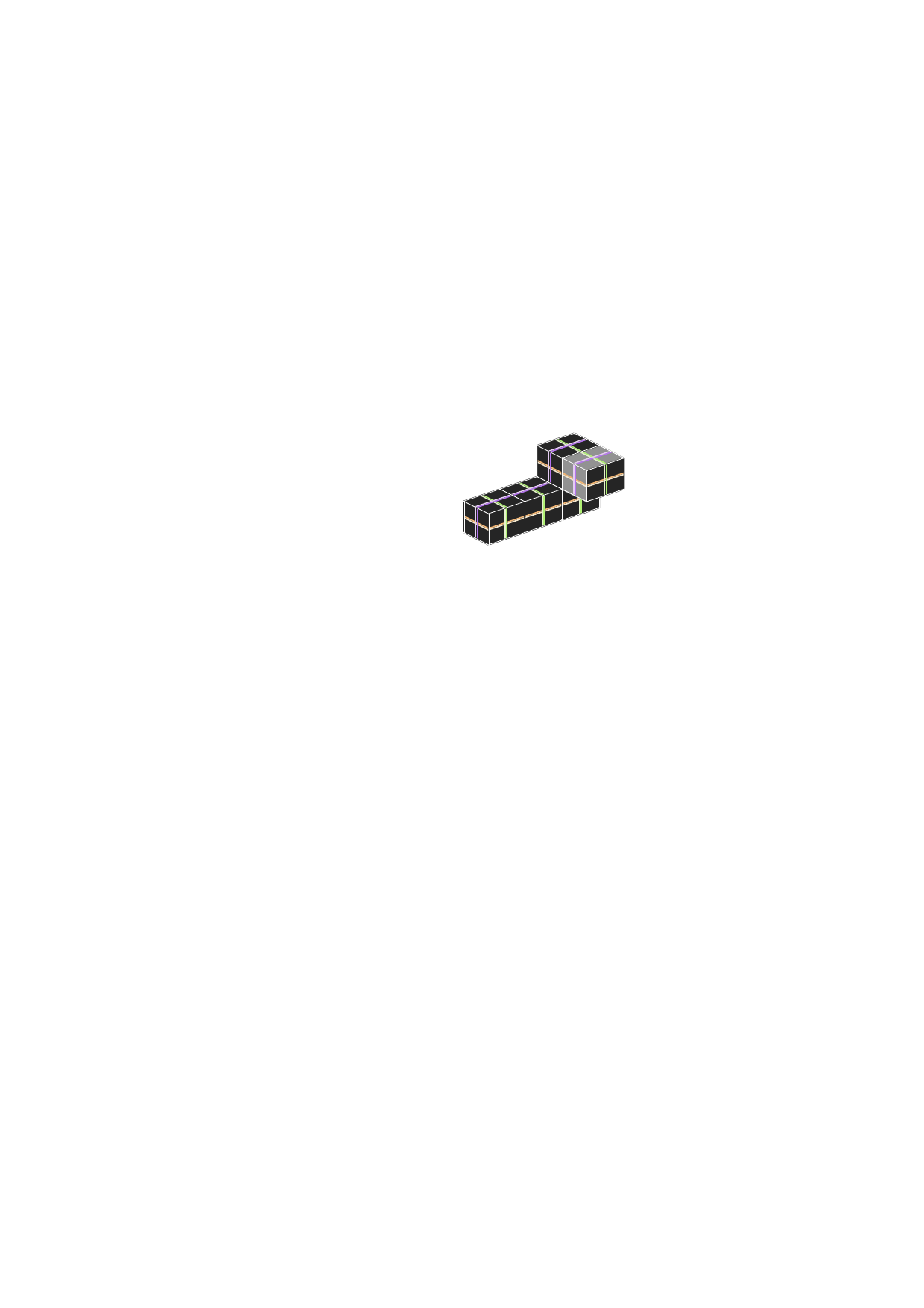}
    
    \caption{Sequence of loop additions: a $Z$-loop, then a $Y$-loop, and then a $X$-loop. Note that loops can be added between existing loops, as in the second figure.}
    \label{fig:addloop}
\end{figure*}

\smallskip\noindent{\bf Orienting a labeled loop structure.}
The introduction of oriented loop structures naturally raises the following question: can we always orient the loops of a labeled loop structure such that the result is a polycube loop structure? The work of Baumeister and Kobbelt~\cite{baumeister2023how} addresses this question indirectly, but only for polycubes of genus $0$. Here, we explicitly describe how to orient loops, or assign direction labels to their sides, to form a polycube loop structure for polycubes of arbitrary genus.

The conditions of Definition~\ref{def:dual} that are affected by the orientation of the loops are Conditions~3 and~5. Note that we can still compute the $X$-, $Y$-, $Z$-graphs of a labeled loop structure, but we cannot assign directions to the edges in the graphs. To assign orientations to the loops, we construct a graph on the loops for each axis separately ($X$, $Y$, or $Z$). For the $X$-axis, we make a graph that has two nodes for each $X$-loop, one corresponding to each side of the loop. We connect the two nodes of a loop with an edge, and we also connect two nodes if the corresponding loop sides both occur on the boundary of a single loop region. If this graph admits a $2$-coloring (that is, if it is bipartite) for all axes, then we can consistently assign side labels ($+$/$-$) to the sides of loops, or equivalently, orient the loops such that Condition~3 of Definition~\ref{def:dual} is always satisfied. Finally, we must also check Condition~5 of Definition~\ref{def:dual}: the $X$-, $Y$-, $Z$-graphs are acyclic. Note that, by Corollary~\ref{cor:tree}, this condition can never be violated for polycubes of genus $0$. For polycubes of higher genus, the challenge is to find a $2$-coloring of each of the $3$ graphs described above such that the $X$-, $Y$-, $Z$-graphs become acyclic. If the graph to be colored has $k$ connected components, then the graph has $2^k$ possible $2$-colorings. So, even though we can exhaustively check if a labeled loop structure can be oriented to become a polycube loop structure, it remains open if this can be computed efficiently.

\smallskip\noindent{\bf Polycube corners.}
For completeness we show all possible polycube corner configurations that occur and how they are represented in a polycube loop structure. At each polycube corner, the edges align with one of the coordinate axes ($X$, $Y$, or $Z$) and extend in either the positive or negative direction.

Each polycube corner must have at least three incident edges, and its incident edges do not include more than one edge of the same type. Additionally, adjacent edges in the arrangement must correspond to different labels, as loops of the same label do not intersect. These criteria are complete to enumerate all unique polycube corner configurations. In total, there are $16 + 24 + 6 + 48 + 8 + 24 = 126$ unique configurations. We show all unique configurations in \Cref{fig:all_vertices}. 

We can classify each configuration in six general patterns (see Figure~\ref{fig:vertex_categories}). The simple corner with only 3 adjacent edges; The edge corner and flat corner with 4 adjacent edges (one using all three axes, the other using only two axes); The bent corner with 5 adjacent edges; and the symmetric and asymmetric complex corner (one with a symmetric intersection pattern, the other with an asymmetric intersection pattern).

\section{Polycube modifications}
\label{sec:modifications}

To construct a polycube map or polycube segmentation for an input surface $\mathcal{S}$, we have to choose a suitable polycube $Q$ to map from. Current methods for constructing polycube segmentations do not have a guarantee on the existence of a corresponding polycube $Q$ while they are constructing the segmentation. This creates friction between optimizing the quality of the polycube segmentation (complexity and alignment) and guaranteeing the validity of the polycube segmentation (that it actually corresponds to a valid polycube).

Instead, we suggest to construct the polycube segmentation iteratively, where we guarantee the validity of the segmentation during every step. Doing this directly using the polycube segmentation on $\mathcal{S}$ is challenging, but our dual characterization of polycubes offers a relatively straightforward approach: we initialize a simple polycube loop structure $\mathcal{L}$ (with the same genus as $\mathcal{S}$) on the surface $\mathcal{S}$, and then we iteratively add or remove oriented and labeled loops to $\mathcal{L}$ whilst ensuring that it remains a polycube loop structure. We can then also easily keep track of the corresponding polycube $Q$ using Theorem~\ref{thm:dual2}. The loops we add or remove can be chosen based on some quality criteria for the final polycube segmentation. Although this approach yields only a polycube loop structure $\mathcal{L}$ (on $\mathcal{S}$), this loop structure prescribes the global structure of the polycube segmentation and only local optimizations are needed to obtain the final polycube segmentation.

We can simply check the characterization after each step to determine if an addition or removal is valid. Doing so is computationally expensive, especially when considering additions, since there are many potential candidate loops. Therefore we show in the following how to construct valid loops for addition, and to determine valid loops for removal, efficiently.  

\subsection{Loop addition.}
\label{sec:loopaddition}
The addition of a loop to a polycube loop structure $\mathcal{L}$ introduces a strip of faces in the corresponding polycube $Q$, see Figure~\ref{fig:addloop}. We say that a loop is \emph{valid} if it can be added to $\mathcal{L}$ such that $\mathcal{L}$ remains a polycube loop structure. 

To effectively find valid loops for $\mathcal{L}$, we develop a directed graph structure that enables the enumeration of all possible valid loops.
We say that any two loops are \emph{combinatorially equivalent} if their intersection patterns (the loop segments they intersect) in $\mathcal{L}$ are equivalent. As such, we observe that all possible (valid or invalid) loops that could be added to a loop structure are characterized by the edge-graph of $\mathcal{L}$, which is defined as follows: there is a vertex for each loop segment in $\mathcal{L}$, and there is a directed edge from vertex $u$ to vertex $v$ if the two corresponding loop segments bound the same loop region in $\mathcal{L}$. We call this directed graph $G_L$. Note that this graph is directed since we care about the orientation of the loops. For an oriented loop, we assume that the right side (following its direction) is the positive ($+$) side, and the left side is the negative ($-$) side.

The validity of a loop in $G_L$ also depends on its label. Therefore, we will construct three different graphs, $G_V^X$, $G_V^Y$, and $G_V^Z$, one for each possible label, where each graph is obtained by removing directed edges from $G_L$. The idea is that a loop with label $\Delta \in \{X,Y,Z\}$ is valid if and only if it corresponds to a simple (oriented) cycle in $G_V^\Delta$. A cycle is simple if it visits every loop region of $\mathcal{L}$ at most once.

\bigskip\noindent{\bf Local constraints.}
When we add a loop $\lambda$ to $\mathcal{L}$, all loop regions that are visited will be split into two loop regions. All other loop regions are unaffected. Note that the splitting of a loop region, results in two new loop regions with the topology of a disk, so Condition~4 of Definition~\ref{def:dual} is never violated when adding a loop. 

The other local conditions of Definition~\ref{def:dual} (1-3) can be guaranteed by constructing the correct graphs $G_V^X$, $G_V^Y$, and $G_V^Z$. First of all, Condition 1 can always be guaranteed by construction, as there always exists a combinatorially equivalent loop with that property. For Conditions~2 and~3, observe that every (directed) edge in $G_L$ corresponds to an individual split of a loop region into two loop regions. For each label $\Delta \in \{X,Y,Z\}$ of the loop we can simply directly check, for every directed edge in $G_L$, if the corresponding split results in valid loop regions according to Conditions~2 and~3. We include a directed edge in $G_V^\Delta$ if and only if the corresponding split is valid. There are three cases to consider based on the number of $\Delta$-loop segments within the loop region. If the loop region contains two $\Delta$-loop segments, the edge must separate the two existing loops. If the loop region contains one $\Delta$-loop segment, the edge must not intersect the existing $\Delta$-loop segment. If the loop region contains no $\Delta$-loop segments, there are no conflicts, and all edges are permitted. There is one consistent orientation of the edges if the loop region contains one or two $\Delta$-loop segments, such that the new edge does not conflict with existing edges. Otherwise, edges can be directed in either direction. These three general cases are illustrated in Figure~\ref{fig:localconstraints3}, where all valid outgoing edges for a single loop segment (vertex in graph $G_V^X$) are shown.

Note that we need to require that every loop region is visited at most once; otherwise, there will be a new loop region where the same loop (the added loop) occurs twice on the boundary. Since the underlying surface $\mathcal{S}$ is orientable, the side label of the two loop segments in the direction of that loop region must also be the same, which would violate Condition~3 of Definition~\ref{def:dual}. Thus, a loop with label $\Delta \in \{X,Y,Z\}$ satisfies Conditions~1-4 of Definition~\ref{def:dual} if and only if it corresponds to a directed cycle in $G_V^\Delta$ that visits every loop region at most once (simple).

\begin{figure}[t]
    \centering

    \includegraphics{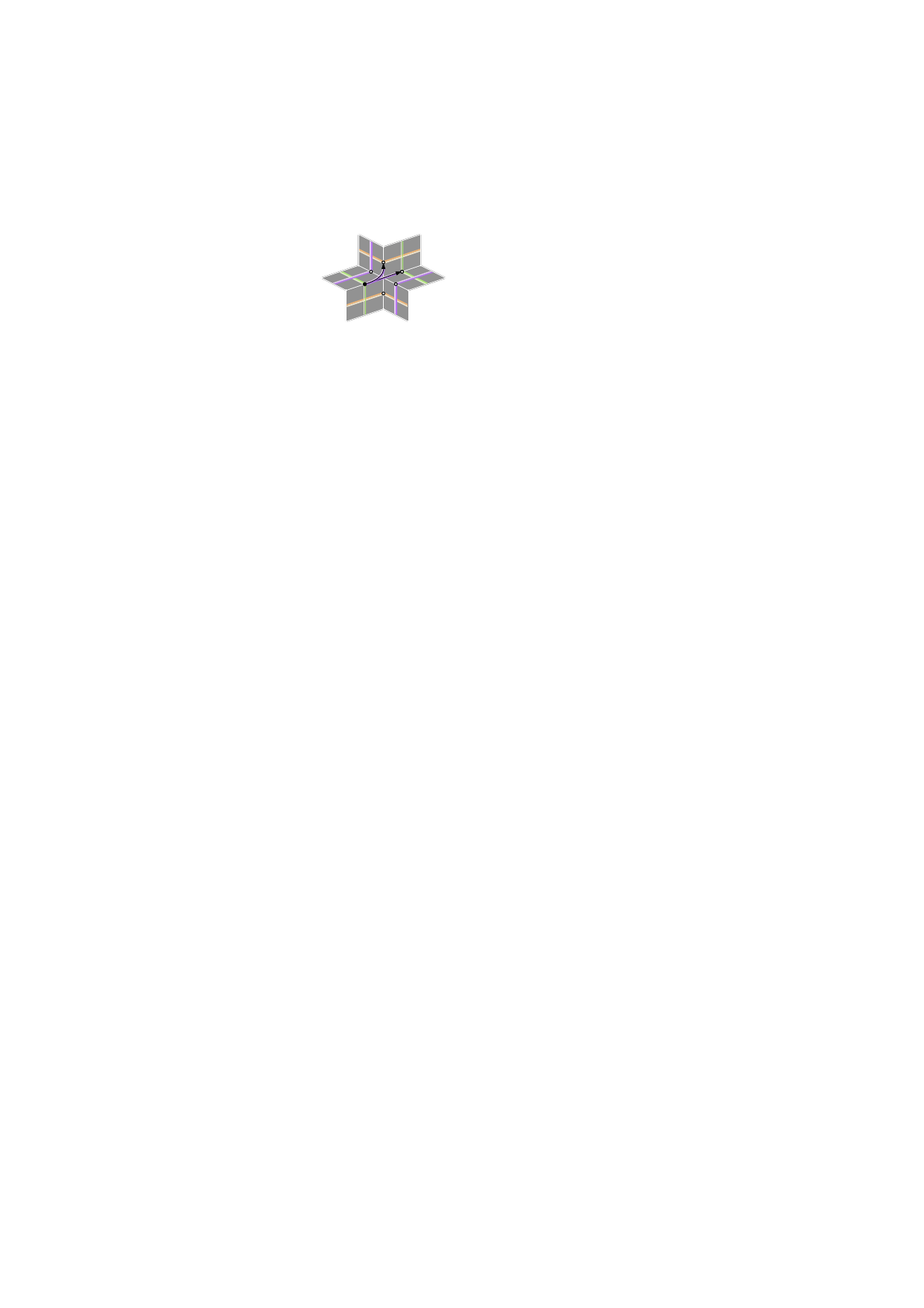}
    \hspace{0.0cm}
    \includegraphics{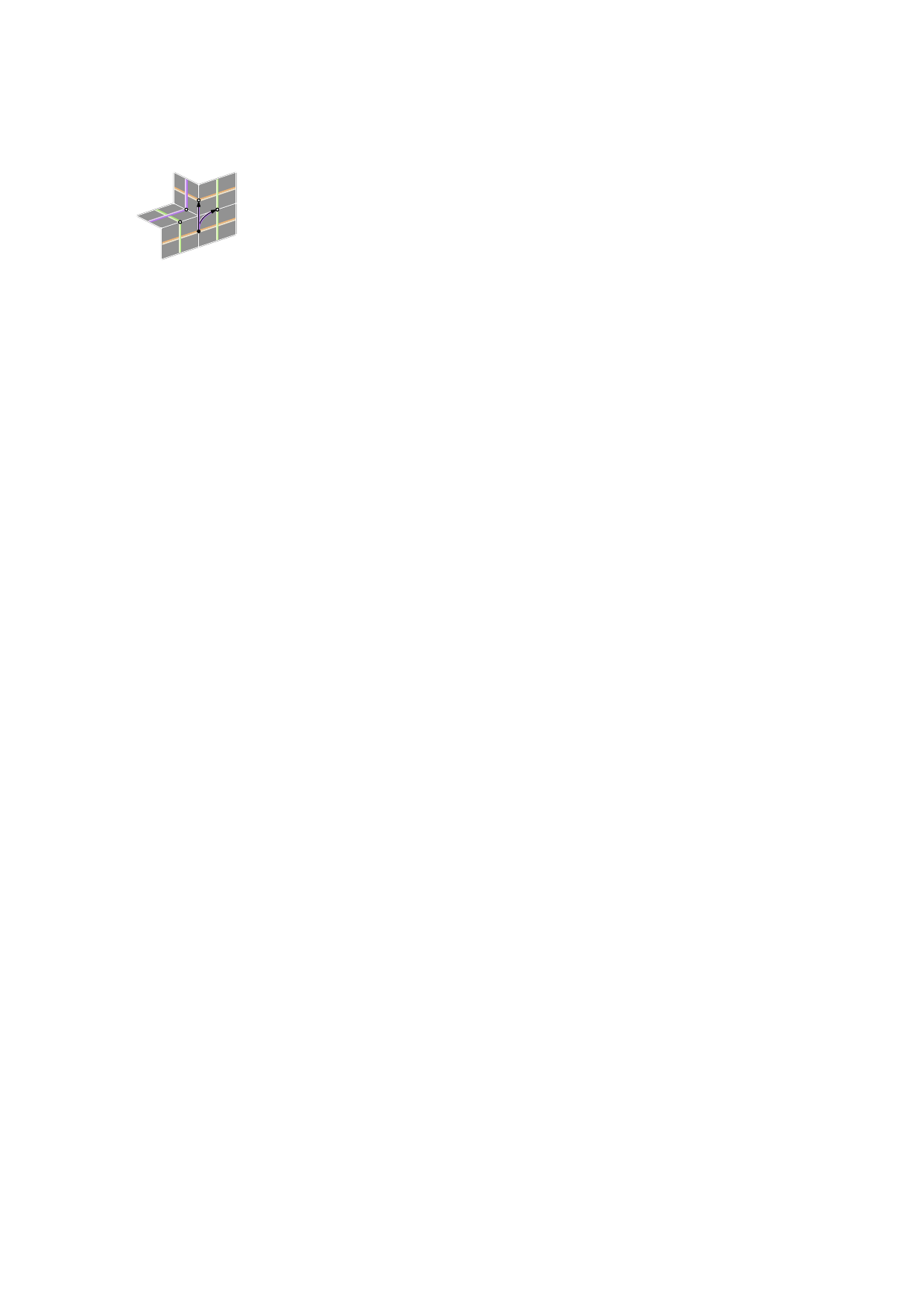}
    \hspace{0.3cm}
    \includegraphics{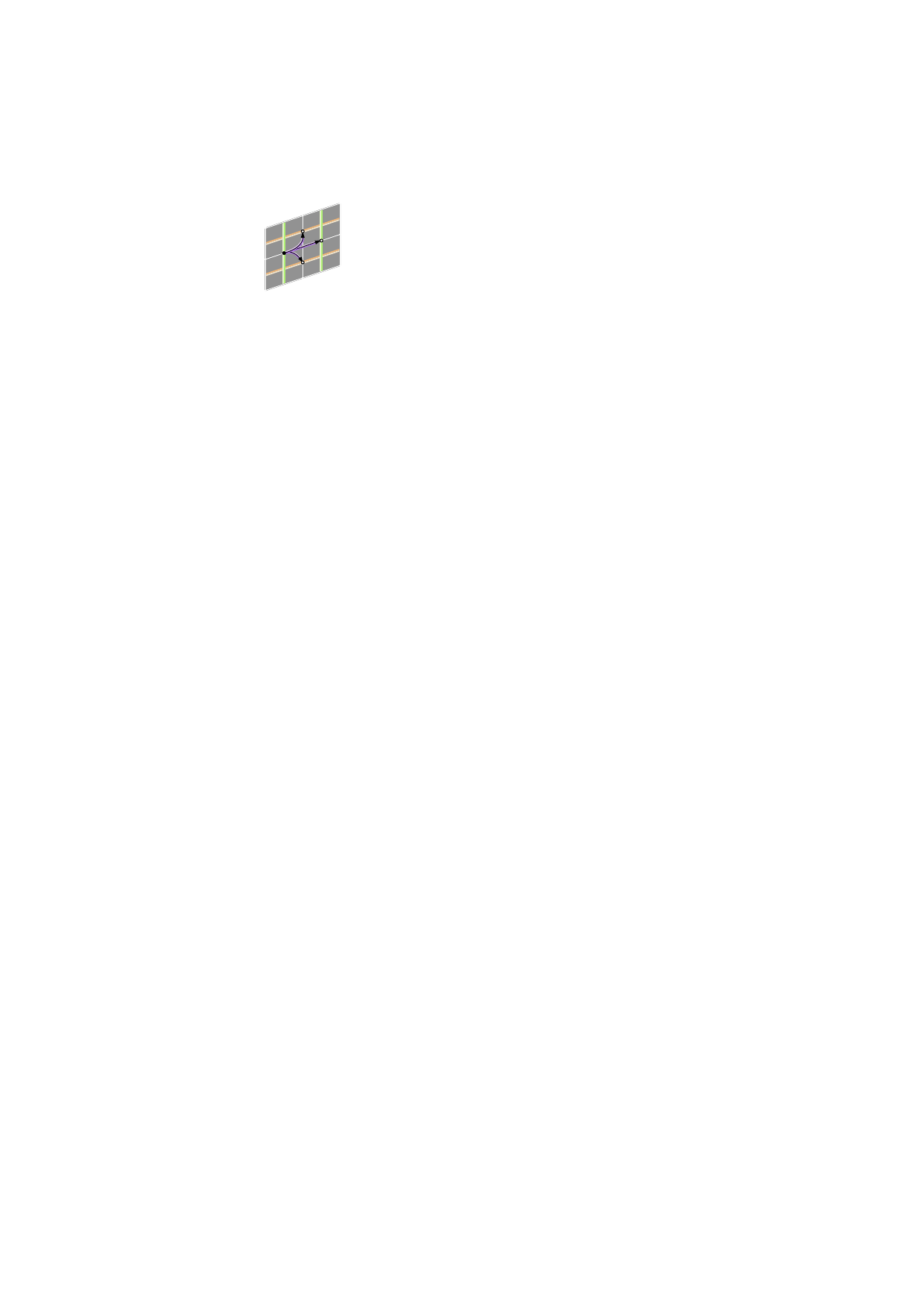}
    \hspace{0.1cm}
    
    \caption{The three cases for edges in $G_V^X$.}
    \label{fig:localconstraints3}
\end{figure}

\begin{figure*}[t]
    \centering

    \hfill
    \subcaptionbox{}{\includegraphics{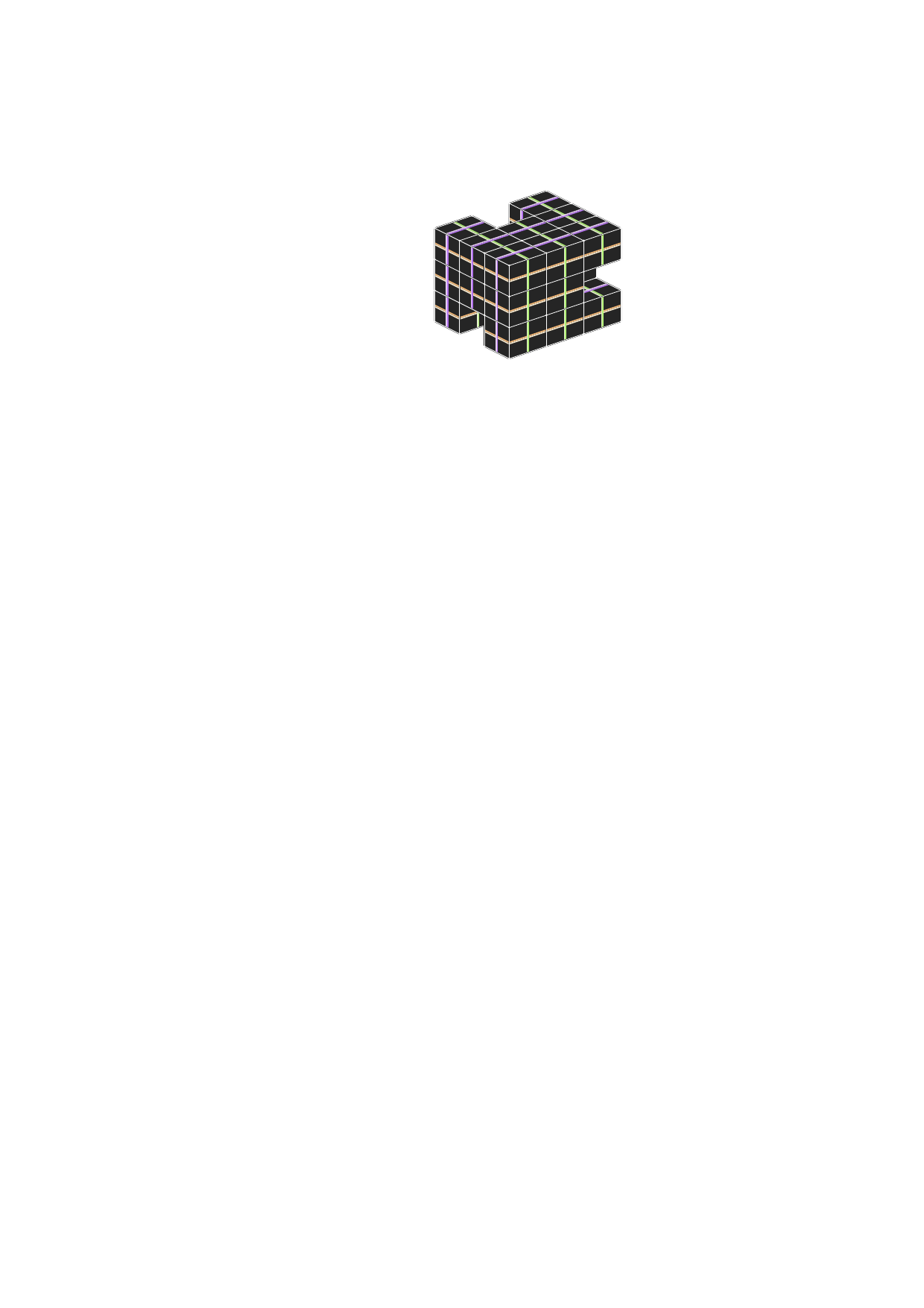}}
    \hfill
    \subcaptionbox{}{\includegraphics{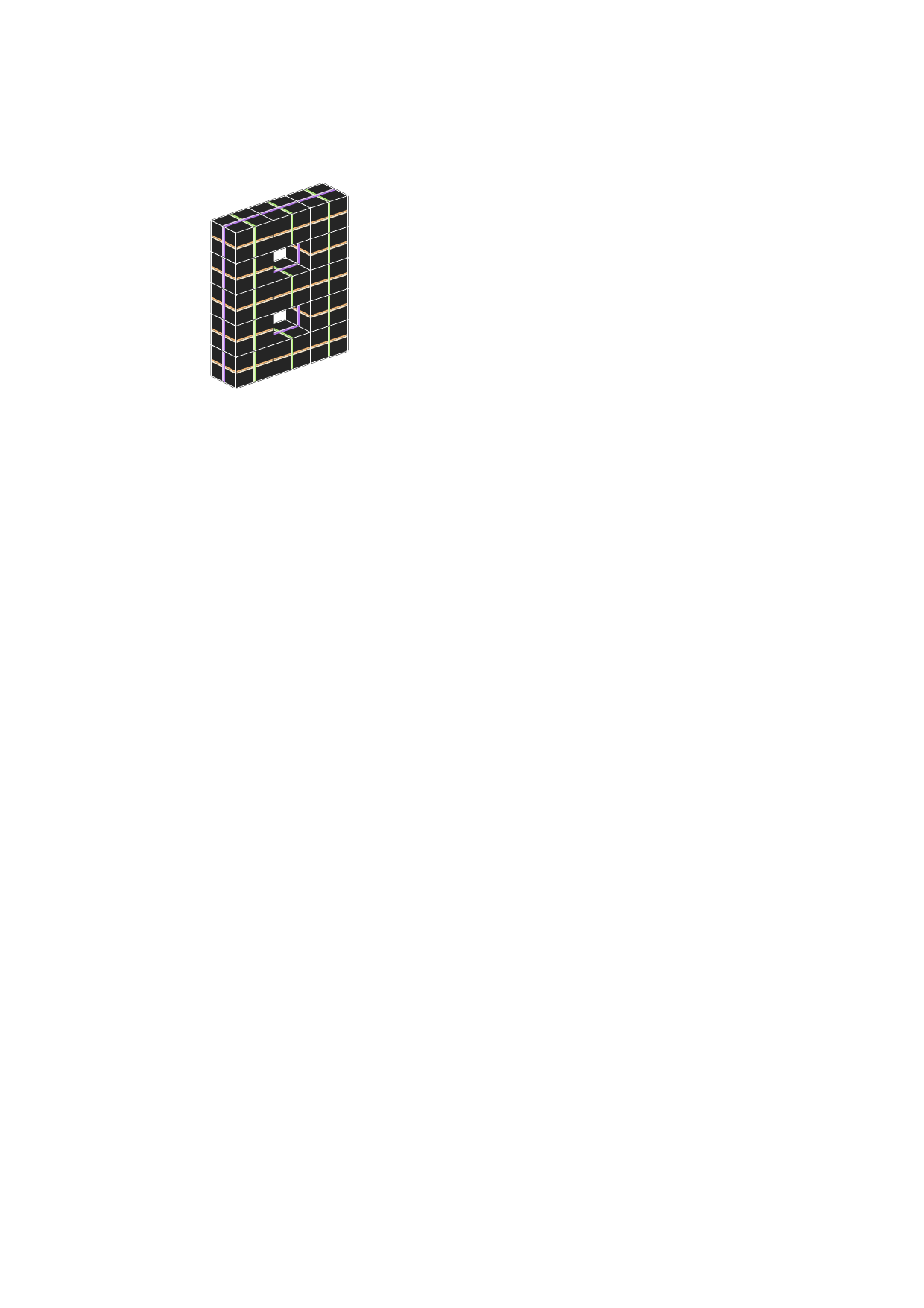}}
    \hfill
    \subcaptionbox{}{\includegraphics{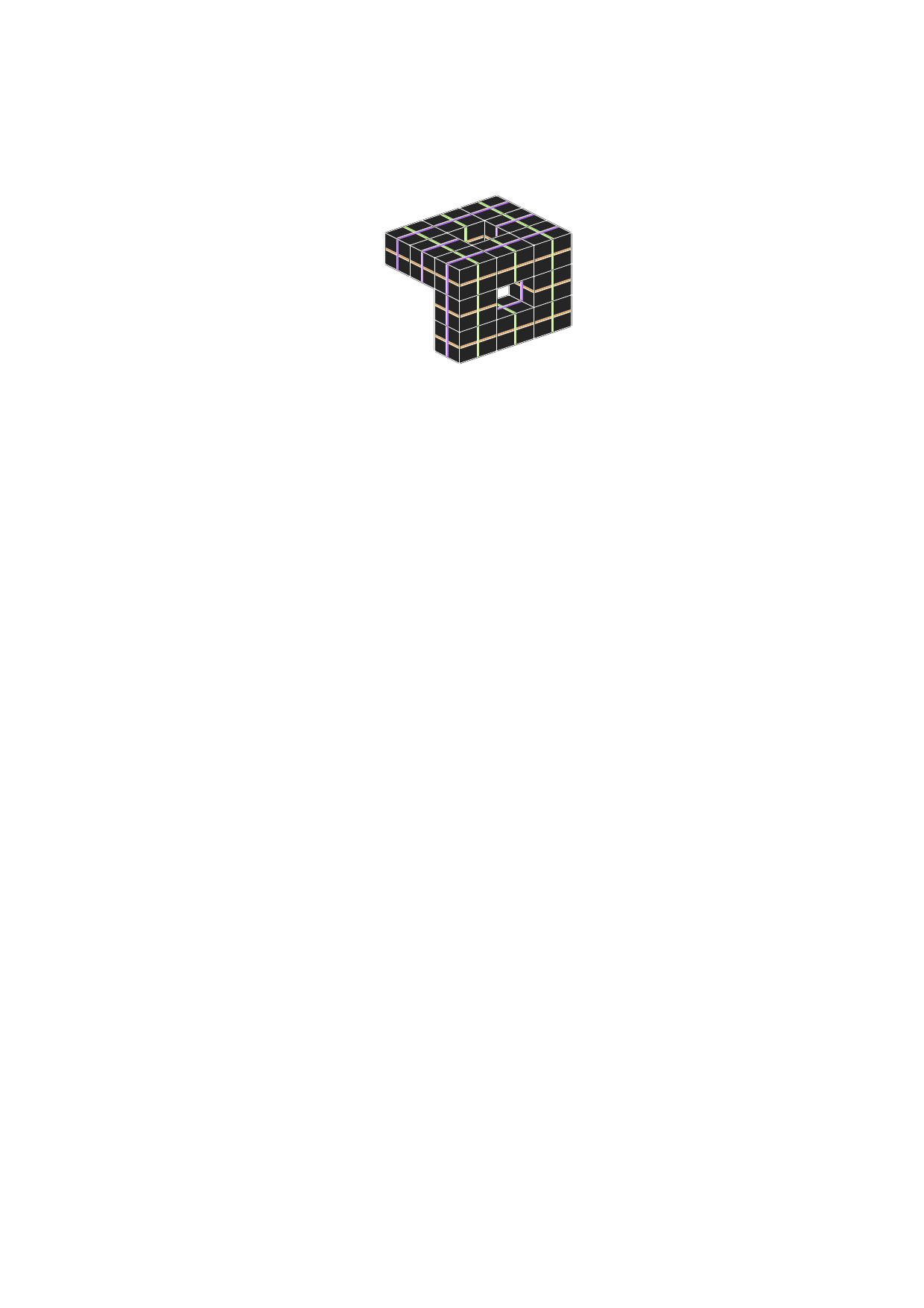}
    }
    \hfill\quad
    
    \caption{The non-incremental polycube that cannot be constructed from a single cube by loop additions~(a), and two higher-genus polycubes that cannot be transformed into one another by additions or removals~(b, c).}
    \label{fig:nonincremental}
\end{figure*}

\newpage

\smallskip\noindent{\bf Maintaining acyclicity.}
The only condition that remains to be checked is Condition~5 of Definition~\ref{def:dual}. We show that adding a loop can never introduce a cycle in the level graphs. Hence, a loop with label $\Delta \in \{X,Y,Z\}$ is valid if and only if it corresponds to a non-trivial simple directed cycle in $G_V^\Delta$.

Without loss of generality, consider the addition of an $X$-loop $\lambda$ to the polycube loop structure $\mathcal{L}$. Note that $\lambda$ cannot intersect another $X$-loop of $\mathcal{L}$. Thus, it will split an existing $X$-zone into two. By definition this zone is bounded by at least one loop. In the $X$-graph, the zone corresponds to a vertex $v$, and its boundary loops correspond to incoming or outgoing edges. The addition of $\lambda$ splits $v$ into two vertices $u$ and $w$ with a directed edge from $u$ to $w$. The incident edges of $v$ are distributed among $u$ and $w$. Observe that removing $\lambda$ from the resulting loop structure corresponds to contracting the vertices $u$ and $w$ into $v$. Thus, if there is a cycle in the $X$-graph that goes through $u$ or $w$ after adding $\lambda$, then there must have been a cycle through $v$ before adding $\lambda$, which contradicts the fact that $\mathcal{L}$ is a polycube loop structure. Finally note that the addition of an $X$-loop does not affect the $Y$- and $Z$-graphs. 

\subsection{Loop removal.}
When we consider removing a loop $\lambda$ from the polycube loop structure $\mathcal{L}$, we want to check efficiently if $\mathcal{L}$ without $\lambda$ is also a polycube loop structure. Without loss of generality, we assume that $\lambda$ is an $X$-loop. 

First of all, Condition~1 of Definition~\ref{def:quadstructure} cannot be violated by the removal of a loop. For the other local Conditions~2-4, observe that, when removing a loop $\lambda$ from $\mathcal{L}$, every loop segment of $\lambda$ causes two loop regions to be merged. We can easily find these loop regions by tracing $\lambda$ through $\mathcal{L}$ and then we can simply check if the resulting loop regions satisfy Conditions~2-4.

Finally, we need to check Condition~5 of Definition~\ref{def:dual} after removing $\lambda$ from $\mathcal{L}$. Note that $\lambda$ corresponds to an edge in the $X$-graph. Removing $\lambda$ does not only remove this edge, but also contracts the edge (since the zones corresponding to the endpoints are merged) into a new vertex $v$. This may introduce a cycle in the $X$-graph through $v$, but this can efficiently be checked using a simple depth first search from $v$. 

\begin{figure*}[b]
    \centering

    \includegraphics[width=\textwidth]{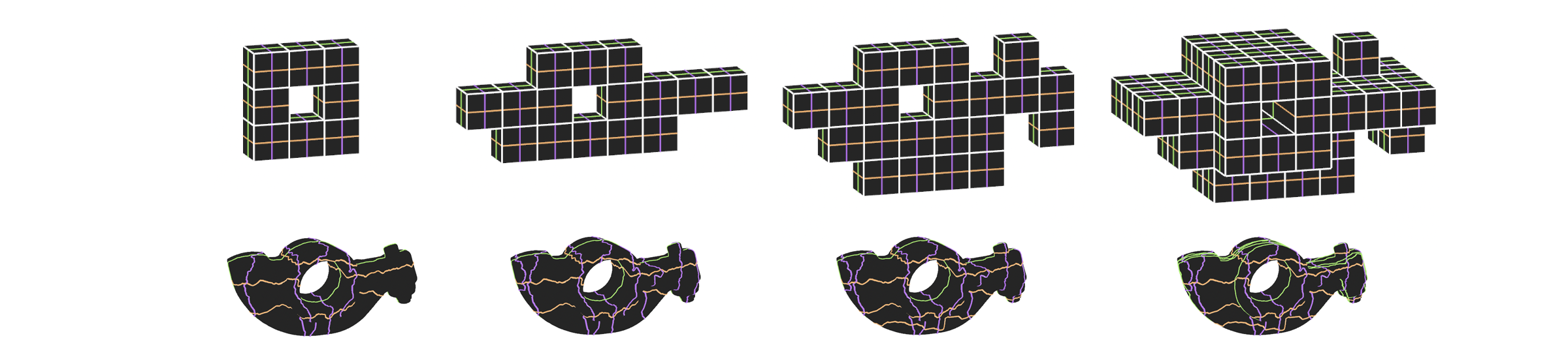}
    
    \caption{Iterative construction of a polycube map via its dual loop structure.}
    \label{fig:motion}
\end{figure*}

\subsection{Polycube reachability.}\label{sec:reachability}
Our iterative polycube modifications via adding or removing loops raise the following natural question: Can every polycube be constructed by adding and removing loops, starting from a polycube loop structure for the most simple polycube with the same genus?

We first consider polycubes of genus $0$, where the initial polycube loop structure corresponds to a single cube. In Figure~\ref{fig:nonincremental}(a), we show a polycube that cannot be obtained from the single cube by only adding loops. We refer to this polycube as the \emph{non-incremental polycube}. It is easily verified that none of the loops of the non-incremental polycube are valid loops for removal. It follows that it is impossible to construct the non-incremental polycube by iteratively adding valid loops. However, we can construct the non-incremental polycube by both adding and removing loops. In fact, we do not know if all polycubes of genus $0$ can be obtained by adding and removing loops, starting from the single cube. We leave this as an open problem.

We now move to polycubes of higher genus. Adding or removing loops cannot change the orientation of a hole in the initial polycube (see Figure~\ref{fig:nonincremental}(b, c)).

\section{Polycube segmentation construction}\label{sec:application}

Our characterization of polycubes gives rise to an iterative algorithm to construct a polycube segmentation of a given input surface $\mathcal{S}$ and, by extension, a polycube map and a hexahedral mesh. In the following, we first briefly sketch the steps that our proof-of-concept algorithm and its implementation take and then showcase our results. Note that neither algorithm nor implementation have been optimized for speed or quality; they simply serve to establish the feasibility of our approach.

We initialize our iterative algorithm with a single cube for genus-0 surfaces or a suitable polycube of genus $g$ for surfaces of genus $g$. In the future, we plan to develop an algorithm that automatically constructs suitable initial polycubes and corresponding loop structures for an input surface $\mathcal{S}$; for our proof-of-concept, we used manually-crafted polycubes and loop structures. To embed the initial loop structure on the input surface $\mathcal{S}$, we follow Campen and Kobbelt~\cite{campen2014dual} and use shortest paths in a suitably weighted graph on $\mathcal{S}$. Here, we can restrict ourselves to those parts of the graph that intersect specific loop segments (see Section~\ref{sec:loopaddition}). We attempt to embed loops to align with their label and orientation.

We now iteratively add or remove loops. Figure~\ref{fig:motion} illustrates the steps of our algorithm through a series of snapshots. It begins with an initial polycube dual loop structure for a genus-1 model, followed by successive steps adding X-loops, Y-loops, and Z-loops. To evaluate the quality of a loop structure we need to primalize it into a polycube segmentation. We first place a polycube corner into each loop region while attempting to align corners within the same zone, to create segmentation patches that are mostly rectangular. 
Then, we find non-intersecting paths on $\mathcal{S}$ to connect adjacent polycube corners, guided by the dual loops embedded on~$\mathcal{S}$. 

There are often many different loops that can be added to or removed from a valid loop structure. We use a simple evolutionary algorithm to choose loops that improve the  polycube segmentation. This algorithm uses mutation and selection only (no crossover); every candidate solution is a valid loop structure. To asses the quality of a solution, we use a metric that approximates the distortion of the corresponding polycube map: the alignment of each surface patch with its assigned axis. Figure~\ref{fig:showcase} shows some of our results for various models with varying complexity and genus.

\section{Future Work}\label{sec:conclusion}

Our proof-of-concept algorithm demonstrates the potential of our method.
While the current implementation is minimal and lacks extensive experimental evaluation, it nevertheless provides a glimpse of the future possibilities. 
Various avenues for future work remain.

On the theoretical side, we plan to establish which operations are necessary to construct any loop structure. We are currently not aware of a polycube of genus 0 that cannot be constructed via loop addition and removal. For polycubes of genus $g>1$, we also do not seem to require other operations while moving within the set of polycubes of genus $g$. However, loop addition and removal cannot re-orient holes or change the genus. 
Our current method constructs polycube maps which only map the surface of the polycube to the input surface. It would be interesting to see if and how our method extends to volumetric maps.

\begin{figure}[ht!]
    \centering

    \includegraphics[width=0.3\linewidth]{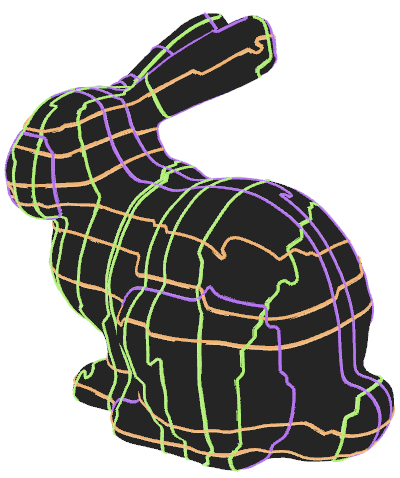}
    \includegraphics[width=0.3\linewidth]{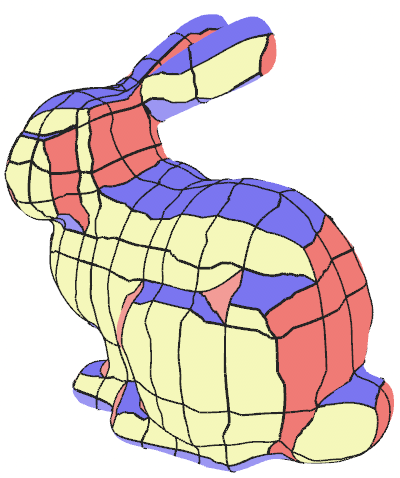}
    \includegraphics[width=0.3\linewidth]{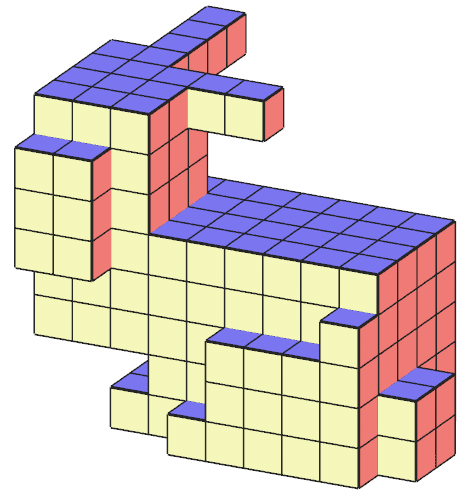}

    \vspace{0.1in}
    
    \includegraphics[width=0.3\linewidth]{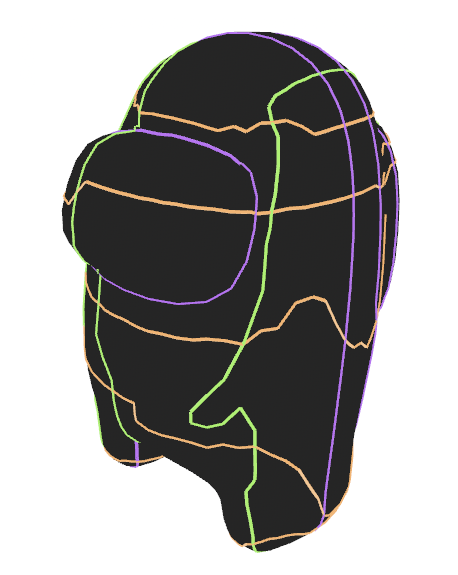}
    \includegraphics[width=0.3\linewidth]{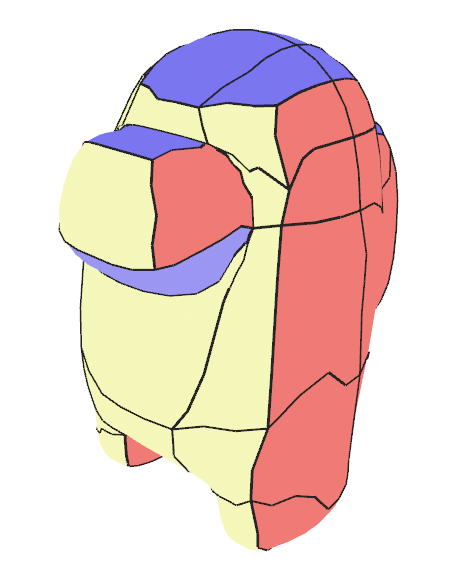}
    \includegraphics[width=0.3\linewidth]{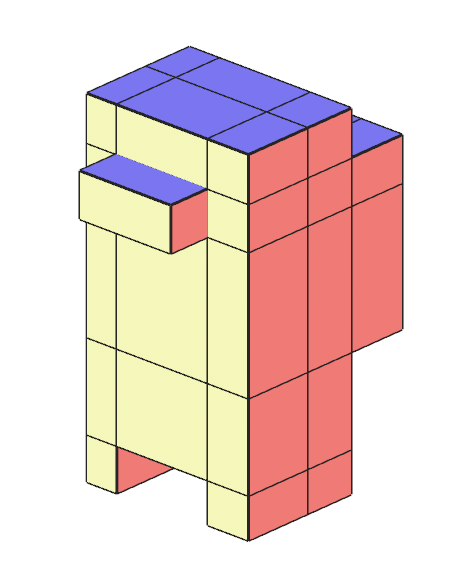}

    \vspace{0.1in}

    \includegraphics[width=0.3\linewidth]{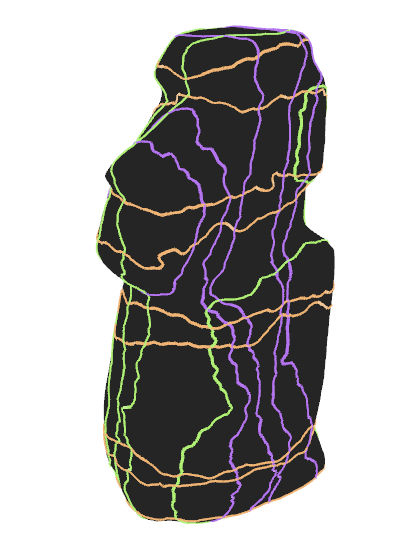}
    \includegraphics[width=0.3\linewidth]{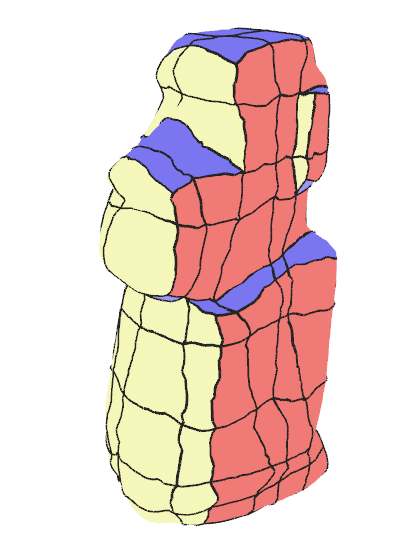}
    \includegraphics[width=0.3\linewidth]{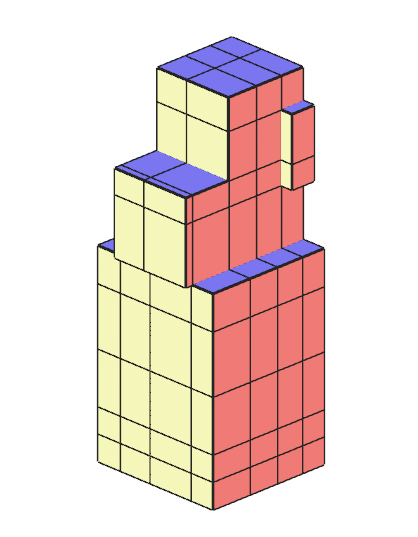}

    \vspace{0.05in}
    
    \includegraphics[width=0.3\linewidth]{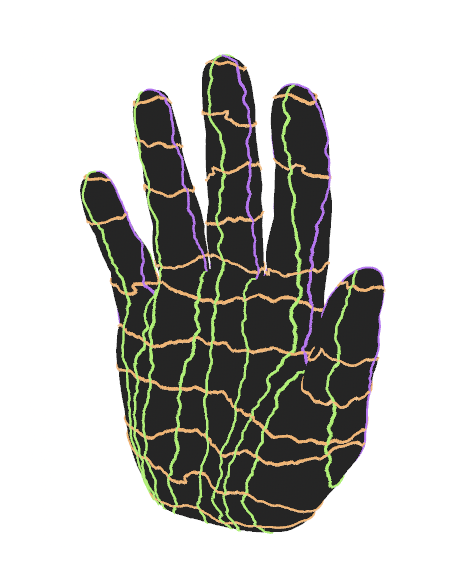}
    \includegraphics[width=0.3\linewidth]{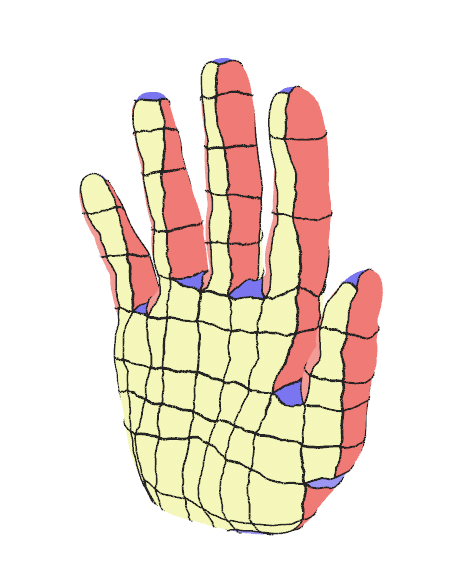}
    \includegraphics[width=0.3\linewidth]{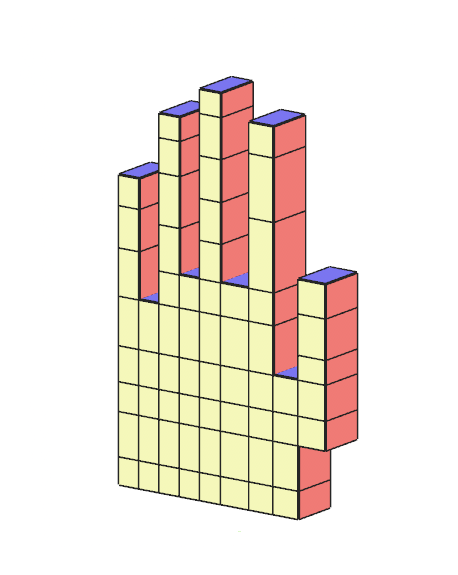}

    \vspace{0.05in}

    \includegraphics[width=0.3\linewidth]{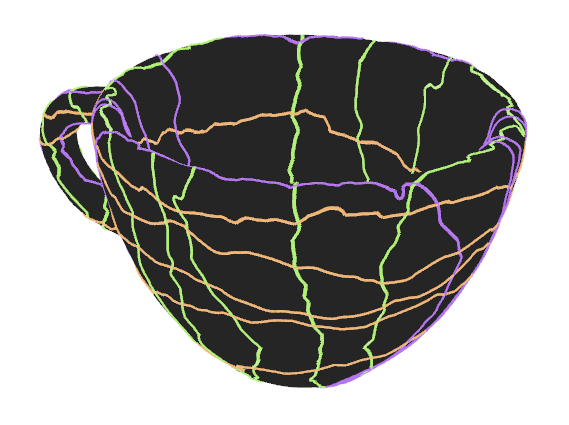}
    \includegraphics[width=0.3\linewidth]{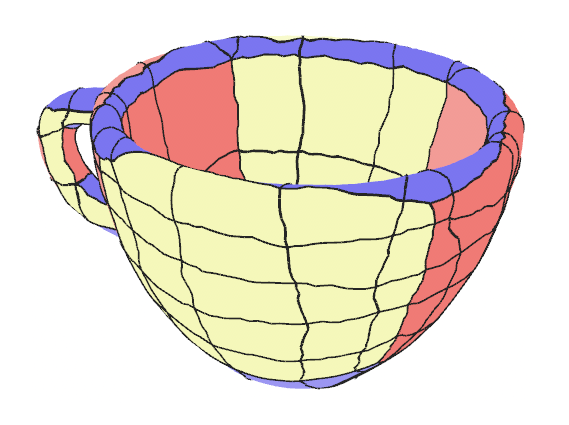}
    \includegraphics[width=0.3\linewidth]{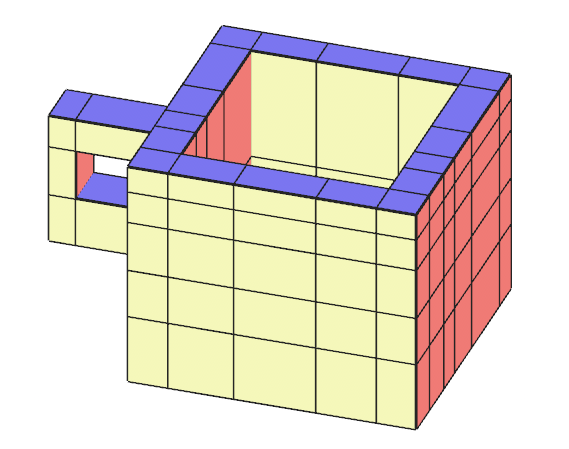}

    \vspace{0.02in}

    \includegraphics[width=0.3\linewidth]{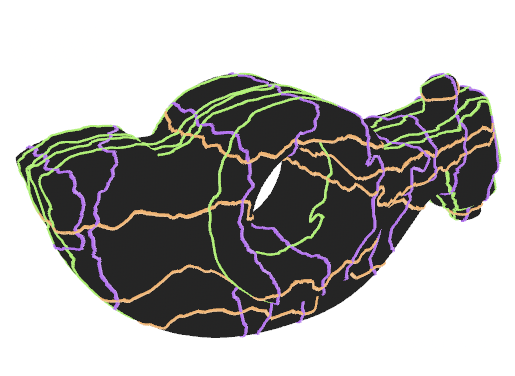}
    \includegraphics[width=0.3\linewidth]{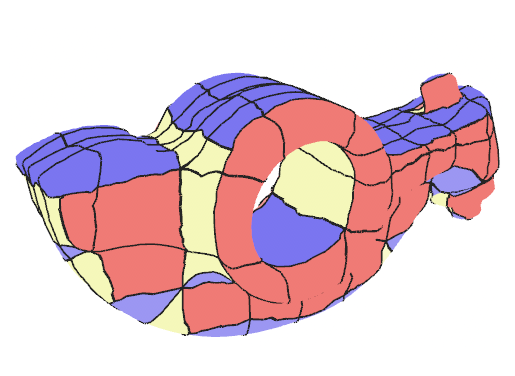}
    \includegraphics[width=0.3\linewidth]{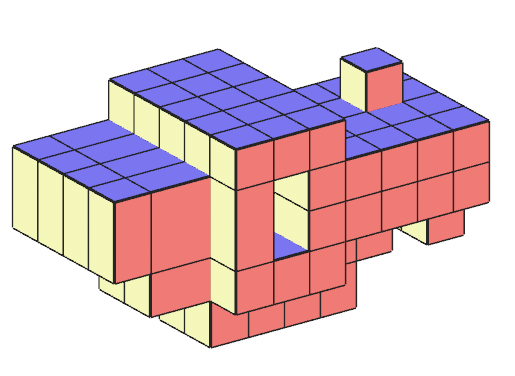}

    \vspace{0.1in}
    
    \includegraphics[width=0.3\linewidth]{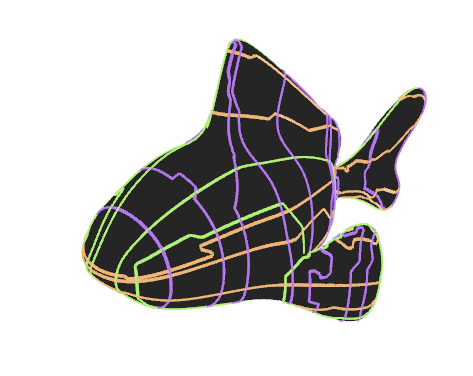}
    \includegraphics[width=0.3\linewidth]{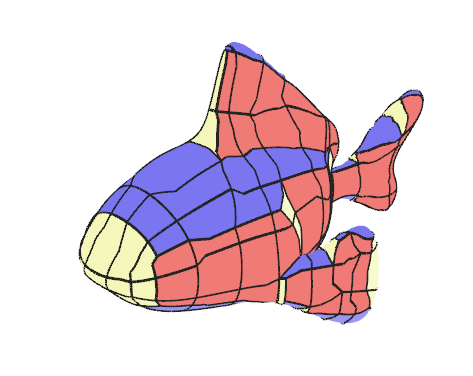}
    \includegraphics[width=0.3\linewidth]{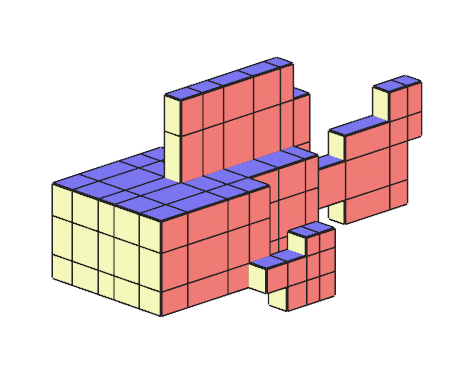}
    
    \caption{Embedded polycube loop structure, the polycube segmentation, and the corresponding polycube constructed by our proof-of-concept implementation.
    }
    \label{fig:showcase}
\end{figure}

On the practical side, we would like to develop an algorithm that constructs an initial polycube that captures global features of input surfaces of any genus. Furthermore, an optimal embedding of the loop structure on the surface should ideally align with salient features, such as ridges, valleys or handles. Finally, we would like to develop a method that finds an optimal primalization of our embedded loop structures, where optimality will necessarily depend on the use case. 

\begin{figure*}
    \centering
    \includegraphics[width=0.87\linewidth]{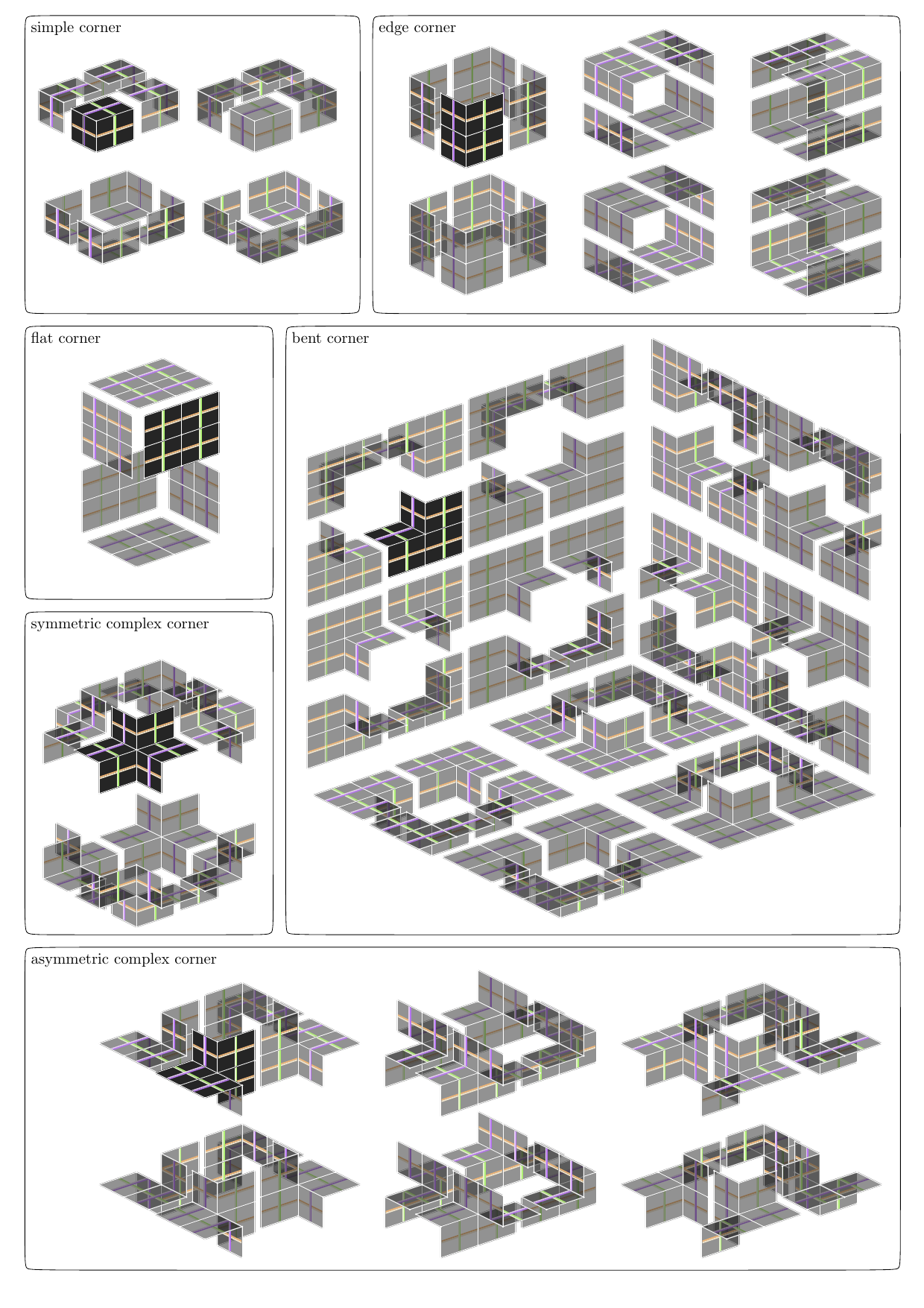}
    \caption{All unique polycube corners.}
    \label{fig:all_vertices}
\end{figure*}

\newpage

\balance

\bibliographystyle{siam}
\bibliography{bib}

\end{document}